\let\realverbatim=\verbatim
\let\realendverbatim=\endverbatim
\renewcommand\verbatim{\par\addvspace{6pt plus 2pt minus 1pt}\realverbatim}
\renewcommand\endverbatim{\realendverbatim\addvspace{6pt plus 2pt minus 1pt}}
         \let\leq=\leqslant
         \let\geq=\geqslant
\newsavebox{\astrutbox}
\sbox{\astrutbox}{\rule[-5pt]{0pt}{20pt}}
\newdefinition{definition}[theorem]{Definition}
\newtheorem{lemma}{Lemma}
\newcommand{\mb}[1]{\mbox{\boldmath$#1$}}
\definecolor{dkgreen}{rgb}{0,0.6,0}
\definecolor{gray}{rgb}{0.5,0.5,0.5}
\definecolor{mauve}{rgb}{0.58,0,0.82}
\title[European Journal of Applied Mathematics]{Detection of Core--Periphery Structure in Networks Using Spectral Methods and Geodesic Paths}
\author[Mihai~Cucuringu et al.]{%
Mihai~Cucuringu$\,^1$, \ns
Puck~Rombach$\,^2$  \ns
Sang~Hoon~Lee$\,^3$ \ns
\and
Mason~A.~Porter$\,^4$
}
\affiliation{%
  $^1\,$Department of Mathematics, UCLA, Los Angeles, CA (mihai@math.ucla.edu). This work was initiated while the author was affiliated with the Program in Applied and Computational Mathematics (PACM) at Princeton University, Princeton, NJ.\\
    email\textup{\nocorr: \texttt{mihai@math.ucla.edu}}\\
  $^2\,$Oxford Centre for Industrial and Applied Mathematics, Mathematical Institute, University of Oxford, Oxford, UK; and Department of Mathematics, UCLA, Los Angeles, CA.\\
  $^3\,$School of Physics, Korea Institute for Advanced Study, Seoul,
Korea; and Integrated Energy Center for Fostering Global Creative Researcher (BK 21 plus) and Department of Energy Science, Sungkyunkwan University, Suwon, Korea; and Oxford Centre for Industrial and Applied Mathematics, Mathematical Institute, University of Oxford, Oxford, UK
$^4\,$Oxford Centre for Industrial and Applied Mathematics, Mathematical Institute, University of Oxford, Oxford, UK; and CABDyN Complexity Centre, University of Oxford, Oxford, UK.
}
\date{\today}
\begin{document}
\label{firstpage}
\maketitle

\begin{abstract}

We introduce several novel and computationally efficient methods for detecting ``core--periphery structure'' in networks. Core--periphery structure is a type of mesoscale structure that includes densely-connected core vertices and sparsely-connected peripheral vertices. Core vertices tend to be well-connected both among themselves and to peripheral vertices, which tend not to be well-connected to other vertices. Our first method, which is based on transportation in networks, aggregates information from many geodesic paths in a network and yields a score for each vertex that reflects the likelihood that a vertex is a core vertex. Our second method is based on a low-rank approximation of a network's adjacency matrix, which can often be expressed as a tensor-product matrix. Our third approach uses the bottom eigenvector of the random-walk Laplacian to infer a coreness score and a classification into core and peripheral vertices. We also design an objective function to (1) help classify vertices into core or peripheral vertices and (2) provide a goodness-of-fit criterion for classifications into core versus peripheral vertices. To examine the performance of our methods, we apply our algorithms to both synthetically-generated networks and a variety of networks constructed from real-world data sets. 

\end{abstract}

\begin{keywords}
05C82, 68R10, 91D30, 05C85, 15A18
\end{keywords}

\tableofcontents



\section{Introduction} \label{sec:intro}

Network science has grown explosively during the past two decades \cite{booknewman}, and myriad new journal articles on network science appear every year. One focal area in the networks literature is the development and analysis of algorithms for detecting local, mesoscale, and global structures in various types of networks \cite{fortunato,masonams}.  Mesoscale features are particularly interesting, as they arise neither at the local scale of vertices (i.e., nodes) and edges nor at the global scale of summary statistics. In the present paper, we contribute to research on mesoscale network structures by developing and analyzing new (and computationally-efficient) algorithms for detecting a feature known as \textit{core--periphery structure}, which consists of densely-connected \textit{core} vertices and sparsely-connected \textit{peripheral} vertices.


The importance of investigating mesoscale network structures is acknowledged widely \cite{fortunato,masonams}, but almost all of the research on this topic concerns a specific type of feature known as \textit{community structure}. In studying community structure, one typically employs some algorithm to detect sets of vertices called \textit{communities} that consist of vertices that are densely connected to each other, such that the connection density between vertices from different communities is comparatively sparse \cite{masonams,fortunato,mason6,Girv04}. A diverse array of methods exist to detect community structure, and they have been applied to numerous areas, such as committee networks in political science \cite{mason11}, friendship networks \cite{mason13,mason14}, protein interaction networks \cite{Chen06,mason15}, functional brain networks \cite{masonbrain}, and mobile phone networks \cite{mason16}. Popular methods include the optmization of a quality function called ``modularity''~\cite{Newm2003,Girv04,New06}, spectral partitioning~\cite{Spiel1996,Guat1995}, dynamical approaches based on random walkers or other dynamical systems~\cite{Pons05,arenas2006,Rosvall2008fi,Picc11,jeub2014}, and more.  Most community-detection methods require a vertex to belong to a distinct community, but several methods also allow the detection of overlapping communities (see, e.g., \cite{Pall05,mason9,mason10,jeub2014}). 
 
Core--periphery structure is a mesoscale feature that is rather different from community structure. The main difference is that core vertices are well-connected to peripheral vertices, whereas the standard perspective on community structure views communities as nearly decomposable modules (which leads to trying to find the best block-diagonal fit to a network's adjacency matrix) \cite{puckmason,XZhang2014}. Core--periphery structure and community structure are thus represented by different types of block models \cite{jeub2014, peixoto2014}. The quantitative investigation of core--periphery structure has a reasonably long history \cite{cp-review}, and qualitative notions of core--periphery structure have long been considered in fields such as international relations \cite{wallerstein1974,steiber1979,chase1989,smithwhite}, sociology \cite{laumann1976,doreian1985}, and economics \cite{krugman1996} (and have been examined more recently in applications such as neuroscience \cite{masonbrain2}, transportation \cite{corePerApp}, and faculty movements in academia \cite{clauset2015}), but the study of core--periphery structure remains poorly developed --- especially in comparison to the study of community structure \cite{masonams,fortunato}. Most investigations of core--periphery structure tend to use the perspective that a network's adjacency matrix has an intrinsic block structure (which is different from the block structure from community structure)~\cite{BorgattiCore,Comr62,puckmason}. Very recently, for example, Ref.~\cite{XZhang2014} identified core--periphery structure by fitting a stochastic block model (SBM) to empirical network data using a maximum likelihood method, and the SBM approach in Ref.~\cite{peixoto2014} can also be used to study core--periphery structure. Importantly, it is possible to think of core--periphery structure using a wealth of different perspectives, such as overlapping communities~\cite{Jure13}, $k$-cores~\cite{Holme05}, network capacity~\cite{Silva08}, and random walks~\cite{Dell13}. It is also interesting to examine growth mechanisms to generate networks with core--periphery structure \cite{verma2016}. The notion of ``nestedness'' \cite{eco-nested} from ecology is also related to core--periphery structure \cite{shl-nested}. The main contribution of the present paper is the development of novel algorithms for detecting core--periphery structure. Our aim is to develop algorithms that are both computationally efficient and robust to high levels of noise in data, as such situations can lead to a blurry separation between core vertices and peripheral vertices.

The rest of this paper is organized as follows. In Section \ref{sec:corePerIntro}, we give an introduction to the notion of core--periphery structure and briefly survey a few of the existing methods to detect such structure. In Section \ref{sec:PathCore}, we introduce the {\sc Path-Core} method, which is based on computing shortest paths between vertices of a network, for detecting core--periphery structure. In Section \ref{sec:objFuncSync}, we introduce an objective function for detecting core--periphery structure that leverages our proposed algorithms and helps in the classification of vertices into a core set and periphery set. In Section \ref{sec:rank2}, we propose the spectral method {\sc LowRank-Core}, which detects core--periphery structure by considering the adjacency matrix of a network as a low-rank perturbation matrix. In Section \ref{sec:Laplacian}, we investigate two Laplacian-based methods ({\sc Lap-Core} and {\sc LapSgn-Core}) for computing core--periphery structure in a network, and we discuss related work in community detection that uses a similar approach. In Section \ref{sec:numSims}, we compare the results of applying the above algorithms using several synthetically-generated networks and real-world networks. Finally, we summarize and discuss our results in Section \ref{sec:future}, and we also discuss several open problems and potential applications. In Appendix 1, we detail the steps of our proposed {\sc Path-Core} algorithm for computing the Path-Core scores, and we include an analysis of its computational complexity. In Appendix 2, we discuss the spectrum of the random-walk Laplacian of a graph (and of the random-walk Laplacian of its complement). In Appendix 3, we detail an experiment with artificially planted high-degree peripheral vertices that illustrates the sensitivity of a degree-based method (which we call {\sc Degree-Core} and which uses vertex degree as a proxy to measure coreness) to such outlier vertices. Finally, in Appendix 4, we calculate Spearman and Pearson correlation coefficients between the coreness scores that we obtain from the different methods applied to several real-world networks.


\section{{\sc Core-Score}: Density-Based Core--Periphery Structure in Networks}\label{sec:corePerIntro}

The best-known quantitative approach to studying core--periphery structure was introduced by Borgatti and Everett \cite{BorgattiCore}, who developed algorithms for detecting discrete and continuous versions of core--periphery structure in weighted, undirected networks. (For the rest of the present paper, note that we will use the terms ``network'' and ``graph'' interchangeably.) Their discrete methods start by comparing a network to an ideal block matrix in which the core is fully connected, the periphery has no internal edges, and the periphery is well-connected to the core. 

Borgatti and Everett's main algorithm for finding a discrete core--periphery structure assigns each vertex either to a single ``core'' set of vertices or to a single ``periphery'' set of vertices. One seeks a vector ${\bf C}$ of length $n$ whose entries are either $1$ or $0$, depending on whether or not the associated vertex has been assigned to the core ($1$) or periphery ($0$). We let $H_{ij}=1$ if $C_i=1$ (i.e., vertex $i$ is assigned to the core) or $C_j=1$ (i.e., vertex $j$ is assigned to the core), and we otherwise let $H_{ij}=0$ (because neither $i$ nor $j$ are assigned to the core). We define $\rho_C = \sum_{i,j} A_{ij}H_{ij}$, where $A$ (with elements $A_{ij}$) is the adjacency matrix of the (possibly weighted) network $G$. Borgatti and Everett's algorithm searches for a value of $\rho_C$ that is high compared to the expected value of $\rho$ if ${\bf C}$ is shuffled such that the number of $0$ and $1$ entries is preserved but their order is randomized. The final output of the method is the vector ${\bf C}$ that gives the highest $z$-score for $\rho_C$.  In a variant algorithm for detecting discrete core--periphery structure, Borgatti and Everett still let $H_{ij}=1$ if both $C_i$ and $C_j$ are equal to $1$ and let $H_{ij} = 0$ if neither $i$ nor $j$ are assigned to the core, but they now let $H_{ij}=a \in [0,1]$ if either $C_i=1$ or $C_j=1$ (but not both). To detect a continuous core--periphery structure \cite{BorgattiCore}, Borgatti and Everett assigned a vertex $i$ a core value of $C_{i}$ and let $H_{ij}=C_i C_j$. A recent method that builds on the continuous notion of core--periphery structure from \cite{BorgattiCore} was proposed in \cite{puckmason}. It calculates a {\sc Core-Score} for weighted, undirected networks; and it has been applied (and compared to community structure) in the investigation of functional brain networks~\cite{masonbrain2}.

The method of core--periphery detection in the popular network-analysis software {\sc UCINet} \cite{ucinet} uses the so-called \emph{minimum residual} ({\sc MINRES}) method \cite{Comr62}, which is a technique for factor analysis. One uses factor analysis to describe observed correlations between variables in terms of a smaller number of unobserved variables called the ``factors'' \cite{darl73}. {\sc MINRES} aims to find a vector $C$ that minimizes
\begin{equation*}
	S(A,{\bf C})=\sum_{i=1}^n\sum_{\substack{j \neq i}} \left( A_{ij}-C_iC_j \right) ^2\,,
\end{equation*}	
where $C_i \geq 0$ for all vertices $i$. One ignores the diagonal elements of the network's adjacency matrix. Additionally, because ${\bf C}{\bf C}^T$ is symmetric, this method works best for undirected networks $G$. For directed networks, one can complement the results of {\sc MINRES} with a method based on a singular value decomposition (SVD)~\cite{boyd10}. In practice, {\sc UCINet} reports ${\bf C}/\sqrt{\sum_i C_i^2}$. 

In \cite{Jure13}, it was argued that core--periphery structure can arise as a consequence of community structure with overlapping communities. They presented a so-called ``community-affiliation graph model'' to capture dense overlaps between communities. In the approach in \cite{Jure13}, the likelihood that two vertices are adjacent to each other is proportional to the number of communities in which they have shared membership. Della Rossa et al. recently proposed a method for detecting a continuous core--periphery profile of a (weighted) network by studying the behavior of a random walker on a network \cite{Dell13}. Approaches based on random walks and other Markov processes have often been employed in the investigation of community structure~\cite{Pons05,Rosvall2008fi,Picc11,jeub2014}, and it seems reasonable to examine them for other mesocale structures as well.  Very recently, Ref.~\cite{XZhang2014} identified core--periphery structure by fitting a stochastic block model (SBM) to empirical network data using a maximum-likelihood method. The review article \cite{cp-review} discusses several other methods to detect core--periphery structure in networks.


\section{{\sc Path-Core}: Transport-Based Core--Periphery Detection via Shortest Paths in a Network} \label{sec:PathCore}

In transportation systems, some locations and routes are much more important than others. This motivates the idea of developing notions of core--periphery structure that are based on transportation. In this section, we restrict our attention to undirected and unweighted networks, although we have also examined transport-based core--periphery structure in empirical weighted and directed networks \cite{corePerApp}. In Section \ref{sec3.1}, we explain the intuition behind the proposed {\sc Path-Core} algorithm, and we examine its performance on several synthetic networks. We end this section by commenting on a randomized version of the {\sc Path-Core} algorithm that samples a subset of edges in a graph and computes shortest paths only between the endpoints of the associated vertices.


\subsection {{\sc Path-Core}} \label{sec3.1}

The first transport-based algorithm that we propose for detecting core--periphery structure is reminiscent of \emph{betweenness centrality} (BC) in networks \cite{Anth71,Freeman1977,New05rw}. One seeks to measure the extent to which a vertex controls information that flows through a network by counting the number of shortest paths (i.e., ``geodesic'' paths) on which the vertex lies between pairs of other vertices in the network. Geodesic vertex betweenness centrality is defined as
\begin{equation}
	B_C(i) = \sum_{j,k \in V(G) \backslash i} \frac{\sigma_{jk}(i)}{\sigma_{jk}}\,,
\label{def:BCcentrality}
\end{equation}
where $\sigma_{jk}$ is the number of different shortest paths (i.e., the ``path count'') from vertex $j$ to vertex $k$, and $\sigma_{jk}(i)$ is the number of such paths that include vertex $i$. Our approach also develops a scoring methodology for vertices that is based on computing shortest paths in a network. Such a score reflects the likelihood that a given vertex is part of a network's core. Instead of considering shortest paths between all pairs of vertices in a network, we consider shortest paths between pairs of vertices that share an edge \emph{when that edge is excluded from the network}. Specifically, we calculate
\begin{equation}
	\textrm{\sc Path-Core}(i)= \sum_{(j,k) \in E( V(G) \backslash i ) }\frac{\sigma_{jk}(i) \vert_{G \setminus (j,k)}}{\sigma_{jk}\vert_{G \setminus (j,k)}}\,,
\label{def:PathCoreCentrality}
\end{equation}
where $\sigma_{jk}(i) \vert_{G \setminus (j,k)}$ and $\sigma_{jk}\vert_{G \setminus (j,k)}$ are defined, respectively, as the path counts $\sigma_{jk}$ and $\sigma_{jk}(i)$ in the graph $G \setminus (j,k)$, and $E(X)$ denotes the edge set induced by the vertex set $X$. The network $G \setminus (j,k)$ denotes the subgraph of $G$ that one obtains by removing the edge $(j,k) \in E$. Alternatively, one can define the {\sc Path-Core} score of a vertex $i$ as the betweenness centrality of this vertex when considering paths only between pairs of adjacent vertices $j$ and $k$, but for which the edge $e_{jk}$ incident to the two vertices is discarded. Note that one can apply {\sc Path-Core} to weighted graphs by using generalizations of betweenness centrality to weighted graphs.

A related approach was used in \cite{valente2010bridging} to derive measures of ``bridging'' in networks based on the observation that edges that reduce distances in a network are important structural bridges. In the measure in \cite{valente2010bridging}, which employed a modification of closeness centrality, one systematically deletes edges and measures changes in the resulting mean path lengths. We also note the recent paper \cite{VE2016} about bridging centrality.


Let $G(V,E)$ be a graph with a vertex set $V$ of size $n$ (i.e., there are $|V| = n$ vertices) and an edge set $E$ of size $m$. The set of core vertices is $V_C$ (and its size is $n_c$), and the set of peripheral vertices is $V_P$ (and its size is $n_p$). We also sometimes use the notation ${\tt C} = |V_C|$ for the size of the core set. Suppose that a network (i.e., a graph) contains exactly one core set and exactly one periphery set, and that these sets are disjoint: $V_C \cup V_P=V$ and $V_C \cap V_P=\emptyset$. The goal of the {\sc Path-Core} algorithm is to compute a score for each vertex in the graph $G$ that reflects the likelihood that that vertex belongs to the core. In other words, high-scoring vertices have a high probability of being in the core, and low-scoring vertices have a high probability of being in the periphery. Throughout the paper, we use the term ``{\sc Path-Core} scores'' to indicate the scores that we associate with a network's vertices by using the {\sc Path-Core} algorithm.

\begin{table} 
\begin{minipage}[b]{0.99\linewidth}
\centering
\begin{tabular}{|c|c|}
 \multicolumn{1}{c}{}  & \multicolumn{1}{c}{} \\
\hline
$A_{\{CC\}}$ & $A_{\{CP\}}$ \\
 \hline
$A_{\{CP\}}$ &  $A_{\{PP\}}$ \\
\hline
\end{tabular}
\caption{Block model for the ensemble of graphs $G(p_{cc},p_{cp},p_{pp},n_c,n_p)$.  Note that either $p_{cc} \geq p_{cp} > p_{pp}$ or $p_{cc} > p_{cp} \geq p_{pp}$.
}
\label{tab:generalBlockModel}
\end{minipage}
\end{table}

We illustrate our methodology in the context of a generalized block model, such as the one in Table \ref{tab:generalBlockModel}, where the submatrices $A_{\{CC\}}$, $A_{\{CP\}}$, and $A_{\{PP\}}$ represent the interactions between a pair of core vertices, a core vertex and a peripheral vertex, and a pair of peripheral vertices, respectively. Suppose that $A_{\{CC\}}$ and  $A_{\{PP\}}$ are adjacency matrices that we construct using the $G(n,p)$ random graph model\footnote{In the random graph model $G(n,p)$ on $n$ vertices, an edge is present between each pair of vertices independently with probability $p$ \cite{gilbert1959random,erdos1959random}. It is common to abuse terminology and use the name ``Erd\H{o}s--R\'{e}nyi random graph'' for $G(n,p)$.} by considering $G(n_c,p_{cc})$ and $G(n_p,p_{pp})$, respectively, and that $A_{\{CP\}}$ is the adjacency matrix of a random bipartite graph $G(n_c,n_p,p_{cp})$ in which each edge that is incident to both a core and peripheral vertex is present with independent probability $p_{cp}$. As indicated by the above notation, $p_{cc}$ denotes the probability that there is an edge between any given pair of core vertices, and $p_{pp}$ denotes the probability that there is an edge between any given pair of peripheral vertices. In the context of the above block model, core--periphery structure arises naturally when either $p_{cc} \geq p_{cp} > p_{pp}$ or $p_{cc} > p_{cp} \geq p_{pp}$. The above family of random networks, which we denote by $G(p_{cc},p_{cp},p_{pp},n_c,n_p)$, was also considered in Ref.~\cite{puckmason}. It contains exactly one set of core vertices, and the remaining vertices are peripheral vertices. More complicated core--periphery structures can also occur \cite{puckmason}, such as a mix of (possibly hierarchical) community structures and core--periphery structures.

We now present the intuition behind the {\sc Path-Core} algorithm and the reason that the resulting {\sc Path-Core} score is a good indicator of the likelihood that a vertex is in the core or in the periphery. If $i$ and $j$ are adjacent core vertices, then it is likely that shortest paths between $i$ and $j$ consist entirely of other core vertices. If $i \in V_C$ and $j \in V_P$, then a shortest path between $i$ and $j$ should also mostly contain core vertices. Finally, even when  $i,j \in V_P$, it is still likely that a shortest path between $i$ and $j$ is composed of many core vertices and few peripheral vertices. Intuitively, once a shortest path reaches the set $V_C$, it is likely to stay within the core set $V_C$ until it returns to the periphery set $V_P$ and reaches the terminal vertex $j$, because $p_{cc} \geq p_{cp} \geq p_{pp}$. To summarize, we expect core vertices to be on many shortest paths in a graph, whereas peripheral vertices should rarely be on such shortest paths. In other words, because shortest paths between a pair of core vertices are the ones that should on average contain the largest fraction of vertices that are in the core, we find that oversampling such paths is an effective way to extract core parts of a graph. Importantly, it is not sufficient in general to simply use a quantity like weighted BC. For example, for a stock-market correlation network that was examined in Ref.~\cite{corePerApp}, weighted BC cannot distinguish the importance of vertices at all, whereas coreness measures (in particular, {\sc Core-Score} and {\sc Path-Core}) are able to successfully detect core vertices.

To illustrate the effectiveness of the {\sc Path-Core} algorithm, we consider (see Fig.~\ref{fig:EXX3PSCORES}) several instances of the random-graph ensemble $G(p_{cc},p_{cp},p_{pp},n_c,n_p)$ with $p_{cc} > p_{cp} >p_{pp}$. Let $\beta =n_p/n$, where $n=n_c+n_p$, denote the fraction of vertices in the core. We assign the edges independently at random according to the following procedure. The edge probabilities for the core--core, core--periphery, and periphery--periphery pairs of vertices are given by the vector $\mathbf{p}=(p_{cc}, p_{cp}, p_{pp})$, where $p_{cc}= \kappa^2 p$, $p_{cp}=\kappa p$, and $p_{pp}=p$. In our simulations, we fix $n=100$, $\beta =0.5$, and $p=0.25$, and we compute core--periphery structure for 10  instances of the above random-graph ensemble for each of the parameter values $\kappa=1.1, 1.2, \dots, 1.9, 2$. To illustrate the effectiveness of the {\sc Path-Core} algorithm, we show in Fig.~\ref{fig:EXX3PSCORES} the {\sc Path-Core} for all vertices for three different instances of the above block model. We use the parameter values $\kappa=1.3$ (which yields $\mathbf{p}=(0.4225, 0.325, 0.25)$), $\kappa=1.5$ (which yields $\mathbf{p}=(0.5625,0.375, 0.25)$), and $\kappa=1.8$ (which yields $\mathbf{p}=(0.81,0.45,0.25)$).

\begin{figure}[h!]
\begin{center}
\subfigure[$ \kappa = 1.3 $]{\includegraphics[width=0.3\columnwidth]{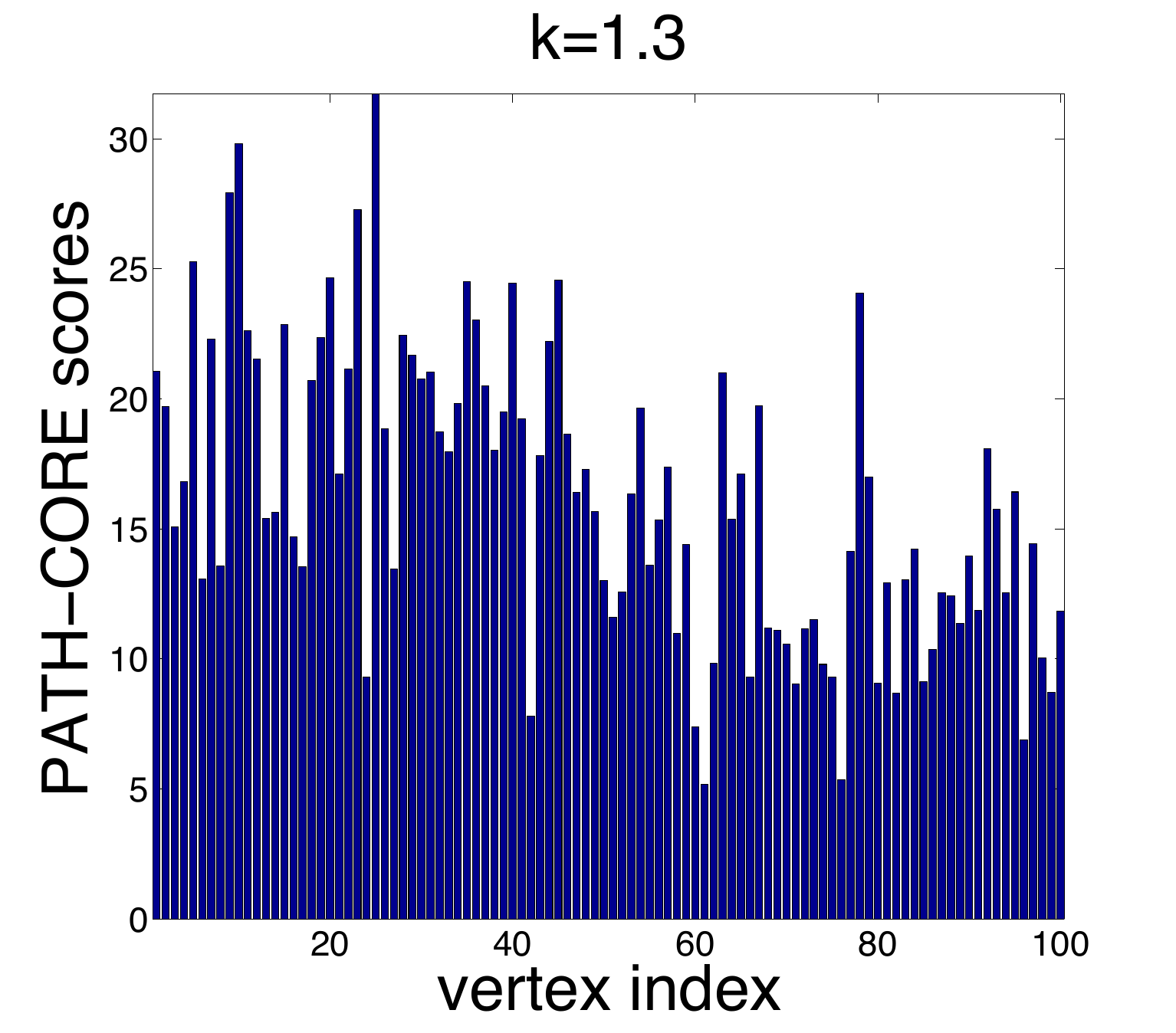}}
\subfigure[$ \kappa = 1.5 $]{\includegraphics[width=0.3\columnwidth]{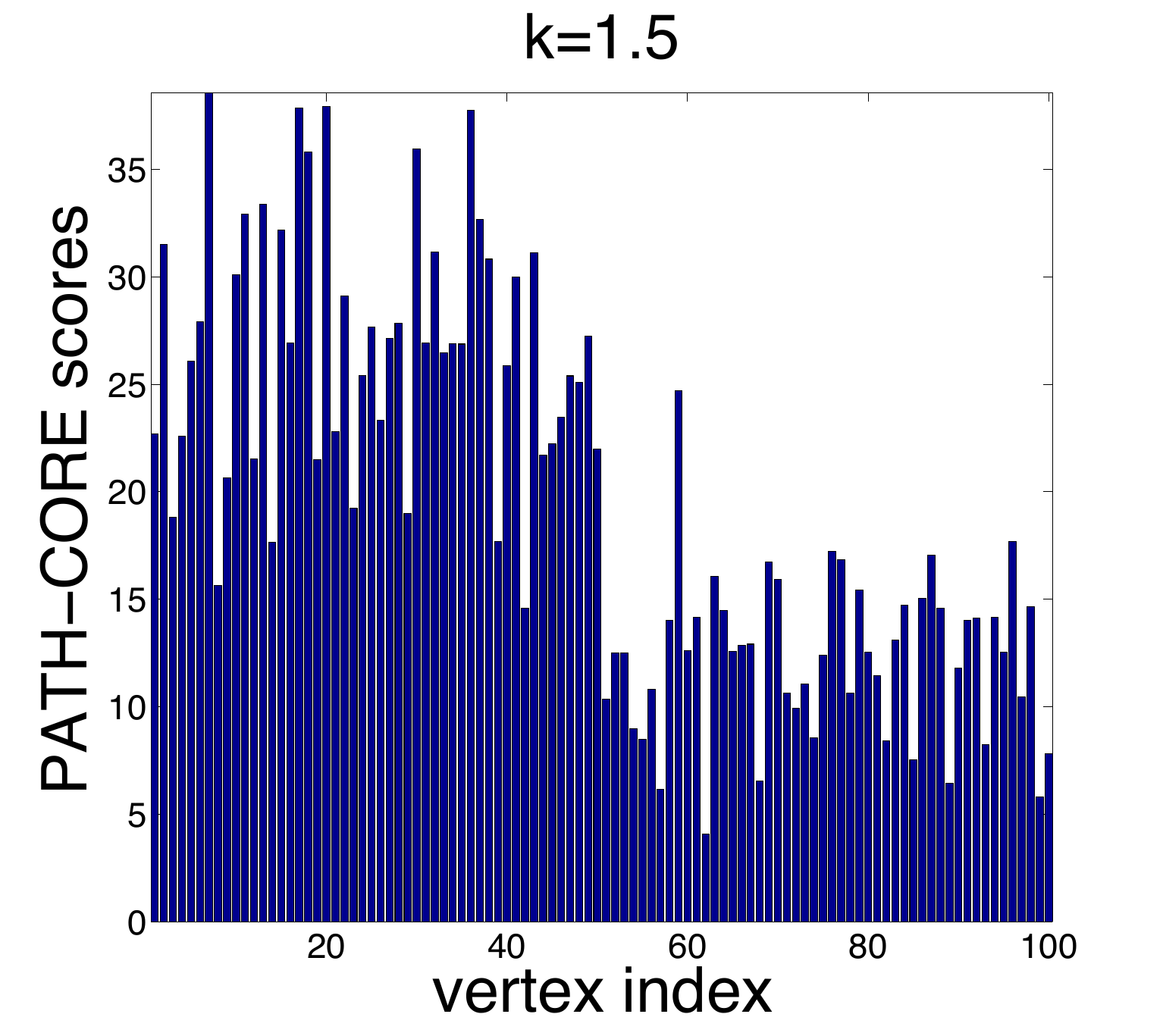}}
\subfigure[$ \kappa = 1.8 $]{\includegraphics[width=0.3\columnwidth]{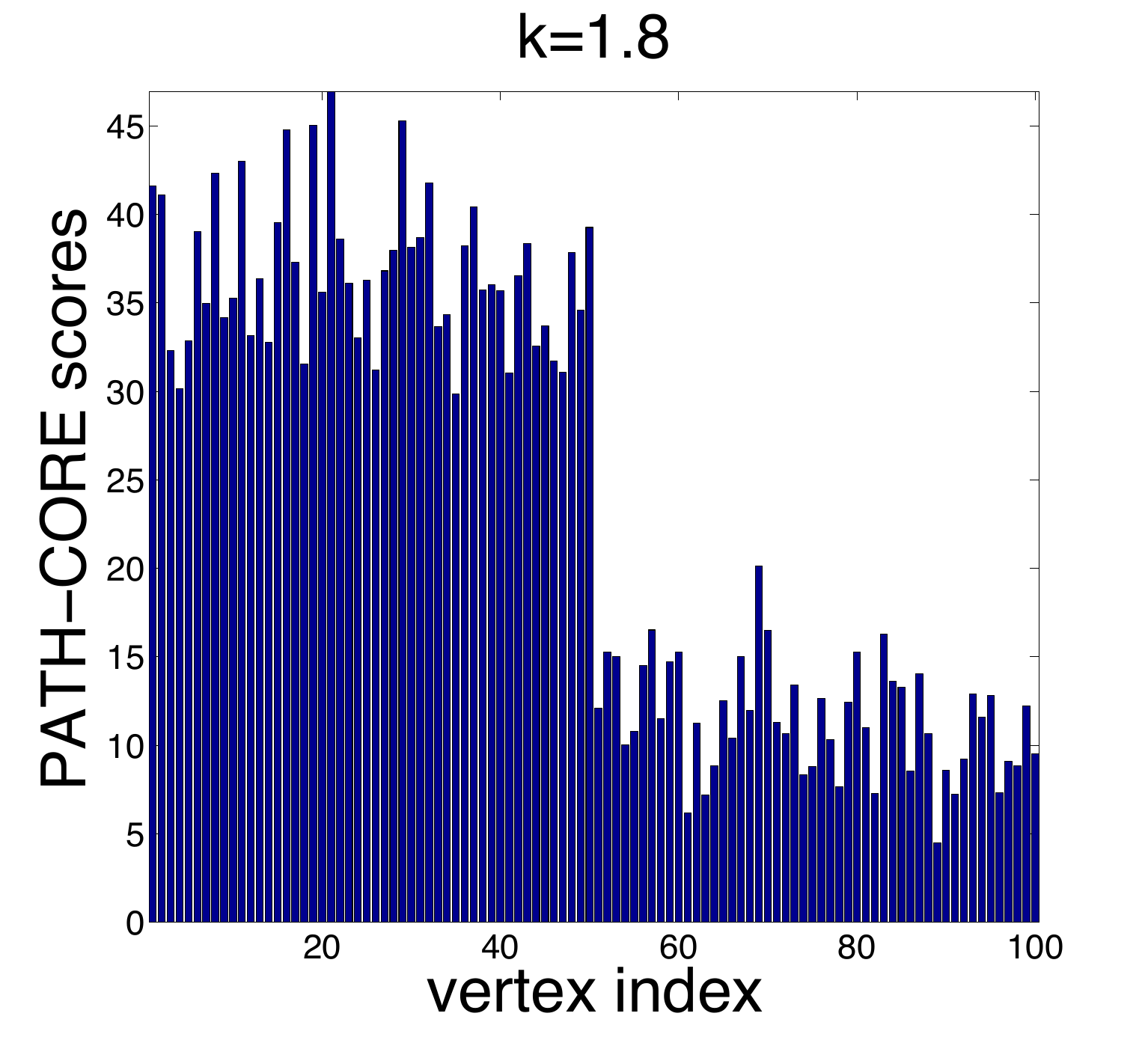}}
\end{center}
\caption{{\sc Path-Core} scores of all $n=100$ vertices, for graphs drawn from three different random-graph ensembles in the family $G(p_{cc},p_{cp},p_{pp},n_c,n_p)$. The vector $\mathbf{p}=(p_{cc}, p_{cp},p_{pp})$ gives the edge probabilities between between a pair of core vertices ($p_{cc}$), a core vertex and a peripheral vertex ($p_{cp}$), and a pair of peripheral vertices ($p_{pp}$).  These probabilities are $p_{cc}= \kappa^2 p$, $p_{cp}=\kappa p$, and $p_{pp}=p$, and we use the fixed value $p=0.25$. The scalar $\kappa$ then parametrizes the ensemble. The values of $\kappa$ are (left) 1.3, (center) 1.5, and (right) 1.8. The first 50 vertices are the planted core vertices, and the remaining 50 vertices are the planted peripheral vertices. 
}
\label{fig:EXX3PSCORES}
\end{figure}

For each of the plots in Fig.~\ref{fig:EXX3PSCORES}, we place the core vertices in the first 50 positions on the horizontal axis, and we place the peripheral vertices in the remaining 50 positions. The vertical axis indicates the {\sc Path-Core} score associated to each vertex. As expected, vertices in the core set have larger {\sc Path-Core} scores than vertices in the periphery set. For $\kappa=1.3$ (left panel), the separation between core and peripheral vertices is not very clear.  As we increase $\kappa$, the separation becomes clearer, and $\kappa = 1.8$ (right panel) exhibits a clear separation between core and peripheral vertices. As expected, larger differences between the edge probabilities $p_{cc}\geq p_{cp} \geq p_{pp}$ in the random-graph ensemble result in clearer separations between core and periphery sets.

For some networks, it is sufficient to have a coreness measure that reflects the probability that a vertex is a core or peripheral vertex. In such a scenario, we view such scores as akin to centrality values \cite{puckmason}. In other situations, however, it is desirable to obtain a classification of a network's vertices as part of a core set or a periphery set. With this in mind, we let {\sc Path-Core}$(i)$ denote the {\sc Path-Core} score of vertex $i$, and we assume without loss of generality that {\sc Path-Core}($1$) $\geq$ {\sc Path-Core}($2$) $\geq \dots \geq$ {\sc Path-Core}($n-1$) $\geq$ {\sc Path-Core}($n$). 
Because the {\sc Path-Core} score gives our calculation for the likelihood that a vertex is in the core set or periphery set (a high {\sc Path-Core} suggests a core vertex), we are left with inferring what constitutes a good ``cut'' of {\sc Path-Core} values to separate core vertices from peripheral ones. In other words, we seek to determine a threshold $\xi$ such that we classify $i$ as a core vertex if {\sc Path-Core}$(i) \geq \xi$ and we classify $i$ as a peripheral vertex if {\sc Path-Core}$(i) < \xi$.

If the size $n_c=\beta n$ of the core set is known, then the problem becomes significantly easier, as we can select the top $n_c$ vertices with the largest {\sc Path-Core} scores and classify them as core vertices. That is, we set $ a = n_c= \beta n$. However, in most realistic scenarios, the size of the core is not known a priori, and it should thus be inferred from the graph $G$ (or from the graph ensemble) and the distribution of the {\sc Path-Core} scores. One possible heuristic approach to obtain such a separation is to sort the vector of {\sc Path-Core} scores in decreasing order and to infer $a$ by searching for a large jump in the sizes of the vector elements. That is, one can seek a ``natural'' separation between high and low {\sc Path-Core} scores (if one exists). An alternative approach is to detect two clusters in the vector of {\sc Path-Core} scores using a clustering algorithm (such as $k$-means clustering). The examples in Fig.~\ref{fig:sortedpscore} (which we generate from the random-graph ensemble $G(p_{cc},p_{cp},p_{pp},n_c,n_p)$ with $p_{cc}= \kappa^2 p$, $p_{cp}=\kappa p$, and $p_{pp}=p$ for $\kappa \in \{1.3, 1.5, 1.8\}$)  illustrate this heuristic very well, as there exists a natural cut point that corresponds to a {\sc Path-Core} score of approximatively $a=20$. This cut correctly assigns the first 50 vertices to the core set and the remaining 50 vertices to the periphery set. In our experiments, note that we fix $p=0.25$ and $\kappa \in [1,2]$, which implies that $p_{cc}, p_{cp}, p_{pp} \in [0,1]$.

Unfortunately, for ``noisy'' networks from this graph ensemble (and for many empirical networks), for which the edge probabilities $p_{cc}$, $p_{cp}$, and $p_{pp}$ are not well-separated, the aforementioned heuristic procedure can yield unsatisfactory results, so a more systematic approach is desirable. In Section \ref{sec:objFuncSync}, we thus introduce the {\sc Find-Cut} algorithm, which maximizes an objective function for partitioning a network into a set of core vertices and a set of peripheral vertices. Using the vector of {\sc Path-Core} scores as an input --- or, indeed, using any other vector of scores that reflects the likelihood that each vertex belongs to the core set --- we consider a large number of possible values of the vector to attempt to find an optimal separation of vertices into a core set and a periphery set that maximizes the objective function in \eqref{cpobjDens}.  See Section \ref{sec:objFuncSync} for a discussion of this objective function and how we maximize it.

\begin{figure}[h!]
\begin{minipage}[b]{0.98\linewidth}
\begin{center}
\subfigure[$ \kappa = 1.5 $]{\includegraphics[width=0.23\columnwidth]{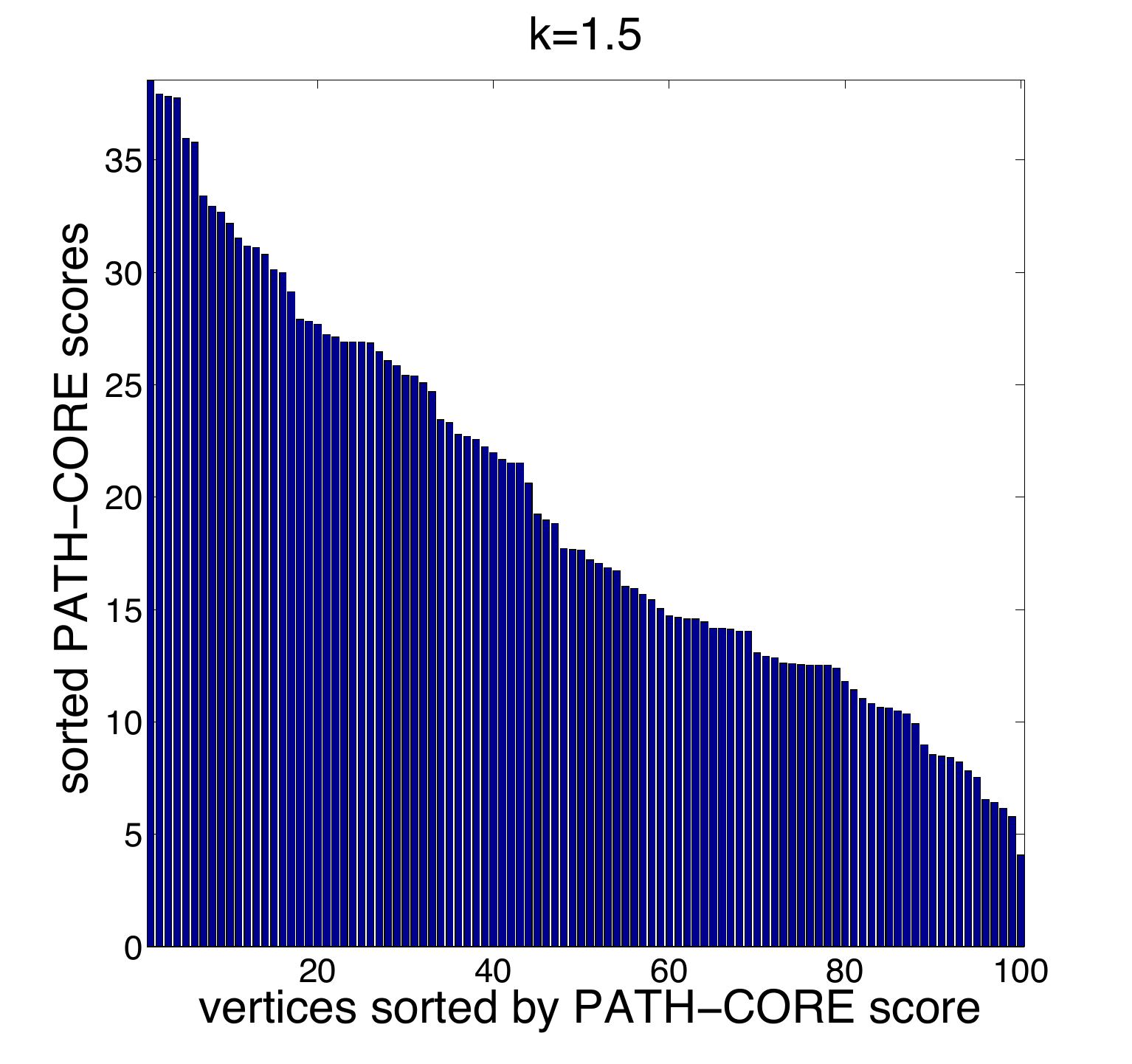}}
\subfigure[$ \kappa = 1.8 $]{\includegraphics[width=0.24\columnwidth]{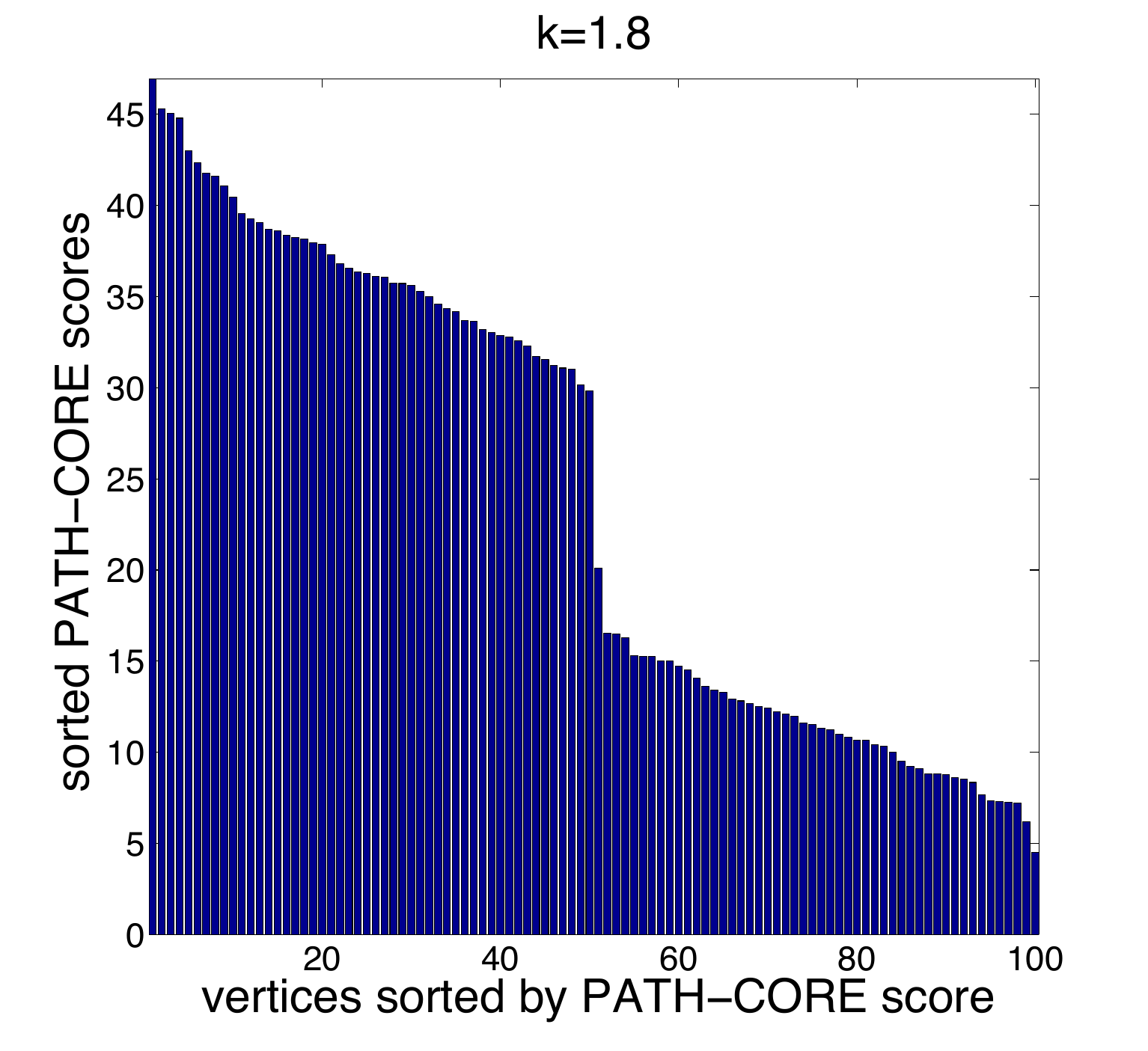}}
\subfigure[$ \kappa = 1.5 $]{
\includegraphics[width=0.23\columnwidth]{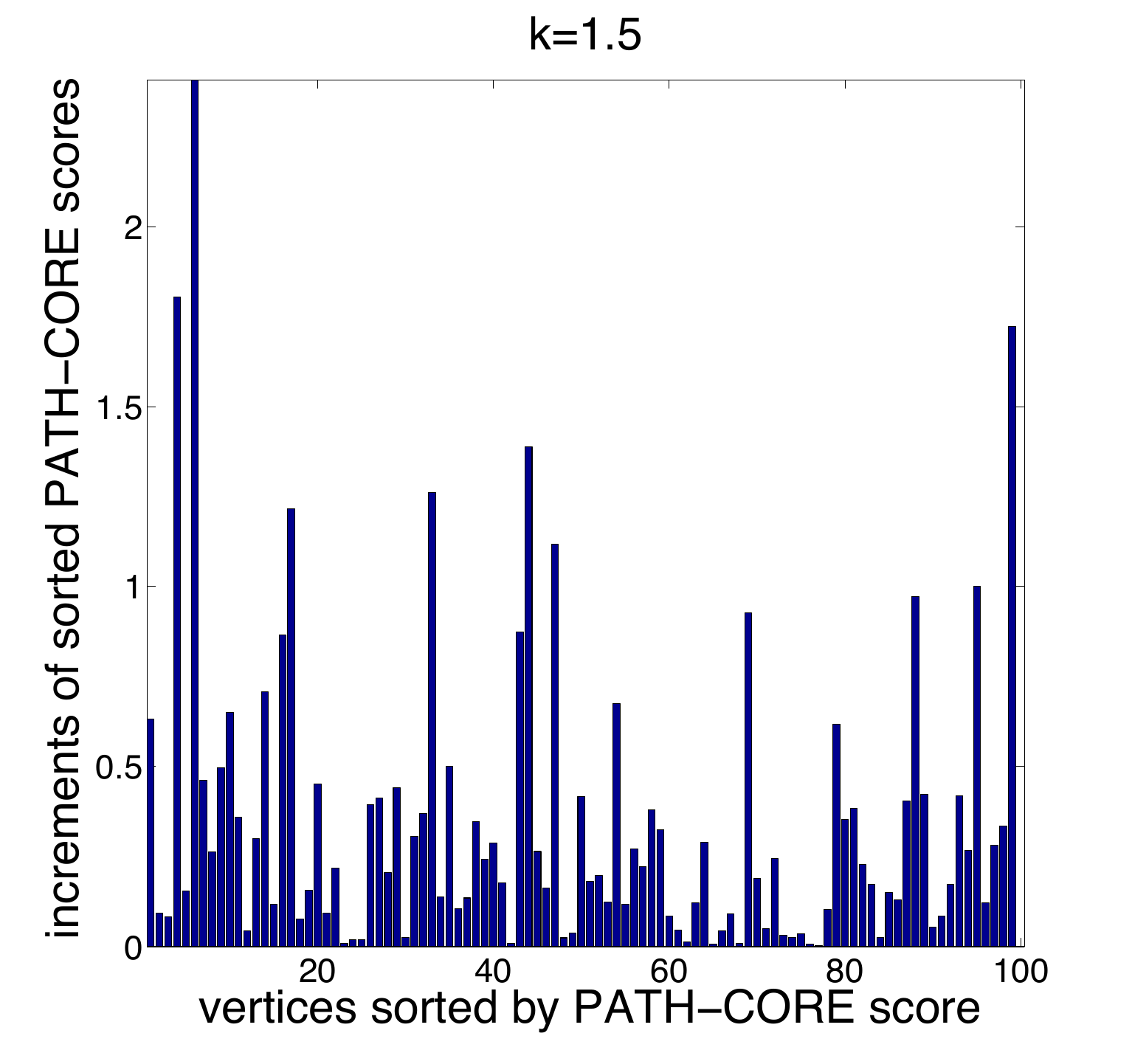}}
\subfigure[$ \kappa = 1.8 $]{
\includegraphics[width=0.23\columnwidth]{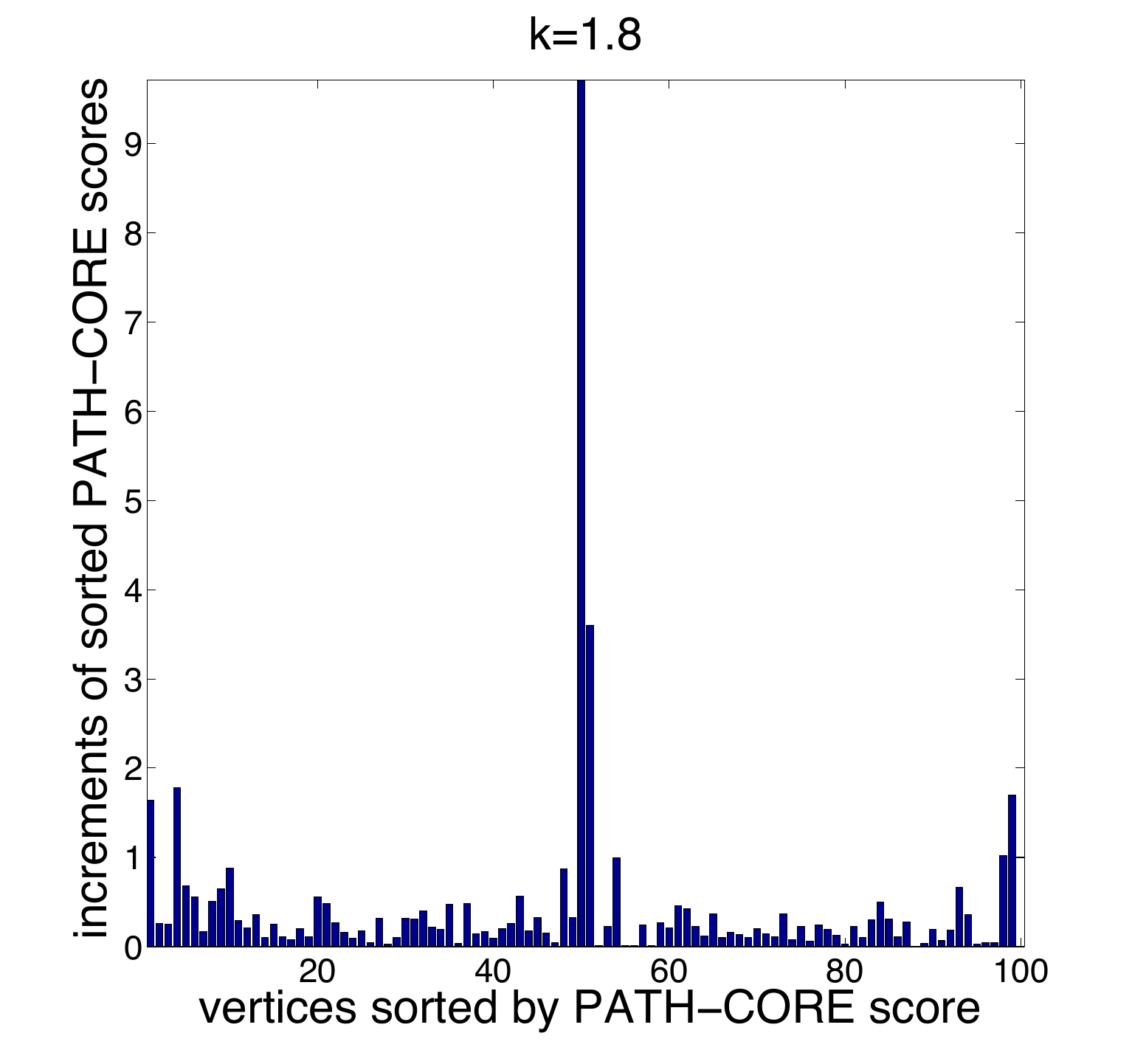}}
\end{center}
\caption{{\sc Path-Core} scores, sorted in decreasing order, for the random-graph ensemble $G(p_{cc},p_{cp},p_{pp},n_c,n_p)$ with $p_{cc}= \kappa^2 p$, $p_{cp}=\kappa p$, $p_{pp}=p$ and parameter values $p = 0.25$, and (a) $\kappa = 1.5$ and (b) $\kappa = 1.8$. When the core--periphery structure is sufficiently prominent, it is possible to separate the vertices by sorting the vector of {\sc Path-Core} scores and inferring the threshold between core and peripheral vertices by considering the largest increment that occurs between two consecutive entries in the vector of sorted {\sc Path-Core} scores. We show the result for $\kappa = 1.5$ in (c) and the result for $\kappa = 1.8$ in (d). In the center and right panels of Fig.~\ref{fig:EXX3PSCORES} and  panels (a) and (b) of the present figure, the largest {\sc Path-Core} score of a peripheral vertex is approximately 20, whereas the lowest {\sc Path-Core} score of a core vertex is approximately 30 (the difference of 10 is revealed by the peak in plot (d) of this figure), and we obtain a clear discrete classification into a set of core vertices and a set of peripheral vertices.
}
\label{fig:sortedpscore}
\end{minipage}
\end{figure}

We present an explicit algorithm for {\sc Path-Core} in Algorithm \ref{pathscorepuck} for the case of unweighted and undirected graphs. This algorithm runs in $\mathcal{O}(m^2)$ time, where we recall that $m = |E|$ is the number of edges in the graph. Intuitively, this is the best that one can achieve (even when computing a {\sc Path-Core} score for just a single vertex), because one must separately consider each graph $G \setminus e$ for all $e \in E$, and finding shortest paths between two vertices has a complexity of $\Theta(m)$. In Appendix 1, we prove the above complexity results and provide pseudocode for the algorithm.

One potential way to drastically reduce the temporal complexity is to sample edges from $G$ via some random process and compute shortest paths only for pairs of adjacent vertices that use these sampled edges. An investigation of the trade-off between accuracy and computational efficiency of this method is beyond the scope of our paper, but it is an interesting direction for future research.


\section{An Objective Function for Detecting Core--Periphery Structure} \label{sec:objFuncSync}

In this section, we introduce an objective function that is suitable for detecting core--periphery structure when there is exactly one core set of vertices and one periphery set. Our function bears some similarity to the rich-club coefficient \cite{colizza2006detecting}, although a crucial difference is that it takes the connectivity of the core, the periphery, and the inter-connectivity between the two into account. (Unlike with rich clubs, low-degree vertices can be core vertices \cite{XZhang2014}.) Using this objective function, we propose the {\sc Find-Cut} algorithm for partitioning the vertex set $V$ into core and periphery sets. As an input, {\sc Find-Cut} takes a vector of scores that reflect the likelihood that each vertex belongs in a network's core set (the probability of belonging to the core set is higher for larger scores), and it attempts to find an optimal separation that maximizes the proposed objective function. That is, instead of trying to find a global optimum of the objective function, the algorithm {\sc Find-Cut} optimizes the objective function over all partitions in which the core vertices have higher likelihood scores than the periphery vertices. A fast general optimization algorithm for this objective function is likely very difficult to achieve, and it is beyond the scope of this paper. We believe that the construction of a suitable objective function brings three advantages. First, the subject of network community structure has benefited greatly from having objective functions to optimize \cite{masonams,fortunato}, and we expect similar benefits for investigations of core--periphery structure. Second, it allows a local-refinement search after the initial algorithm has been applied (in the spirit of Kernighan--Lin vertex-swapping steps for community detection \cite{New06,Richardson2009} and gradient-descent refinement steps in non-convex optimization \cite{NesterovOpt}). Finally, it allows one to compare distinct methods by comparing the corresponding value of the objective function. Nevertheless, one has to proceed cautiously: a value of an objective function need not provide a definitive answer, and it can be misleading \cite{good2010,peixoto2014}. 

Before introducing an objective function for studying core--periphery structure, we first revisit a well-known graph-partitioning problem to highlight the similarity between the two situations. {\sc Min-Cut}, an instance of a graph-partitioning problem, is concerned with dividing a graph into two (similarly-sized) subgraphs while minimizing the number of edges that are incident to vertices in both subgraphs. More generally, a large family of graph-partitioning problems seek to decompose a graph into $k$ disjoint subgraphs (i.e., clusters) while minimizing the number of cut edges (i.e., edges with endpoints in different clusters).  Given the number $g$ of clusters, the $g$-way graph-partitioning problem searches for a partition $V_1, \dots ,V_g$ of the vertex set $V$ that minimizes the number of cut edges
\begin{equation}
 	\mbox{Cut}(V_1, \dots ,V_g) = \sum_{i=1}^{g} | E(V_i, \overline{V_i}) | \,,
\label{cutObj}
\end{equation}
where $\overline{X} = V \setminus X$ and the number of edges between $X \subset V$ and $Y \subset V$ is $|E(X,Y)| = \sum_{i \in X, j \in Y} A_{ij}$. However, it is well-known that trying to minimize $\mbox{Cut}(V_1, \dots ,V_g)$ favors cutting off weakly-connected individual vertices from a graph and can thus lead to trivial partitions. To penalize clusters $V_i$  of small size, Shi and Malik \cite{shimalik} suggested minimizing the normalized cut 
\begin{equation}
	 \mbox{NCut}( V_1, \dots ,V_g ) = \sum_{i=1}^g \frac{\mbox{Cut}(V_i,\overline{V_i})}{\mbox{SK}(V_i)}\,,
\label{ncutObj}
\end{equation}
where $\mbox{SK}(V_i) = \sum_{i \in V_i} d_i$ and $d_i$ denotes the degree of vertex $i$ in the original graph $G$.

A natural choice for an objective function to detect core--periphery structure is to maximize the number of edges between pairs of core vertices and also between core and peripheral vertices, while allowing as few edges as possible between pairs of peripheral vertices. In other words, our approach is complementary to that of the graph-cut objective function (\ref{cutObj}).  However, instead of minimizing the number of cut edges across the core and periphery sets (i.e., across clusters), we maximize the connectivity between pairs of core vertices and between core and peripheral vertices while minimizing the connectivity between pairs of peripheral vertices. We thus want to maximize
\begin{equation}
	\mbox{CP-connectivity}(V_C,V_P) = E(V_C,V_C) + E(V_C,V_P) -  E(V_P,V_P)\,.
\label{cpobj1}
\end{equation}
Our aim is to find a partition $\{V_C,V_P\}$ of the vertex set $V$ that maximizes \mbox{CP-connectivity}$(V_C,V_P)$, under the constraint that $\vert V_C \vert ,  \vert V_P \vert \geq b$, where $b$ is the minimum number of core or peripheral vertices (hence, $n-b$ is the maximum number of core or peripheral vertices) to avoid a large imbalance between the sizes of the core and periphery sets. In other words, we seek a balanced partition, and a higher value of $b$ indicates a smaller difference between the sizes of the core and periphery sets. This constraint is required to avoid a trivial solution in which all of the vertices are placed in the core set. Furthermore, note that the objective function \eqref{cpobj1} has only one variable because of the constraint $E(V_C,V_C) + E(V_C,V_P) + E(V_P,V_P) = m$. In practice, we have found this approach to be rather unstable in the sense that \eqref{cpobj1} often attains its maximum at $|V_C| = b$ or $|V_P| = b$. It thereby leads to disproportionately-sized sets of core and peripheral vertices compared to the ``ground truth'' in problems with planted core--periphery structure (e.g., from the block model $G(p_{cc},p_{cp},p_{pp},n_c,n_p)$, where we recall (see Section \ref{sec:PathCore}) that $n_c$ (respectively, $n_p$) denotes the size of the core (respectively, periphery) sets, $p_{cc}$ is the probability that there is an edge between a given pair of core nodes, $p_{cp}$ is the probability that there is an edge between a core node and a peripheral node, and $p_{pp}$ is the probability that there is an edge between a pair of peripheral nodes.
This situation is analogous to the trivial solution that one obtains for unconstrained graph-partitioning problems. We have been able to ameliorate this problem (though not remove it completely) by incorporating a normalization term in the spirit of the normalized cut function \eqref{ncutObj}. Instead of maximizing the number of edges between core vertices and between core and peripheral vertices while minimizing the number of edges between peripheral vertices, we choose to maximize the edge \emph{density} among core vertices and between core and peripheral vertices while minimizing the edge density among peripheral vertices. Finally, we also add a term to the objective function that penalizes imbalances between the sizes of the core and periphery sets (or penalizes a deviation from the expected proportion of core vertices) if such information is available. The maximization of our new objective function is over the set of all possible partitions of the vertex set into two disjoint sets (the core set $V_C$ and the periphery set $V_P$). The function is
\begin{equation}
	\mbox{CP-density}(V_C,V_P) = \frac{ |E(V_C,V_C)|}{ \mbox{Vol}(V_C,V_C)} + \frac{|E(V_C,V_P)|}{ \mbox{Vol}(V_C,V_P)} - \frac{|E(V_P,V_P)|}{ \mbox{Vol}(V_P,V_P) } - \gamma \left| \frac{|V_C|}{n}-\beta \right|\,,
\label{cpobjDens}
\end{equation}
where 
\begin{equation}
	 \mbox{Vol}(X,Y) = \left\{
	     \begin{array}{cl}
	    	                  |X| |Y|\,,    &  \;   \textrm{if } X \neq Y \\
	      \frac{1}{2}  |X| (|X|-1)\,,  &  \;  \textrm{if } X=Y  \\
	     \end{array}
	   \right.
	\label{defVolXY}
\end{equation}
denotes the total possible number of edges between sets $X$ and $Y$. In the penalty term, $\beta$ denotes the prescribed fraction of core vertices in the graph (if it is known in advance), and $\gamma$ tunes the sensitivity of the objective function to the size imbalance between the core and periphery sets. Note that $\beta$ can either be prescribed in advance or construed as a parameter that guides the maximization towards a solution with a certain target size for the core set. For simplicity, we limit ourselves to the case $\gamma=0$. That is, we assume no prior knowledge of the ratio between the number of core and peripheral vertices. 
In practice, however, we do implicitly assume a lower bound on the sizes of the core and periphery sets of vertices to ameliorate a ``boundary effect'' that yields solutions with a very small number of vertices in the core set or periphery set. If one explicitly wants to allow the possibility of a small set of core or peripheral vertices, then one can set $b=0$. For some of our experiments on synthetic graphs in Section \ref{sec:numSims}, we compare the performance of our proposed algorithms both when $\beta$ is known and when it is unknown.

\begin{algorithm}[h!]
\caption{ {\sc Find-Cut}: Classifies the vertices of a graph $G$ into a set $V_C$  of core vertices and a set $V_P$ of peripheral vertices based on a score associated to each vertex that reflects the likelihood that it is in the core. 
}
\label{FindCut}
\begin{algorithmic}[1]
\REQUIRE Vector of scores ${\bf s} = (  s_1,\dots,s_n ) \in \mathbb{R}^n$ associated to the $n$ vertices of a graph.
\STATE Sort the entries of the vector ${\bf s}$ in decreasing order. Assume without loss of generality that $s_1 \geq s_2 \geq \dots \geq s_{n-1} \geq s_n$.
\STATE  Let $X_C = \{ 1, \dots, n_c \}$ and $Y_C = \{ n_c +1, \dots, n \}$ for any $n_c \in \{1,\dots,n\}$. Find the value $n_c$ that maximizes the objective function given by Eq.~\eqref{cpobjDens} with $ \gamma=0$). That is, we find
\begin{equation}
   \Phi^*= \frac{1}{n} \left[\underset{n_c \in \{b,\dots,n-b \}}{\text{max}}
     \left( \frac{ |E( X_C , X_C)|}{  \mbox{Vol}(X_C, X_C) } + \frac{ |E( X_C, Y_C )|}{  \mbox{Vol}(X_C, Y_C) }  - \frac{ |E( Y_C , Y_C)|} {  \mbox{Vol}(Y_C, Y_C)  } \right)\right]\,,
\label{eq:FindCut}
\end{equation}
where $b$ denotes a lower bound on the size of the core and periphery sets (which we use to avoid 
 solutions with either a very small core set or a very small periphery set).
\STATE Define the core set $V_C =\{ 1,\dots,n_c \}$ and the periphery set $V_P =  \{ n_c+1,\dots,n \}$ 
\end{algorithmic}
\end{algorithm}

We summarize the {\sc Find-Cut} approach in Algorithm \ref{FindCut}, and we remark that one can also add an iterative post-processing refinement step that is reminiscent of the gradient-descent algorithm \cite{NesterovOpt} or of Kernighan--Lin vertex swaps~\cite{New06,Richardson2009}.  At each iteration, one can choose to move the vertex from the core set to the periphery set (or the other way around) that leads to the largest increase in the objective function \eqref{cpobjDens}. Alternatively, if one wishes to maintain the current size of the core and periphery sets, then one can choose to swap a pair of vertices from their assignments (of core or periphery) that leads to the largest increase in the objective function.


\section{{\sc LowRank-Core}: Core--Periphery Detection via Low-Rank Matrix Approximation}  \label{sec:rank2}

Another approach for detecting core--periphery structure in an unweighted network\footnote{For weighted graphs, one needs to think further about how to use such an approach, as we are relying on perturbing a low-rank matrix.} is to interpret its adjacency matrix as a perturbation of a low-rank matrix. Perturbations of low-rank matrices were used recently in  \cite{barranca2015} for classifying networks and identifying small-world structure --- by capturing the dense connectivity of nodes within communities and the sparse connectivity between communities --- and this type of an approach should also be useful for studying core--periphery structure.

Consider the block model 
\begin{equation}\label{nullmodel}
	A_0 =
		\begin{tabular}{|c|c|}
	\hline
	\mb{1}$_{n_c \times n_c}$ & \mb{1}$_{n_c \times n_p}$ \\
	 \hline
	 \mb{1}$_{n_p \times n_c}$ & \mb{0}$_{n_p \times n_p}$ \\
	 \hline
		\end{tabular}\,,
\end{equation}
which assumes that core vertices are fully connected among themselves and with all vertices in the periphery set and that no edges exist between any pair of peripheral vertices. The block model in Eq.~\eqref{nullmodel} corresponds to an idealized block model that Borgatti and Everett \cite{BorgattiCore} employed in a discrete notion of core--periphery structure. The rank of the matrix $A_0$ is 2, as any $3\times3$ submatrix has at least two identical rows or columns. Consequently, $\mathrm{det}(A_0) = 0$. Alternatively, when the core and periphery sets have the same size, $n_c=n_p$ with $n=n_c+n_p$, one can write the matrix $A_0$ as the following tensor product of matrices:
\begin{equation}\label{my2nullmodel2} 
	\bar{A}_0 =
		\begin{tabular}{|c|c|}
	\hline
	\mb{1}$_{n_c \times n_c}$ & \mb{1}$_{n_c \times n_c}$ \\
	 \hline
	 \mb{1}$_{n_c \times n_c}$ & \mb{0}$_{n_c \times n_c}$ \\
	\hline
		\end{tabular}
	=
	R \otimes\mb{1}_{n_c \times n_c}\,,
\qquad
R = 
\left[ \begin{array}{cc}
    1 & 1 \\
    1 & 0 \\
\end{array} \right]\,.
\end{equation}
The eigenvalues of $\bar{A}_0$ are direct products of the eigenvalues of $R$ and $\mb{1}_{n_c \times n_c}$. These eigenvalues are
\begin{equation*}
	\left\lbrace \frac{1 - \sqrt{5} } {2} ,  \frac{1 + \sqrt{5} }{2}   \right\rbrace    \otimes    \left\lbrace  n_c,0^{(n_c-1)}    \right\rbrace  = \left\lbrace  \left(n_c \frac{1 \pm \sqrt{5} }{2}\right)^{(2)}, 0^{(n-2)}  \right\rbrace\,,
\end{equation*}	
where a superscript denotes the multiplicity of an eigenvalue. 

The simplistic block models in equations (\ref{nullmodel},\ref{my2nullmodel2}) assume that a network has only one core set and one periphery set. Consequently, the block-model matrix has a rank of $2$. The matrix rank is higher for more complicated core--periphery block models. For example, the block model in Fig.~\ref{fig:blockCorPerLine} has a global community structure --- there are $g=4$ communities, which each correspond to a block in the block-diagonal matrix --- and a local core--periphery structure (because each community has a core--periphery structure). As indicated in Ref.~\cite{puckmason}, one can also construe such a structure (by permuting the rows and columns of the matrix) as having a global core--periphery structure and a local community structure.

\begin{figure}[h!]
\begin{center}
\includegraphics[width=0.25\columnwidth]{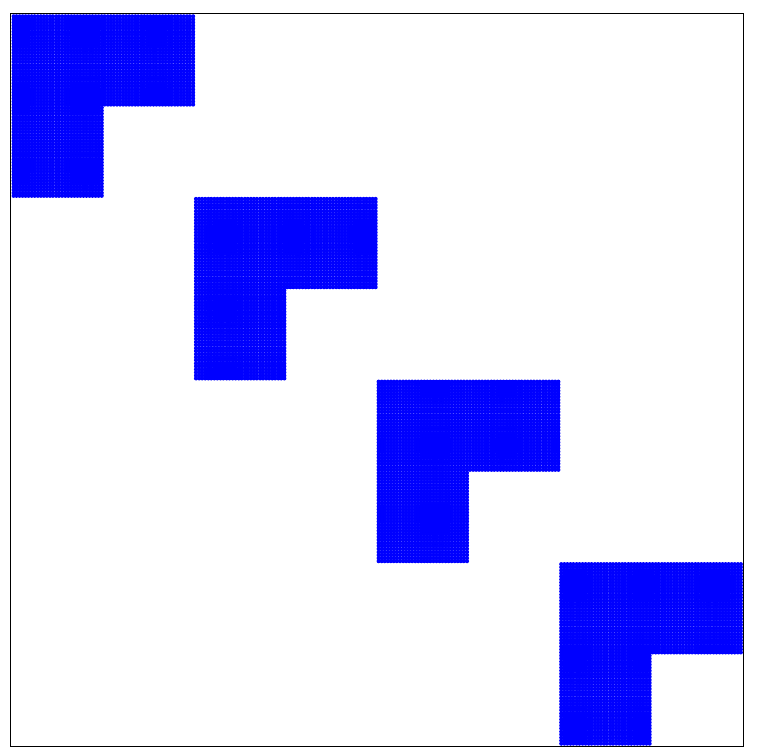}
\end{center}
\caption{A block model with $g=4$ diagonal blocks that are each of the form of the block model in Eq.~\eqref{nullmodel}.}
\label{fig:blockCorPerLine}
\end{figure}

Let $B_g(A_0)$ denote a ``hierarchical'' ensemble of size $n \times n$ that is composed of $g$ diagonal blocks that are of each of size $l \times l$ (thus, $n=lg$), where each diagonal block is of the form of the block model in Eq.~\eqref{nullmodel}. If we let $\lambda_1$ and $\lambda_2$ denote the two nonzero eigenvalues of $A_0$ and let $I_{g}$ denote the identity matrix of size $g$, then we can also write $B_g(A_0)$ as a tensor product of matrices: 
\begin{equation}
	B_g(A_0) =  I_g \otimes  A_0 \,, \quad \mbox{with  eigenvalues} \quad \{B_g(A_0)\} =  \{ 1^{g} \} \otimes \{\lambda_1, \lambda_2, 0^{l-2}\} = \left\{\lambda_1^{(g)},\lambda_2^{(g)},0^{(n-2g)} \right\}\,. 
\end{equation}
Therefore, in the simplistic scenario in which each diagonal block has one core set and one periphery set (and thus has rank $2$), the rank of  $B_g(A_0)$ is $2g$. 

Motivated by the low-rank structure of the above block-model networks, it is useful to consider the possibility of recovering a network's unknown structure using a simple low-rank projection of its adjacency matrix. For the remainder of this section, we focus on the simple core--periphery structure whose rank-$2$ block model is given by Eq.~\eqref{nullmodel} (with one core set and one periphery set). In practice, we construe the adjacency matrix $A$ of an observed graph $G$ as a low-rank perturbation of the block model $A_0$. In other words, we decompose $A$ as
\begin{equation}
	A = A_0 + W\,,
	\label{decomp}
\end{equation}
where $W$ is a ``noise matrix'' whose entries $\{-1,0,1\}$ are determined by a mixture model \cite{mclachlan2000} that involves block-model parameters.  The entries of $W$ are
\begin{equation}
	W_{ij} = \left\{
	     \begin{array}{rl}
	     -1\,, & \;\; \text{ with probability } \; 1 - p_{cc} \qquad \text{(i.e., if } i,j \in V_C)\,,  \\
	     -1\,, & \;\; \text{ with probability } \; 1 - p_{cp} \qquad \text{(i.e., if } i \in V_C \quad \mbox{and} \quad j \in V_P)\,, \\
	      1\,, & \;\; \text{ with probability } p_{pp} \qquad\qquad \text{(i.e., if } i,j \in V_P)\,, \\
	      0\,, & \;\; \text{ otherwise\,.} \\
          \end{array}
   \right.
\label{entries_W}
\end{equation}
Note that $W$ is a random block-structured matrix with independent entries, and its expected value is the rank-$2$ matrix with entries
\begin{equation}
	\mathbb{E}( W_{ij} )= \left\{
	     \begin{array}{rl}
	     p_{cc} -1\,,  & \;\; \text{ if } i,j \in V_C\,,  \\
	     p_{cp} -1\,,  & \;\; \text{ if } i \in V_C \quad \mbox{and} \quad j \in V_P\,, \\
	     p_{pp}\,,      & \;\; \text{ if } i,j \in V_P\,. \\
	     \end{array}
	   \right.
\label{xsxMeanEntriesW}
\end{equation}

To ``denoise'' the adjacency matrix $A$ and recover the structure of the block model, we consider its top two eigenvectors $\{{\bf v}_1,{\bf v}_2\}$, whose corresponding two largest (in magnitude) eigenvalues are $\{\lambda_1, \lambda_2\}$, and we compute the rank-$2$ approximation
\begin{equation}
	\hat{A} =
	 \left
	[ \begin{array}{cc} {\bf v}_1  &  {\bf v}_2  \\  \end{array} \right]
	\left[ \begin{array}{cc}
	    \lambda_1 & 0  \\
	   			     0 &\lambda_2 \\
	\end{array} \right]
	\left[    \begin{array}{c}
					{\bf v}_1^T  \\
					{\bf v}_2^T \\
					\end{array}
	\right]\,.
\label{rank2proj}
\end{equation}
As $A$ more closely approximates the block model, which we can construe as a sort of ``null model'', the spectral gap between the top two largest eigenvalues and the rest of the spectrum becomes larger (as illustrated by the plots in the second column of Fig.~\ref{fig:rank2recovery}).  In other words, as the amount of noise in (i.e., the perturbation of) the network becomes smaller, the top two eigenvalues $\{\lambda_1, \lambda_2\}$ become closer to the eigenvalues $\left\{\lambda_1 = n_c \left(\frac{1 + \sqrt{5} }{2}\right),  \lambda_2 = n_c \left(\frac{1 - \sqrt{5} }{2}\right)\right\}$ of the block model.

To illustrate the effectiveness of our low-rank projection in computing a coreness score, we consider two synthetically generated networks based on the  SBM that we introduced previously. We use the edge probabilities $(p_{cc}, p_{cp}, p_{pp}) = (0.7,0.7,0.2)$ and $(p_{cc}, p_{cp}, p_{pp}) = (0.8,0.6,0.4)$. In the left column of Fig.~\ref{fig:rank2recovery}, we show their corresponding adjacency matrices. The spectrum, which we show in the middle column, reveals the rank-2 structure of the networks. In the second example (which we show in the bottom row of the figure), the large amount of noise causes the second largest eigenvalue value to merge with the bulk of the spectrum.

We then use the denoised matrix $\hat{A}$ to classify vertices as part of the core set or the periphery set by considering the degree (i.e., the row sums of $\hat{A}$) of each vertex. We binarize $\hat{A}$ by setting its entries to $0$ if they are less than or equal to $0.5$ and setting them to $1$ if they are larger than $0.5$, and we denote the resulting binarized matrix by $\hat{A}_t$. 
We remark that, following the rank-2 projection, we observe in practice that all entries of $\hat{A}$ lie in the interval $[0,1]$. (We have not explored the use of other thresholds besides $0.5$ for binarizing $\hat{A}$.)  In the right column of Fig.~\ref{fig:rank2recovery}, we show the recovered matrix $\hat{A}_t$ for our two example networks. Note in both examples that the denoised matrix $\hat{A}_t$ resembles the core--periphery block model $G(p_{cc},p_{cp},p_{pp},n_c,n_p)$ much better than the initial adjacency matrix $A$. Finally, we compute the degree of each vertex in $\hat{A}_t$, and we call these degrees the {\sc LowRank-Core} scores of the vertices. We use the {\sc LowRank-Core} scores to classify vertices as core vertices or peripheral vertices. If one knows the fraction $\beta$ of core vertices in a network, then we choose the top $\beta n$ vertices with the largest {\sc LowRank-Core} score as the core vertices. Otherwise, we use the vector of {\sc LowRank-Core} scores as an input to the {\sc Find-Cut} algorithm that we introduced in Sec.~\ref{sec:objFuncSync}. Although a theoretical analysis of the robustness to noise of our low-rank approximation for core--periphery detection is beyond the scope of the present paper, we expect that results from the matrix-perturbation literature, such as  Weyl's inequality and the Davis--Kahan sin($\Theta$)-theorem \cite{Bhatia}, as well results on low-rank deformations of large random matrices \cite{RajRaoBenaychGeorges} (analogous to the results of F\'eral and P\'ech\'e on the largest eigenvalue of rank-1 deformations of real, symmetric random matrices \cite{FeralPeche}) could lead to theoretical results that characterize the sparsity and noise regimes for which the rank-2 projection that we proposed above is successful at separating core and peripheral vertices. A possible first step in this direction would be to consider a simplified version of the graph ensemble $G(p_{cc},p_{cp},p_{pp},n_p,n_c)$ by setting $p_{cc}=p_{cp}=1-p_{pp} = 1-\eta$, where $\eta \in (0,1)$.

\begin{algorithm}[h!]
\caption{{\sc LowRank-Core}: Detects core--periphery structure in a graph based on a rank-$2$ approximation.
}
\begin{algorithmic}[1]
\REQUIRE Adjacency matrix $A$ of the simple graph $G=(V,E)$ with $n$ vertices and $m$ edges.
\STATE Compute $\{\lambda_1$,$\lambda_2\}$, the top two largest (in magnitude) eigenvalues of $A$, together with their corresponding eigenvectors $\{{\bf v}_1,{\bf v}_2\}$.
\STATE Compute $\hat{A}$, a rank-$2$ approximation of $A$, as indicated in Eq.~\eqref{rank2proj}.
\STATE Threshold the entries of $\hat{A}$ at $0.5$ (so that entries strictly above $0.5$ are set to $1$ and all other entries are set to $0$), and let $\hat{A}_t$ denote the resulting graph.
\STATE Compute the {\sc LowRank-Core} scores as the 
degrees of $\hat{A}_t$.
\STATE If the fraction of core vertices $\beta$ is known, identify the set of core vertices as the top $\beta n$ vertices with the largest {\sc LowRank-Core} scores. 
\STATE If $\beta$ is unknown, use the vector of {\sc LowRank-Core} scores as an input to the {\sc Find-Cut} algorithm in Algorithm~\ref{FindCut}. 
\end{algorithmic}
\label{algoRank2}
\end{algorithm}

\vspace{-5pt}
\begin{figure}[h!]
\begin{center}
\subfigure[$ p_{cc}=0.7, p_{cp} = 0.7, p_{pp}=0.2 $]
{\includegraphics[width=0.23\columnwidth]{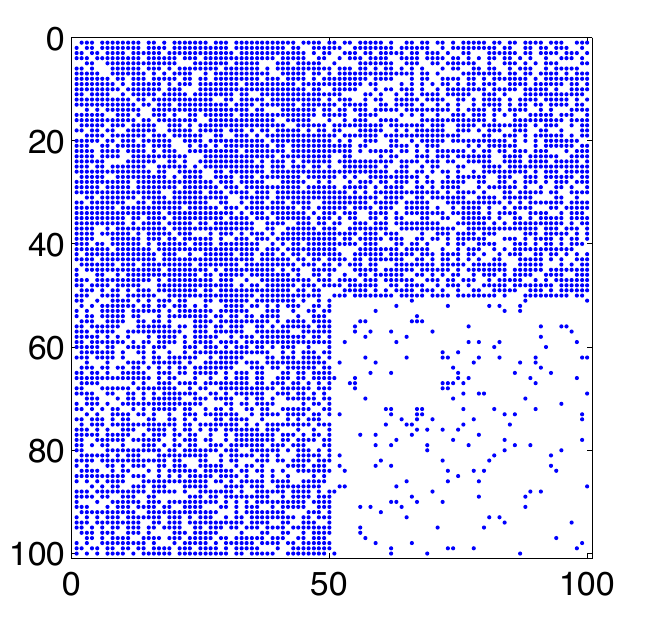}
\includegraphics[width=0.30\columnwidth]{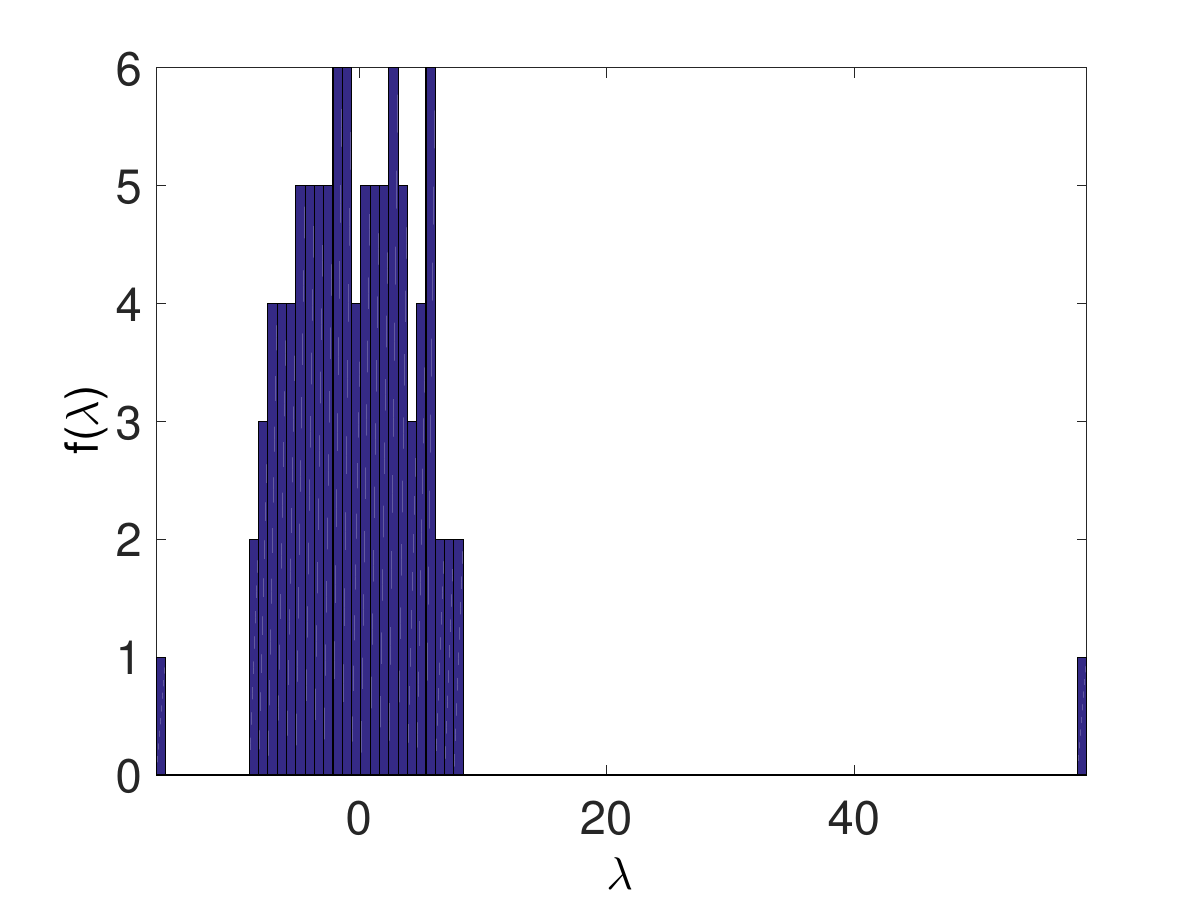}
\includegraphics[width=0.23\columnwidth]{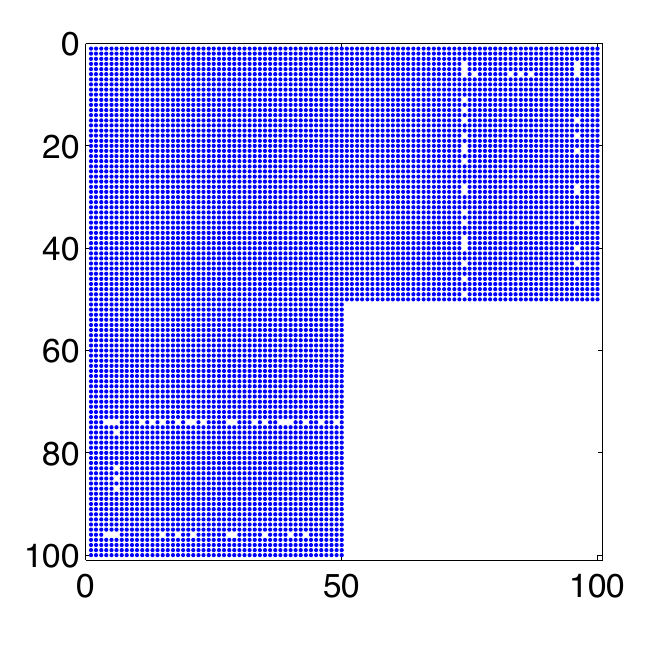}
\includegraphics[width=0.3\columnwidth]{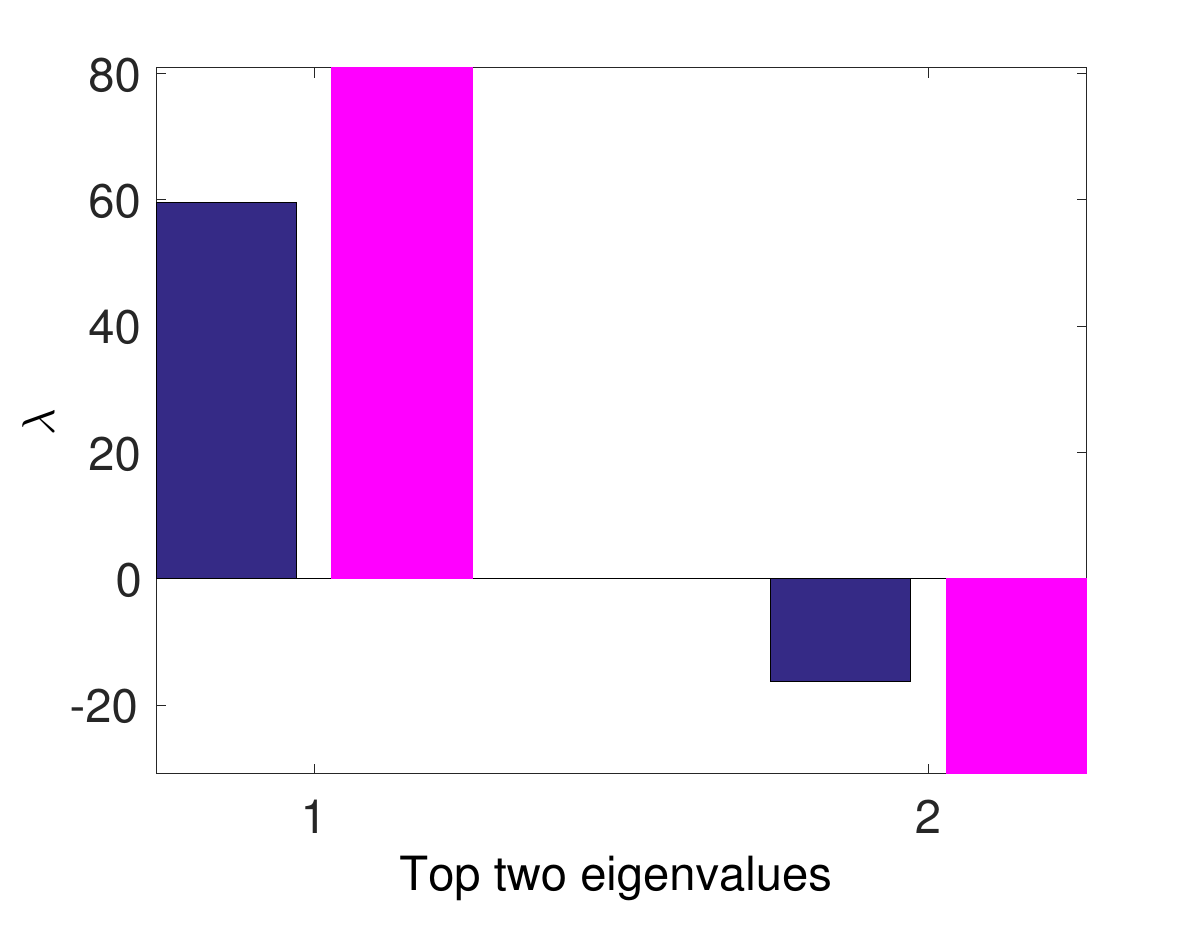}
}
\subfigure[$p_{cc}=0.8, p_{cp}=0.6, p_{pp}=0.4$]{
\includegraphics[width=0.23\columnwidth]{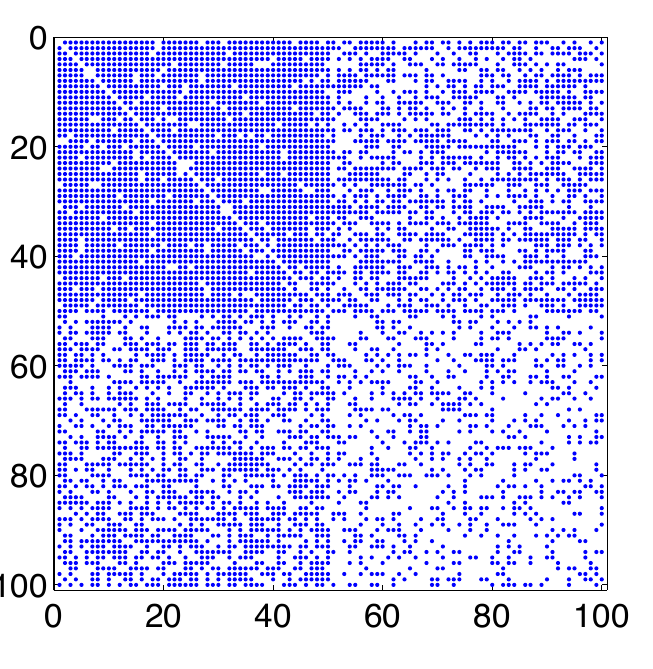}
\includegraphics[width=0.30\columnwidth]{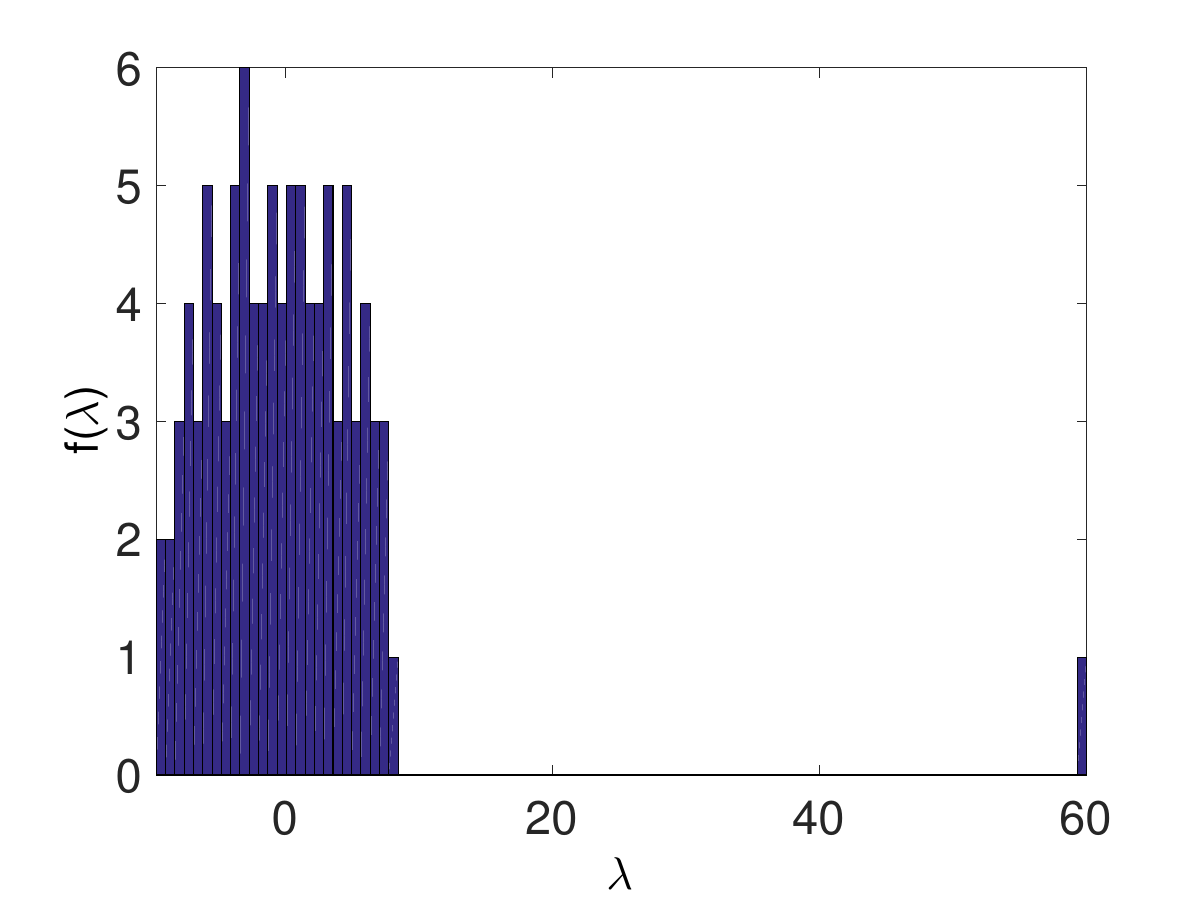}
\includegraphics[width=0.23\columnwidth]{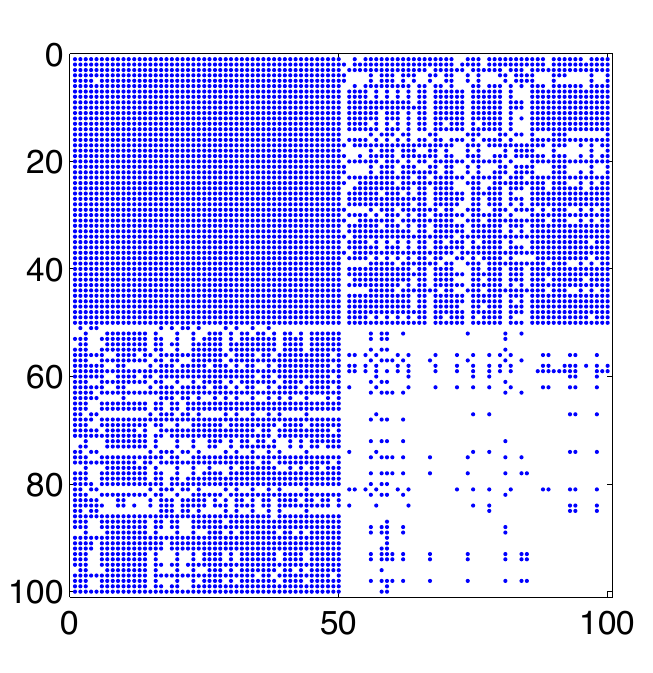}
\includegraphics[width=0.3\columnwidth]{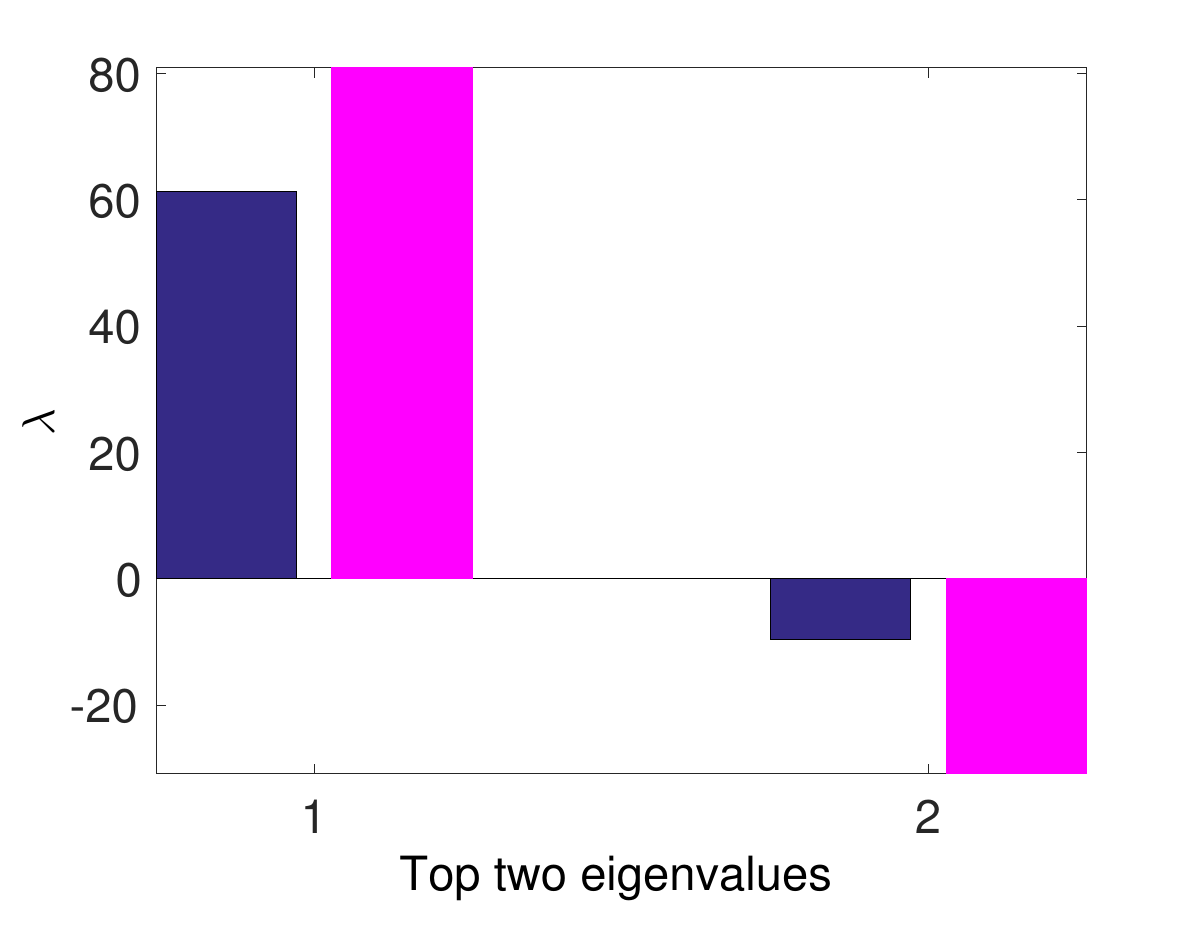}
}
\end{center}
\caption{(Column 1) Original adjacency matrices $A$ from the stochastic block model (SBM) $G(p_{cc},p_{cp},p_{pp},n_c,n_p)$ with edge probabilities $p_{cc}$ for edges between two core vertices, $p_{cp}$ for edges between core vertices and peripheral vertices, and $p_{pp}$ for edges between two peripheral vertices.
(Column 2) Histogram $f(\lambda)$ of the eigenvalues of the original adjacency matrices $A$. 
(Column 3) The matrices $\hat{A}_t$ that we obtain after the rank-$2$ projection and thresholding. 
(Column 4) Bar plot of the top two eigenvalues  $\left\{\lambda_1 = n_c \left(\frac{1 + \sqrt{5} }{2}\right),  \lambda_2 = n_c \left(\frac{1 - \sqrt{5} }{2}\right)\right\}$ of the block model (\ref{nullmodel},\ref{my2nullmodel2}) (blue/dark) versus the top two eigenvalues of many realizations of the SBM $G(p_{cc},p_{cp},p_{pp},n_c,n_p)$ (pink/light), averaged over 100 experiments.
}
\label{fig:rank2recovery}
\end{figure}


\section{{\sc Lap-Core}: Laplacian-Based Core--Periphery Detection} \label{sec:Laplacian}

In this section, we explore the utility of employing Laplacian eigenvectors for detecting core--periphery structure. (As with {\sc Path-Core}, this approach is applicable to either unweighted or weighted graphs.) The \emph{combinatorial Laplacian} matrix associated to the adjacency matrix $A$ of a graph $G$ is $F=D-A$, where $D$ is a diagonal matrix and $D_{ii}$ denotes the degree of vertex $i$ in the case of an unweighted graph. For a weighted graph, $D_{ii}$ denotes the sum of the weights associated to vertex $i$. The solutions of the generalized eigenvalue problem $F{\bf x} = \lambda D {\bf x}$ are related to the solutions of the eigenvalue problem $L{\bf x} = \lambda {\bf x}$, where $L=D^{-1}A$ is often called the \emph{random-walk Laplacian} of $G$. Using $L= I - D^{-1} F$, one can write the random-walk Laplacian in terms of the combinatorial Laplacian. Because $L$ is a row-stochastic matrix, one can interpret it as a transition probability matrix of a Markov chain whose states are the vertices of $G$. In this interpretation, the matrix element $L_{ij}$ denotes the transition probability that a random walker jumps from vertex $i$ to vertex $j$ in a single step. If the pair $(\lambda,{\bf v})$ is an (eigenvalue, eigenvector) solution to  $L{\bf x} = \lambda {\bf x}$, then $(1-\lambda,{\bf v})$ is a solution to $F{\bf x} = \lambda D {\bf x}$. The top\footnote{The top eigenvectors of the random-walk Laplacian $L$ are the eigenvectors that correspond to the largest eigenvalues of $L$. That is, these are the eigenvalues closest to $\lambda_1 = 1$, the largest eigenvalue of $L$. The bottom eigenvectors of $L$ correspond to the smallest eigenvalues of $L$. The eigenvalues $\lambda_1 = 1 \leq \lambda_2 \leq \dots \leq \lambda_n $ of $L$ satisfy $|\lambda_i| \leq 1$ for all $i \in \{1,\dots,n\}$.} eigenvectors of the random-walk Laplacian define the coarsest modes of variation (i.e., slowest modes of mixing) in a graph, and they have a natural interpretation in terms of a random walk on the graph (and thus as a toy model of a conservative diffusion process). There exists a rich literature in the machine-learning, data-analysis, and image-processing communities  \cite{SpielmanT96,Meila01arandom,BelkinPartha,CoifmanPNAS,roweis2000ndr,Coifman_Lafon} on the use of such eigenvectors for tasks like clustering, ranking, image partitioning, and data visualization. 

For core--periphery detection, it is useful to consider the bottom eigenvector of the associated random-walk Laplacian. Considering the block model in Eq.~\eqref{nullmodel} or the generalized block model $G(p_{cc},p_{cp},p_{pp},n_c,n_p)$ (see the depiction in Table \ref{tab:generalBlockModel}) with $p_{cc} \approx p_{cp} < p_{pp}$, the task of finding core--periphery structure in a given graph $G$ amounts to trying to detect a dense connected component between the peripheral vertices in the complement graph $\bar{G}$ (in which the $0$ non-diagonal entries of $A$ become $1$, and the $1$ entries become $0$), as such vertices have many non-edges between them in the original graph. If $p_{cc} \approx p_{cp} < p_{pp}$ (i.e., the above scenario) and there exists a single densely-connected component in a given graph --- such as in examples (a) and (b) in Fig.~\ref{fig:LaplacianSmLg} --- the eigenvector that corresponds to the second largest (in magnitude) eigenvalue of the associated random-walk Laplacian provides an accurate separation of the vertices in the dense component from the rest of the graph. The complement of the block-model graph has a periphery component of size $n_p$ that is fully connected (i.e., it is $K_{n_p}$, the complete graph on $n_p$ vertices), a core component without any edges between pairs of core vertices, and no edges between core and peripheral vertices. In practice, $\bar{G}$ is a perturbed version of the above complement block model; that is, the peripheral vertices are very well-connected among themselves, and there are few core--core and core--periphery connections. Our task then amounts to identifying a well-connected ``community'' of peripheral vertices. In other words, we have replaced the problem of identifying a core set and periphery set in $G$ with the problem of finding the periphery set in $\bar{G}$, for which we can use methods from the large set of available techniques for community detection \cite{masonams,fortunato}.

In many applications, the initial graph $G$ is rather sparse, and the above approach thus has the drawback that the complement graph $\bar{G}$ is very dense, which significantly increases the time that is necessary for the computational task of identifying communities \cite{Brandes2008} (though we note that we only seek to identify a single dense subgraph rather than a graph's entire community structure). As we discussed above, one way to find a dense subgraph of an initial graph is to use the first nontrivial eigenvalue (i.e., the second largest eigenvalue) of the random-walk Laplacian. In Fig.~\ref{fig:LaplacianSmLg}(a), we show an example of such a computation. In this case, we start with a block-model graph from $G(p_{cc}=0.8, p_{cp}=0.2, p_{pp}=0.2,n_c,n_p)$, for which the first nontrivial eigenvalue (see the second column) clearly separates the planted dense subgraph from the rest of the network. In the eigenvector computation for the random-walk Laplacian, note that every iteration of the power method is linear in the number of edges in the graph, and the number of iterations is strictly greater than $\mathcal{O}(1)$ because it depends on the spectral gap. For sparse graphs $G$, the complement $\bar{G}$ is a dense graph, which significantly increases the computational effort needed to find eigenvectors. Instead of working in the complement space, we turn our attention to the other end of the spectrum and consider the smallest eigenvalue of the random-walk Laplacian. Recall that all of the eigenvalues of the random-walk Laplacian are less than or equal to $1$ in magnitude \cite{chung}.

We now focus on the combinatorial Laplacian $F = D - A$. Let $\bar{F}$ denote the combinatorial Laplacian associated to the graph $\bar{G}$. Note that $\bar{A} = J_n - A - I_n$, where $J_n$ denotes the matrix of size $n \times n$ whose entries are all $1$ and $I_n$ is the $n \times n$ identity matrix. Additionally, $\bar{D} = (n-1) I_n - D$.  A well-known relationship \cite{chung} between the combinatorial Laplacian of a graph and that of its complement is given by 
\begin{equation}
    \bar{F} =\bar{D} - \bar{A} = (n-1) I_n - D - ( J_n - A - I_n) = n I_n - J_n - F\,.
\label{def:barF}
\end{equation}
If ${\bf x}$ is an eigenvector of $F$ (other than the trivial eigenvector ${\bf \mb{1}}_{n}$) with ${\bf x} \perp {\bf \mb{1}}_{n}$ (which implies that $J {\bf x}={\bf 0}$) and associated eigenvalue $\lambda$, then ${\bf x}$ is also an eigenvector of $\bar{F}$ (with associated eigenvalue $n-\lambda$). A result due to Kelmans \cite{Kelmans1,Kelmans2,Kelmans3} that connects the characteristic polynomial of the 
combinatorial Laplacian matrix of $G$ to that of its complement implies that 
\begin{equation}
	\lambda_j(\bar{F})  =  n - \lambda_{n+2-j}(F) \qquad \text{for all} \quad j \in \{2,\dots,n\}\,.
\label{KelMans}
\end{equation}
Equation \eqref{KelMans} relates the eigenvalues of the combinatorial Laplacian of $G$ to those of its complement $\bar{G}$. In other words, the spectrum exhibits a certain symmetry, and questions regarding $\lambda_{n+2-j}(F)$ of a graph are equivalent to questions about $\lambda_j(\bar{F})$ of its complement. Furthermore, keeping in mind the usefulness of the second largest eigenvector of the combinatorial Laplacian, we stress that questions involving $\lambda_2(\bar{F})$ (i.e., the case $j=2$) are equivalent to questions involving $\lambda_{n}(F)$.

In practice, none of the eigenvectors of the combinatorial Laplacian are able to distinguish a coherent core set and periphery set in a graph (or a single community in the graph's complement). We calculate the top and bottom eigenvectors (and intermediate ones) of the combinatorial Laplacian and find that none of them captures the distinction between core and periphery sets. Instead, we are able to effectively separate core and periphery sets if we use the random-walk Laplacian $L$, but with the goal of identifying a dense subgraph in the complement graph $\bar{G}$. To do this, one would calculate the second eigenvector $\bar{{\bf v}}_2$ of its associated Laplacian $\bar{L}$. However, because graphs are sparse in most applications, considering the complement of a sparse graph leads to a rather dense graph, which could render computations prohibitive for large $n$. Instead, we propose to use the following approach. Motivated by the analogy in the beginning of this section and the interplay between the bottom eigenvalues of a graph and the top eigenvalues (and their associated eigenvectors) of the graph's complement for the combinatorial Laplacians $F$ and $\bar{F}$, we propose to use the bottom eigenvalue (and its associated eigenvector) of the random-walk Laplacian $L$ associated with our initial graph $G$. The downside of working with the random-walk Laplacian $L$ is that (to the best of our knowledge) there does not exist a statement similar to Eq.~$\eqref{KelMans}$ that makes an explicit connection between the random-walk Laplacian eigenvalues of a graph and those of its complement. In Appendix 2, we explain that such a symmetry exists for the random-walk Laplacian only under certain restrictive conditions. When these conditions are not met, we still make an implicit analogy between the random-walk Laplacian eigenvalues of a graph and those of its complement, but we do not know how to characterize this relationship mathematically. 

In Algorithms \ref{algoLap} and \ref{LapSignAlgo}, we summarize the main steps of two viable algorithms for core--periphery detection using the random-walk Laplacian of a graph. The only difference between Algorithms \ref{algoLap} and \ref{LapSignAlgo} is as follows. The former uses the entries of ${\bf v}_n$ (the bottom eigenvector that corresponds to the smallest algebraic\footnote{Because all of the random-walk Laplacian eigenvalues are real and no larger than $1$ in magnitude, the smallest algebraic eigenvalue corresponds to the smallest real eigenvalue.} eigenvalue as an input to the {\sc Find-Cut} algorithm to infer an optimal separation of the vertices into core and periphery sets by maximizing the objective function \eqref{cpobjDens}.  By contrast, in Algorithm \ref{LapSignAlgo}, the same bottom eigenvector ${\bf v}_n$ of the random-walk Laplacian provides an implicit threshold (i.e., the value $0$), and one is able to classify each vertex as part of a core set or a periphery set by considering the sign of each entry. To choose a global sign, we multiple by $-1$ if necessary to maximize the objective function \eqref{cpobjDens} and ensure that the positive entries correspond to core vertices. (If ${\bf v}_n$ is an eigenvector of $L$, then so is $-{\bf v}_n$.)

\begin{algorithm}[h!]
\caption{{\sc Lap-Core}: Detects core--periphery structure in a graph using a core score that is based on the eigenvector corresponding to the smallest nonzero eigenvalue of the associated random-walk graph Laplacian.}
\label{algoLap}
\begin{algorithmic}[1]
\REQUIRE Adjacency matrix $A$ of the simple graph $G=(V,E)$ with $n$ vertices and $m$ edges.
\STATE Compute the random-walk Laplacian $L = D^{-1}A$, where $D$ is a diagonal matrix with elements $D_{ii} = \sum_{j=1}^{n} A_{ij}$ given by the strength (i.e., the sum of weights of the edges incident to the vertex) of vertex $i$ for each $i$.
\STATE Compute $\lambda_n$, which denotes the smallest nonzero eigenvalue of $L$, and its corresponding eigenvector ${\bf v}_n$. The eigenvector components give the {\sc Lap-Core} scores of the vertices.
\STATE If $\beta$ is known, identify the set of core vertices as the top $\beta n$ vertices with the largest {\sc Lap-Core} scores.
\STATE If $\beta$ is unknown, use the vector of {\sc Lap-Core} scores as an input to the {\sc Find-Cut} algorithm.
\end{algorithmic}
\end{algorithm}

\begin{algorithm}[h!]
\caption{{\sc LapSgn-Core}: Detects core--periphery structure in a graph using the signs of the components of the eigenvector corresponding to the smallest nonzero eigenvalue of the associated random-walk graph Laplacian.}
\label{LapSignAlgo}
\begin{algorithmic}[1]
\REQUIRE Adjacency matrix $A$ of the simple graph $G=(V,E)$ with $n$ vertices and $m$ edges.
\STATE Compute the random-walk Laplacian $L = D^{-1}A$.
\STATE Compute $\lambda_n$, which is the smallest nonzero
eigenvalue of $L$, and its corresponding eigenvector ${\bf v}_n$. The eigenvector components give the {\sc Lap-Core} scores of the vertices.
\STATE Set $z_i = \operatorname{sign}({\bf v}_n(i))$ for $i \in \{1,\dots,n\}$. Because the eigenvector ${\bf v}_n$ is determined up to a global sign change, do the following:
\STATE Let vertex $u_i \in V_C$ if $z_i \geq 0$, and otherwise let $u_i \in V_P$. Let $\eta_1$ denote the resulting value of the objective function \eqref{cpobjDens}. 
\STATE Let vertex $u_i \in V_C$ if $z_i \leq 0$, and otherwise let $u_i \in V_P$. Let $\eta_2$ denote the resulting value of the objective function \eqref{cpobjDens}.
\STATE If  $\eta_1 > \eta_2$, let the final solution be $u_i \in V_C$ if $z_i \geq 0$; otherwise, let $u_i \in V_P$.
\STATE If  $\eta_2 > \eta_1$, let the final solution be $u_i \in V_C$ if $z_i \leq 0$; otherwise, let $u_i \in V_P$.
\STATE If $\eta_1 = \eta_2$, there is no clear separation of the network vertices into core and periphery sets.
\end{algorithmic}
\end{algorithm}

To illustrate the above interplay between the top and bottom parts of the spectrum of the random-walk Laplacian matrix, we consider the SBM $G(p_{cc}, p_{cp},p_{pp},n_c,n_p)$, where we fix the core--core interaction probability $p_{cc}=0.8$ and the periphery--periphery interaction probability $p_{pp}=0.3$, but we vary the core--periphery interaction probability $p_{cp} \in [0.3, 0.7]$ in increments of $0.1$. The goal of these numerical experiments, whose results we show in Fig.~\ref{fig:LaplacianSmLg}, is to demonstrate the ability of the bottom eigenvector ${\bf v}_n$ of $L$ to reveal a core--periphery separation when one exists. To help visualize our results, we also employ a two-dimensional representation of the network vertices in which the core vertices (i.e., the vertices in the set $V_C$) are concentrated within a disc centered at the origin and the peripheral vertices (i.e., the vertices in the set $V_P$) lie on a circular ring around the core vertices. In Fig.~\ref{fig:LaplacianSpect}, we plot the spectrum of the random-walk Laplacian associated to each of the $p_{cp}$ values in the above experiment. Note that we disregard the trivial eigenvector ${\bf v}_1 = {\bf \mb{1}}_{n}$ that corresponds to the trivial eigenvalue $ \lambda_1 = 1$ of $L$.

\begin{figure}[h!]
\begin{center}
\subfigure[$p_{cc}=0.8$, $p_{cp}=0.3$, $p_{pp}=0.3$]{
\includegraphics[width=0.21\columnwidth]{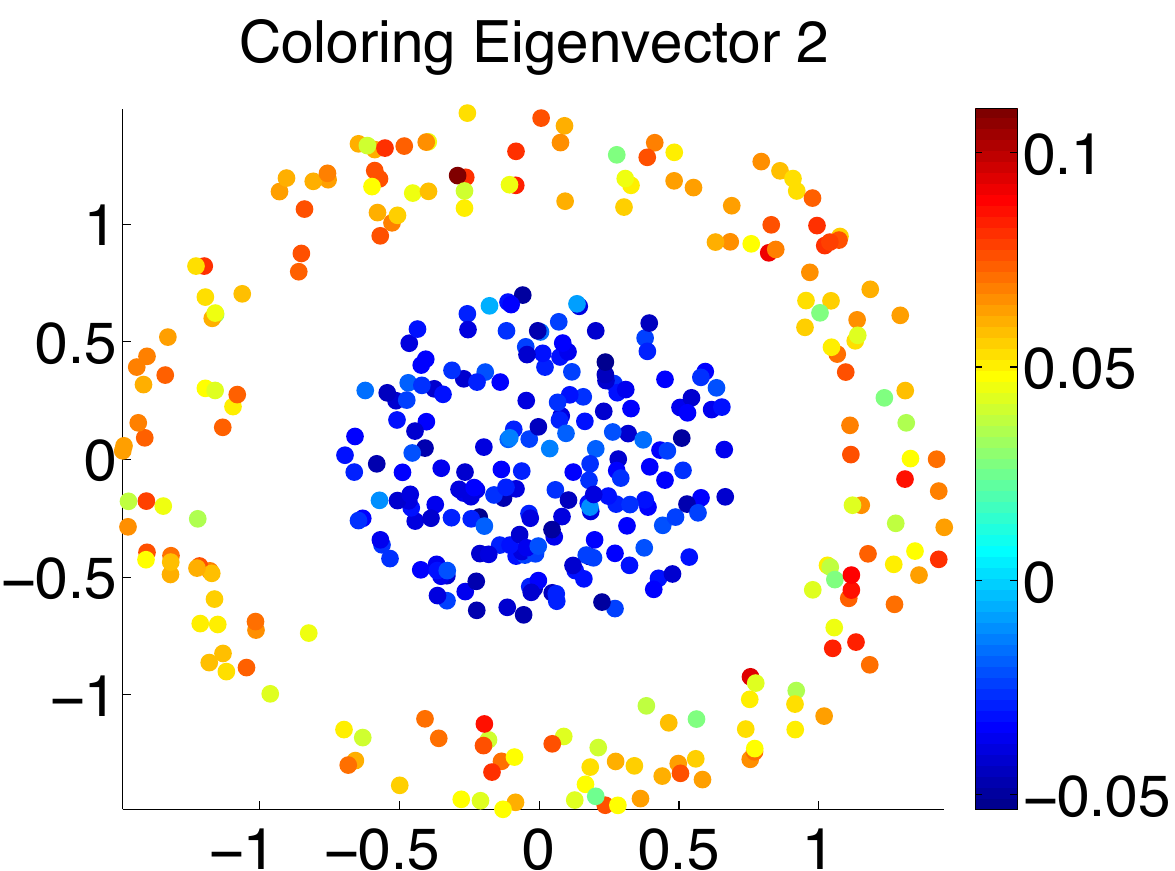}
\includegraphics[width=0.21\columnwidth]{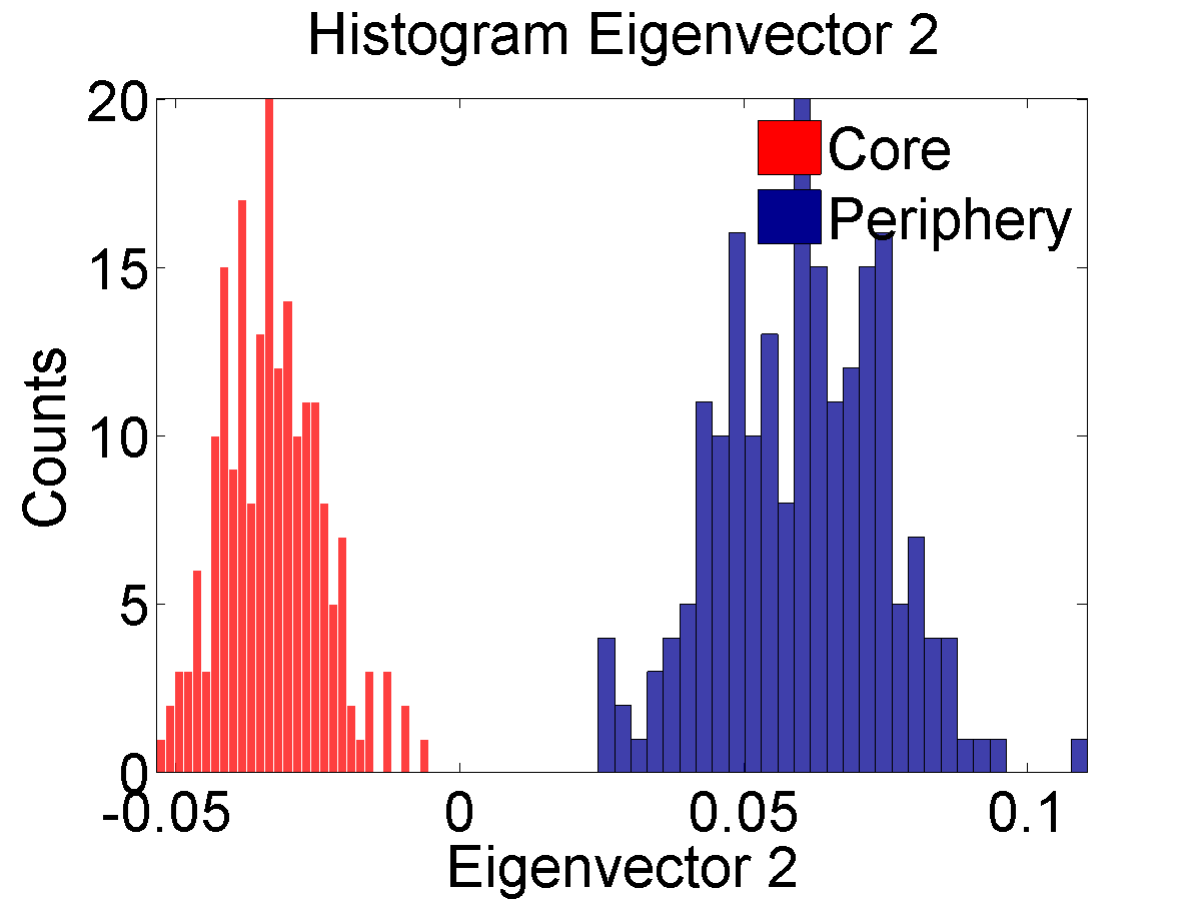}
\includegraphics[width=0.21\columnwidth]{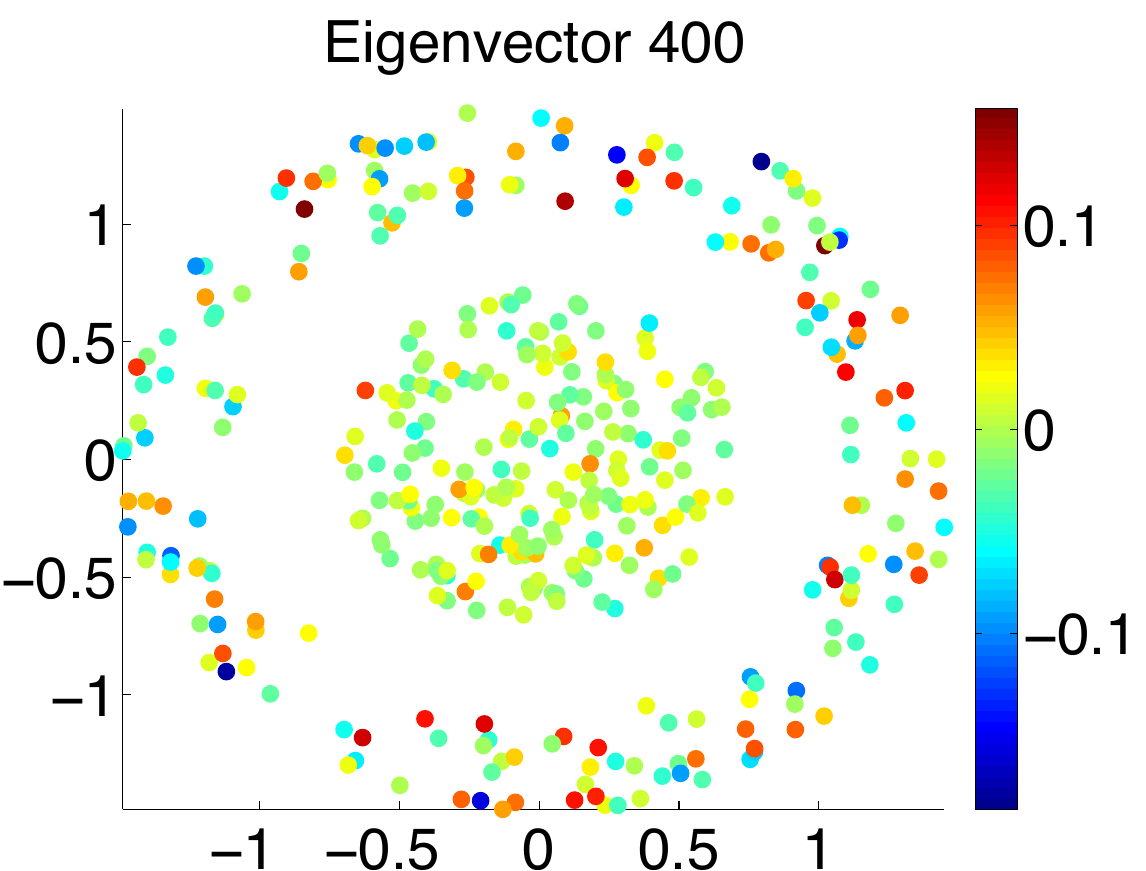}
\includegraphics[width=0.21\columnwidth]{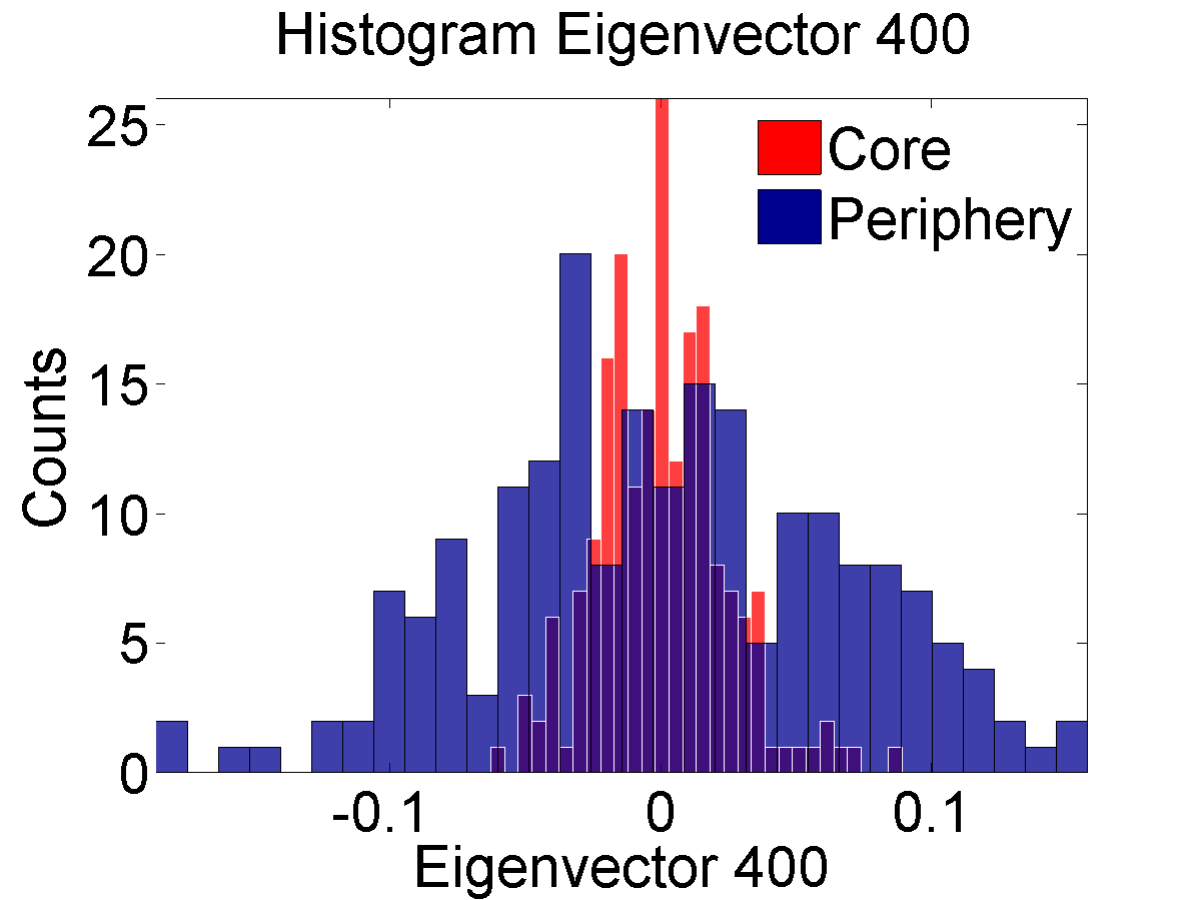}}

\subfigure[$p_{cc}=0.8$, $p_{cp}=0.4$, $p_{pp}=0.3$]{
\includegraphics[width=0.21\columnwidth]{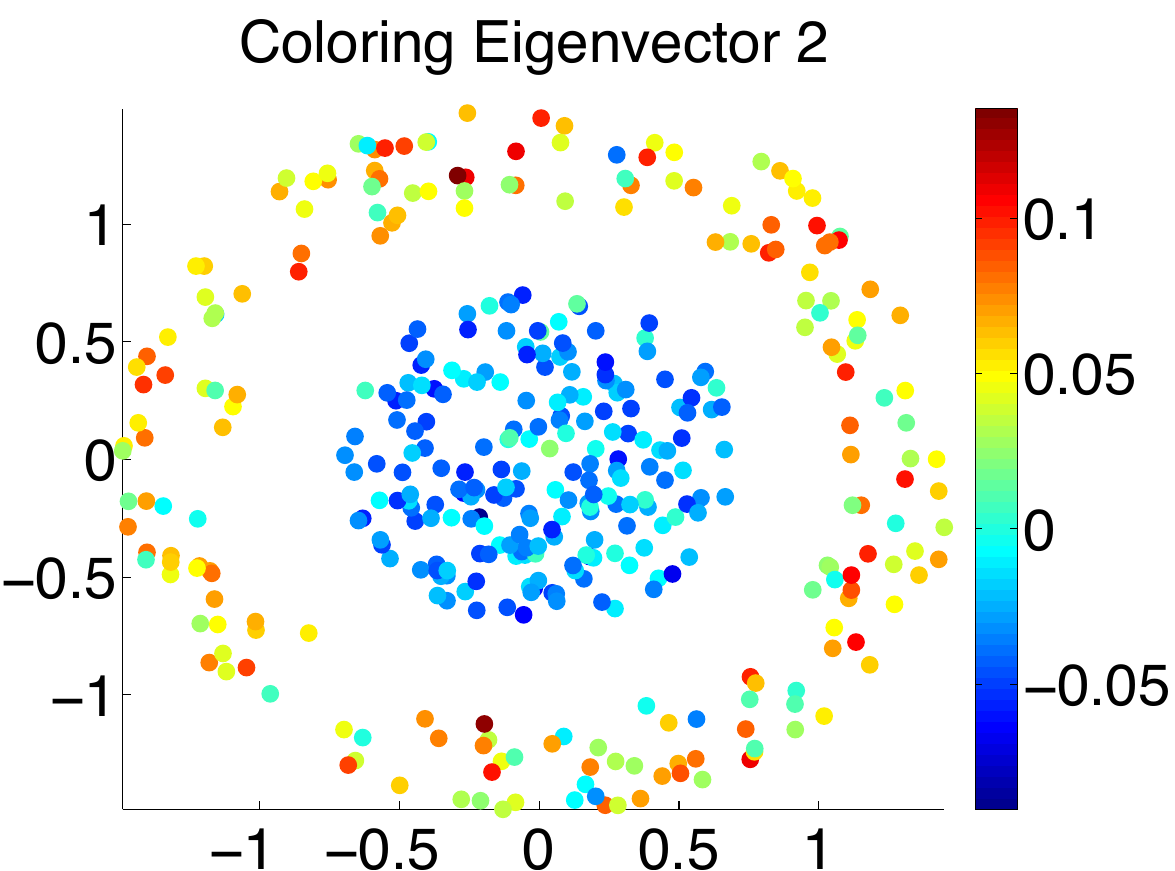}
\includegraphics[width=0.21\columnwidth]{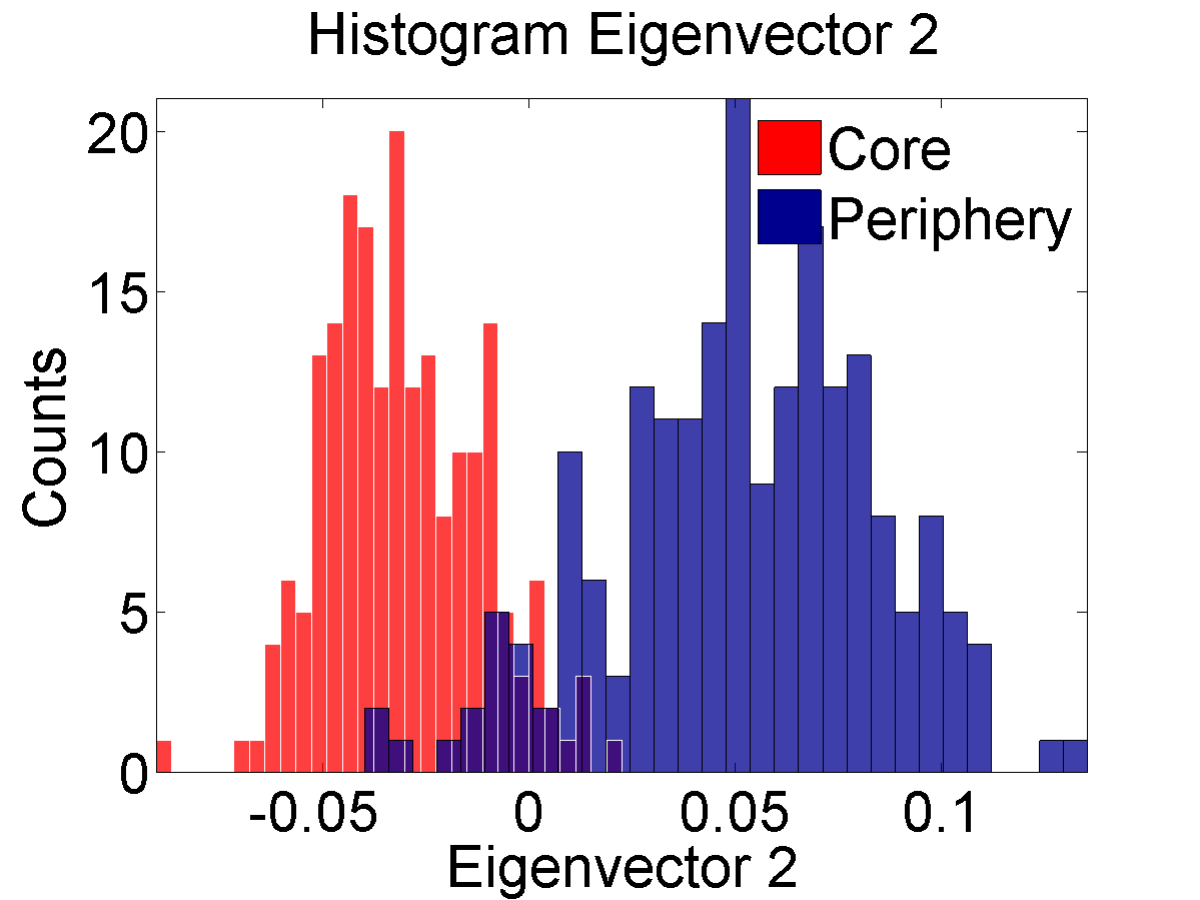}
\includegraphics[width=0.21\columnwidth]{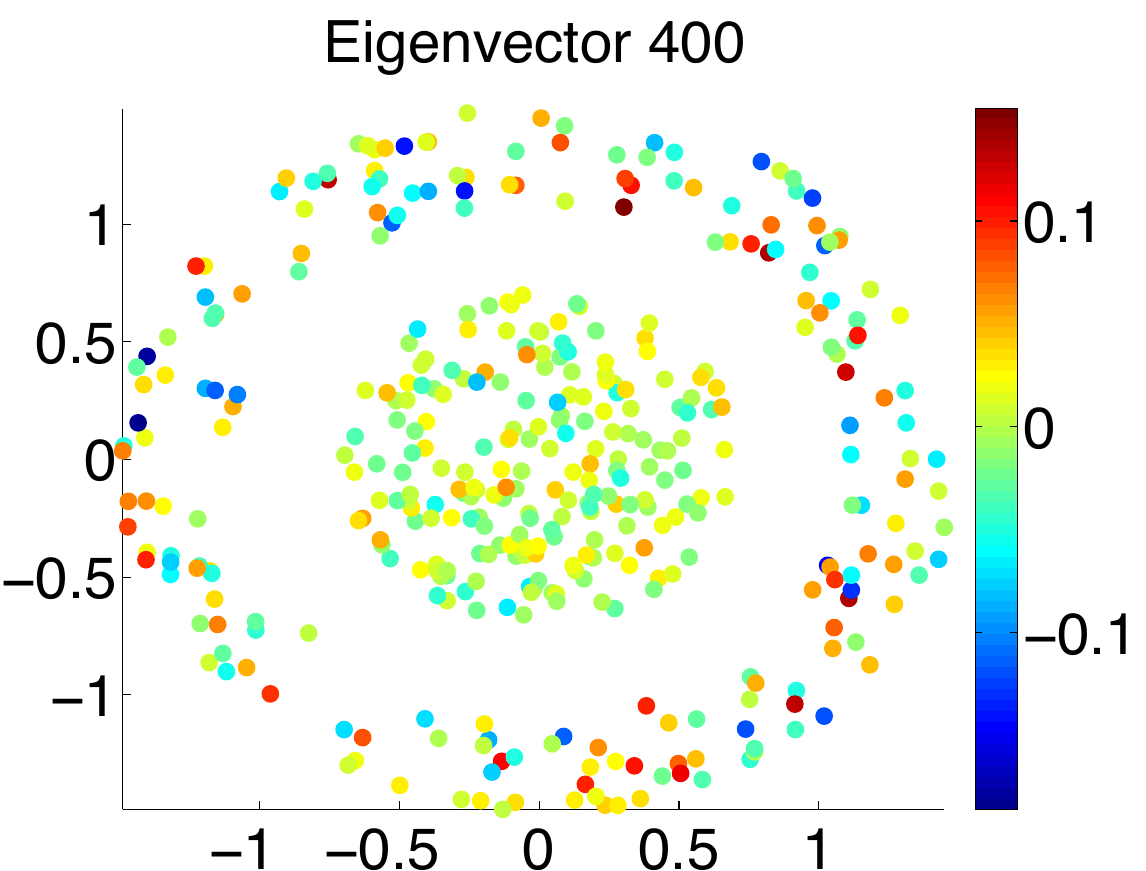}
\includegraphics[width=0.21\columnwidth]{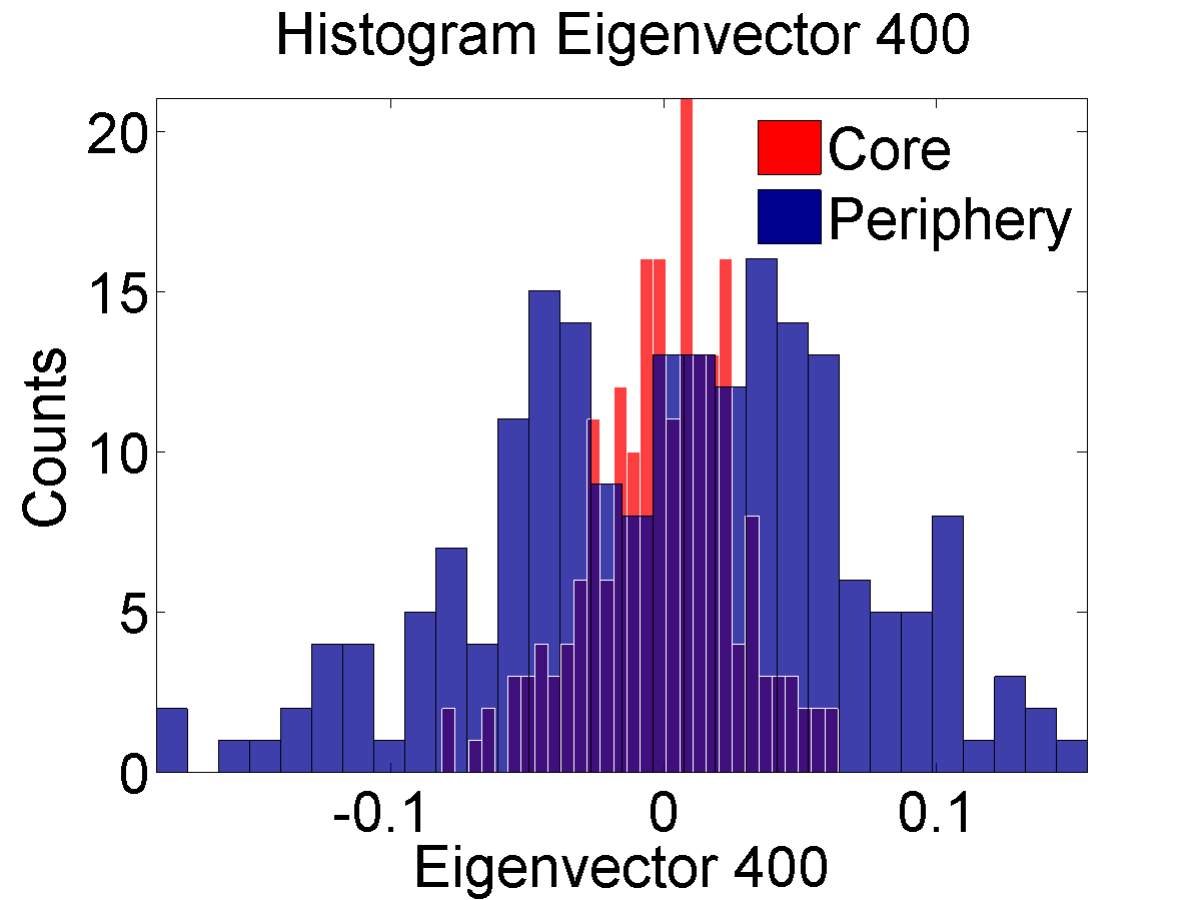}}

\subfigure[$p_{cc}=0.8$, $p_{cp}=0.5$, $p_{pp}=0.3$]{
\includegraphics[width=0.21\columnwidth]{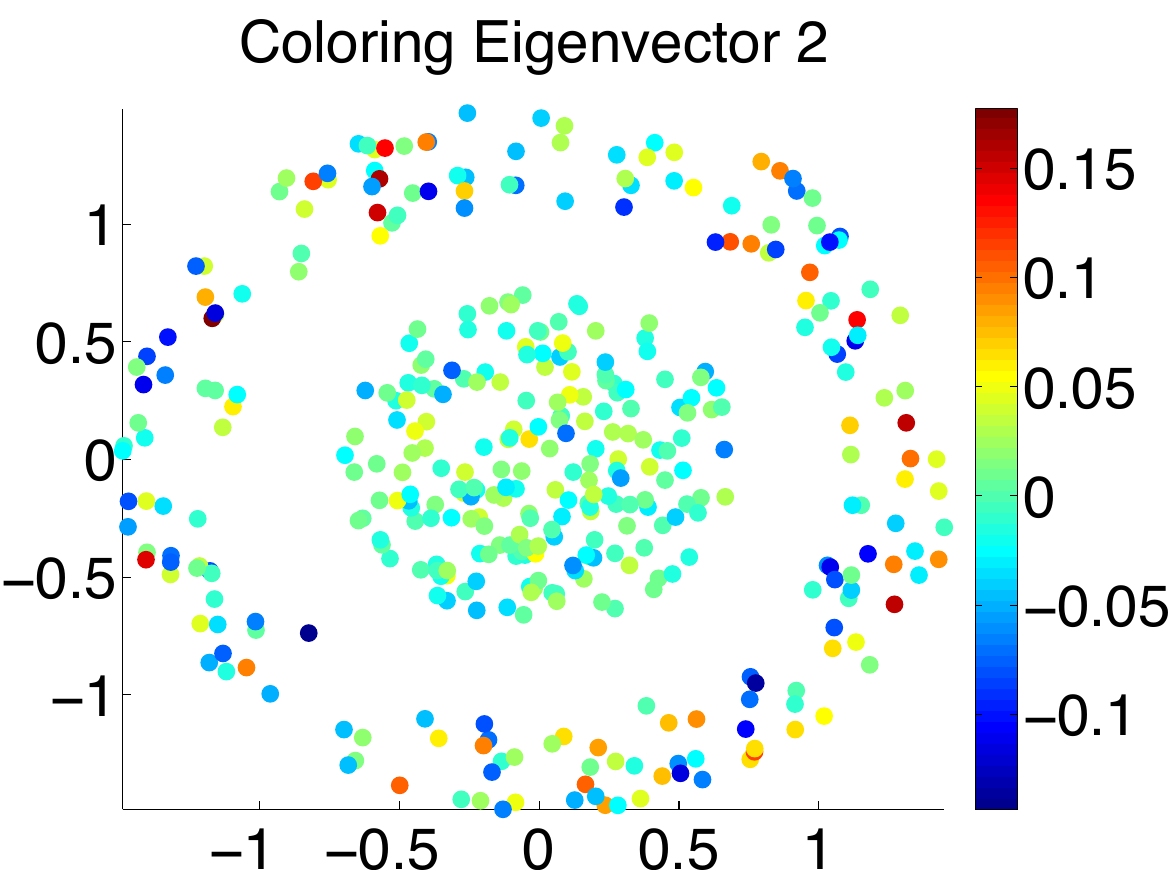}
\includegraphics[width=0.21\columnwidth]{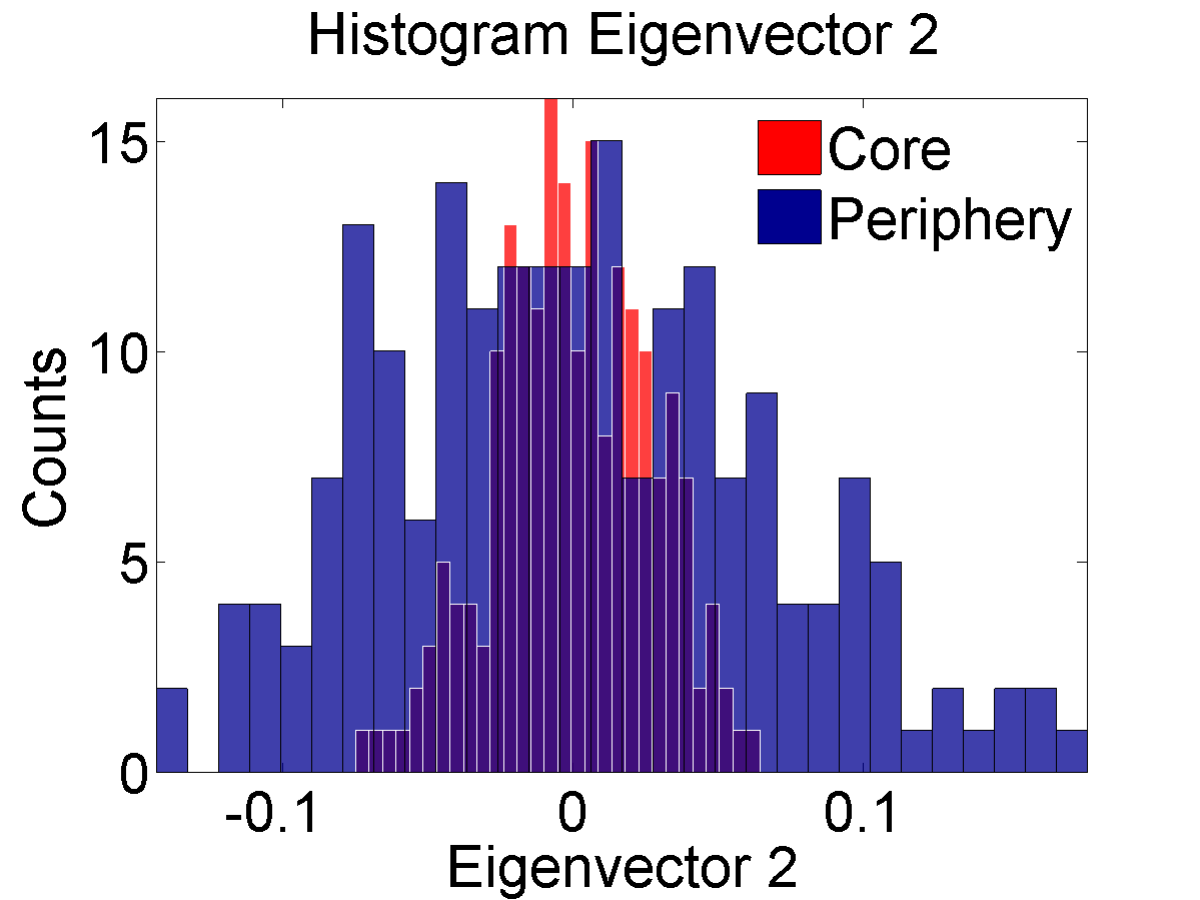}
\includegraphics[width=0.21\columnwidth]{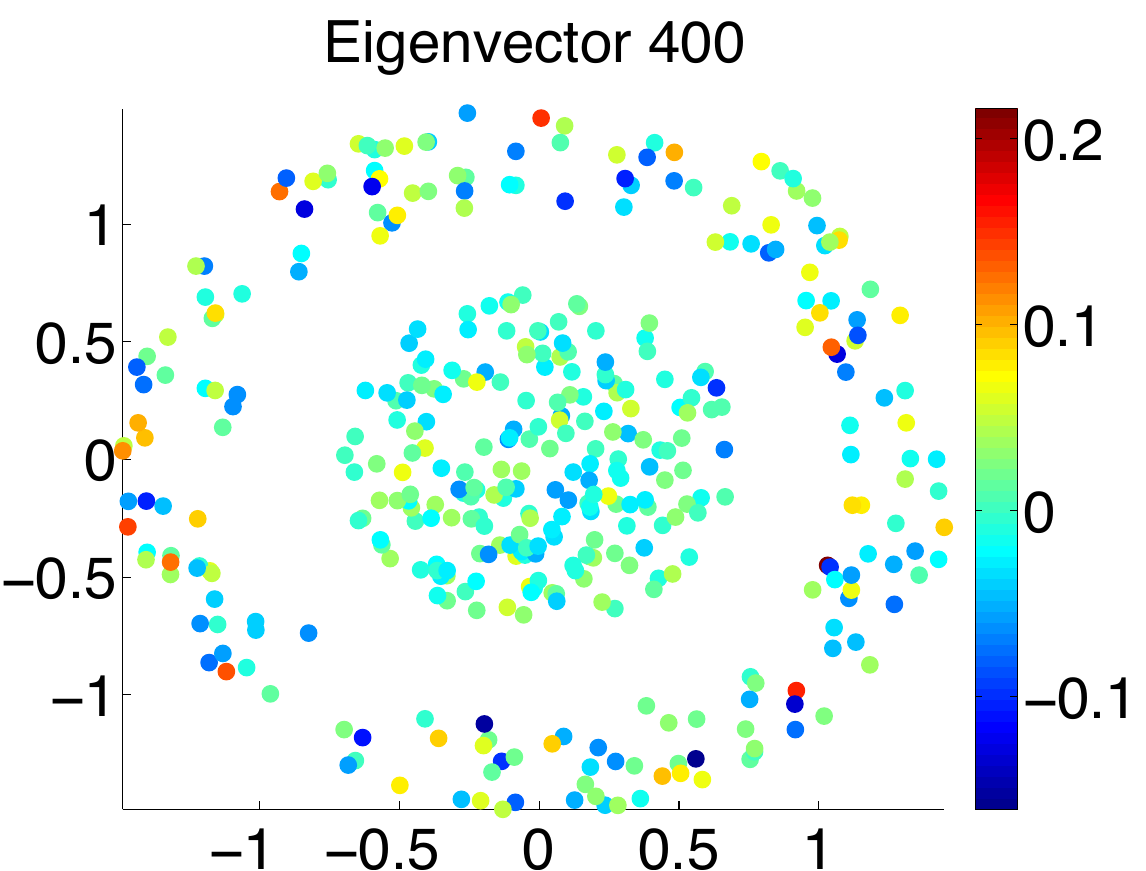}
\includegraphics[width=0.21\columnwidth]{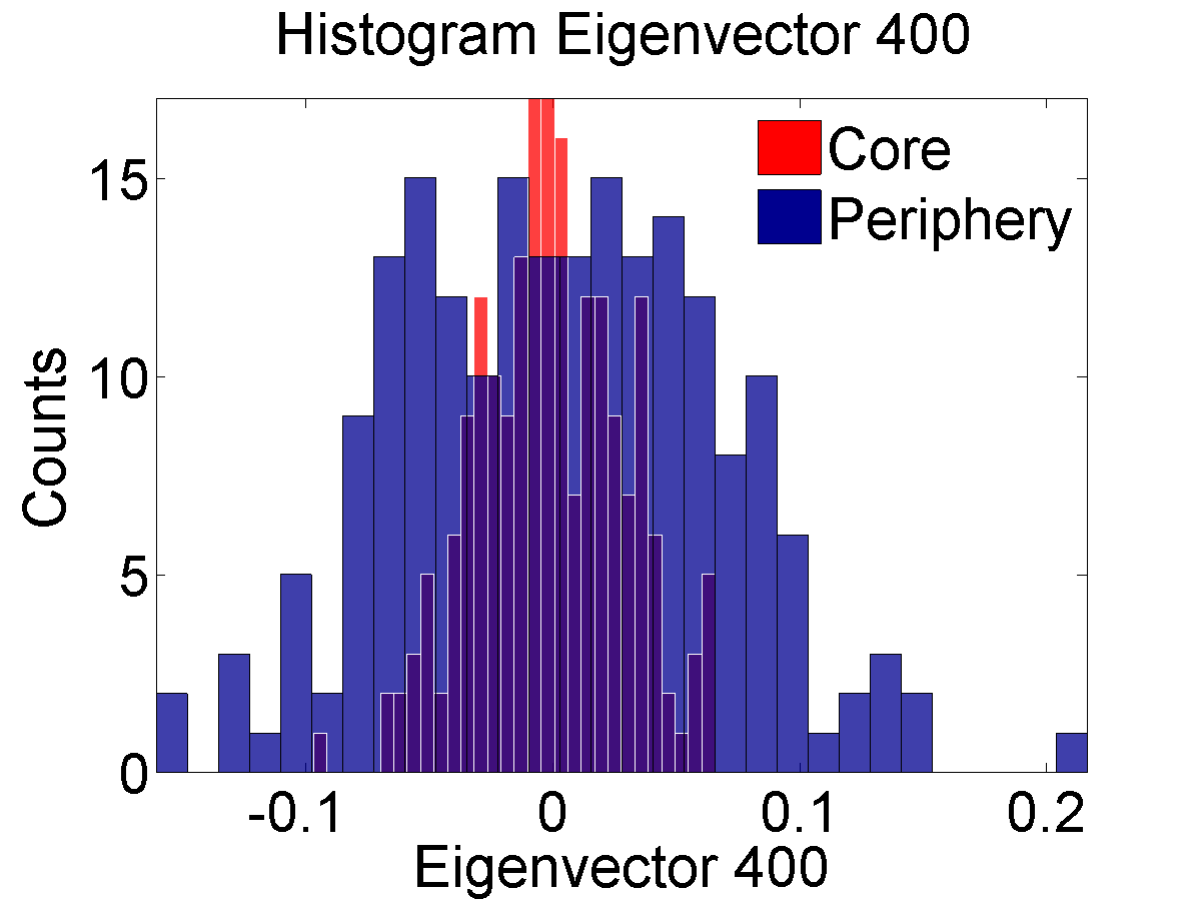}}

\subfigure[$p_{cc}=0.8$, $p_{cp}=0.6$, $p_{pp}=0.3$]{
\includegraphics[width=0.21\columnwidth]{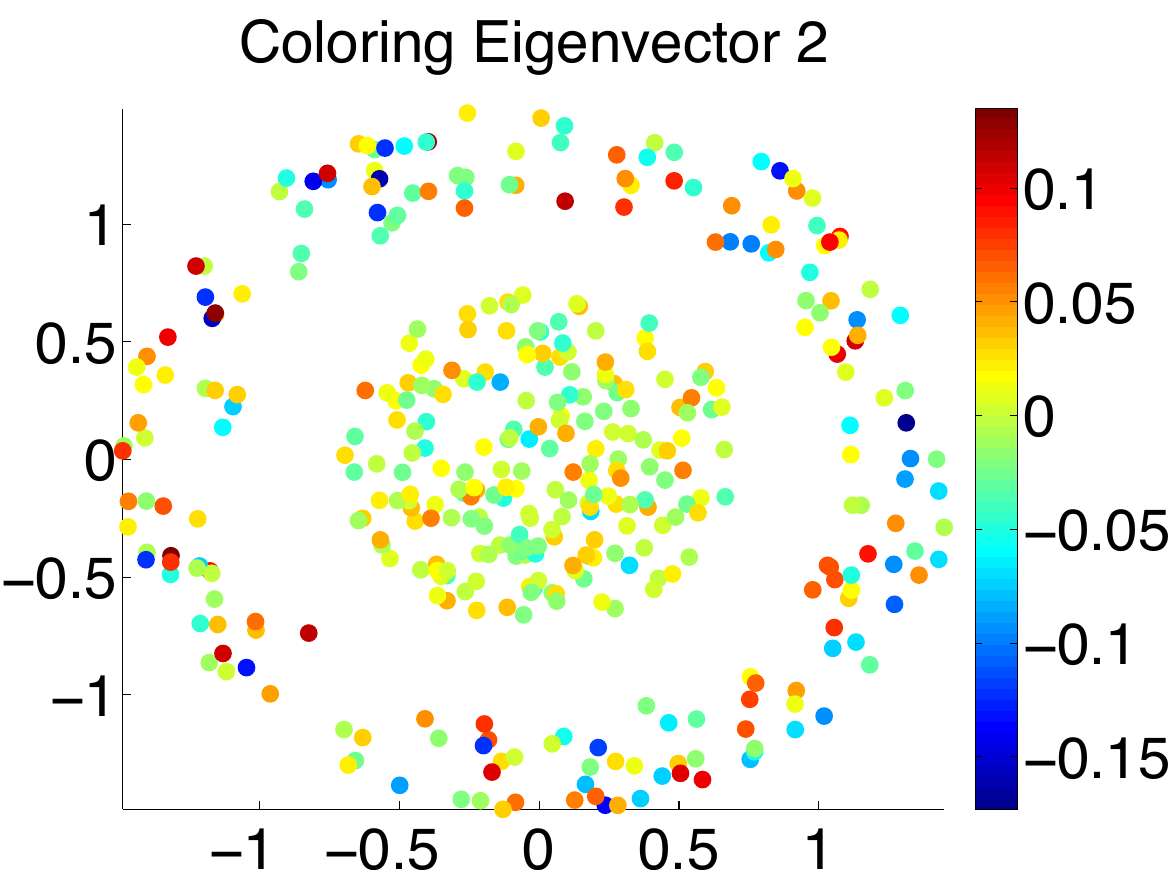}
\includegraphics[width=0.21\columnwidth]{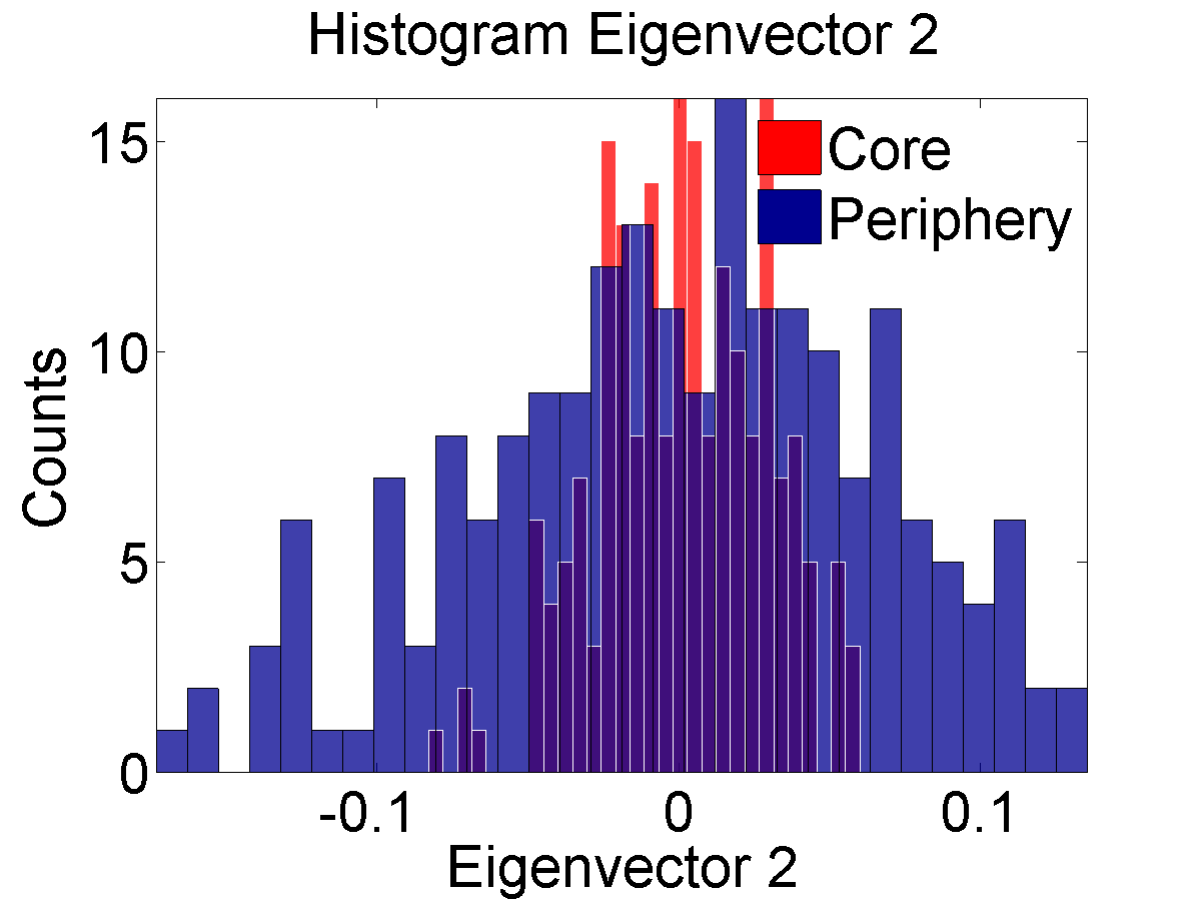}
\includegraphics[width=0.21\columnwidth]{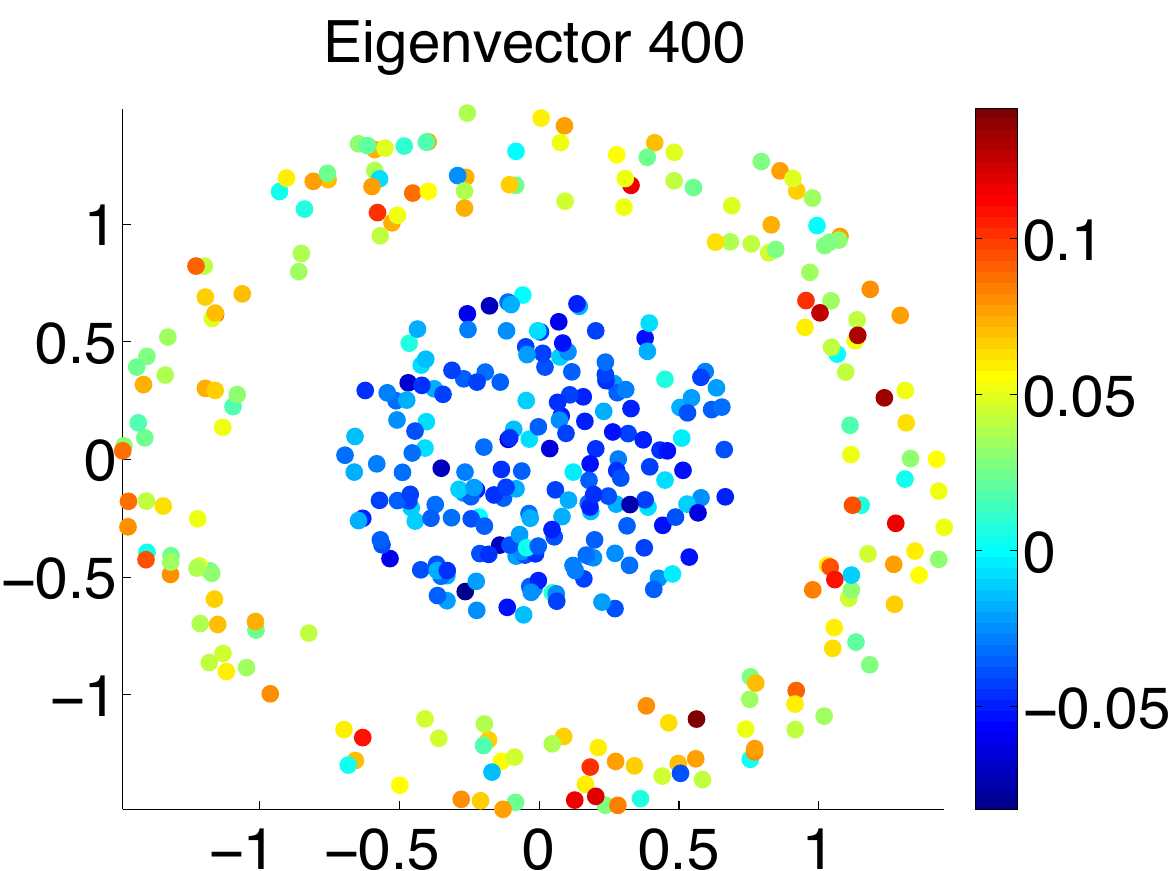}
\includegraphics[width=0.21\columnwidth]{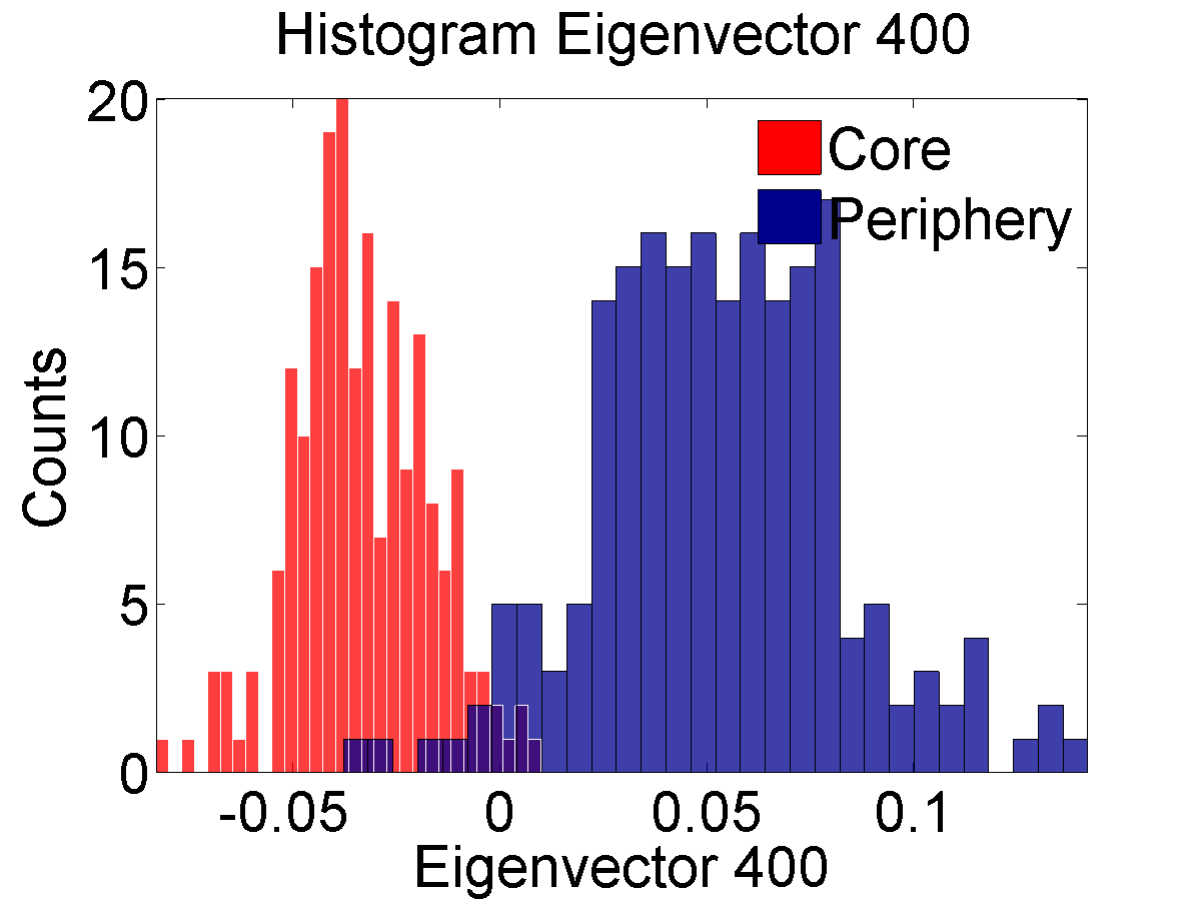}}

\subfigure[$p_{cc}=0.8$, $p_{cp}=0.7$, $p_{pp}=0.3$]{
\includegraphics[width=0.21\columnwidth]{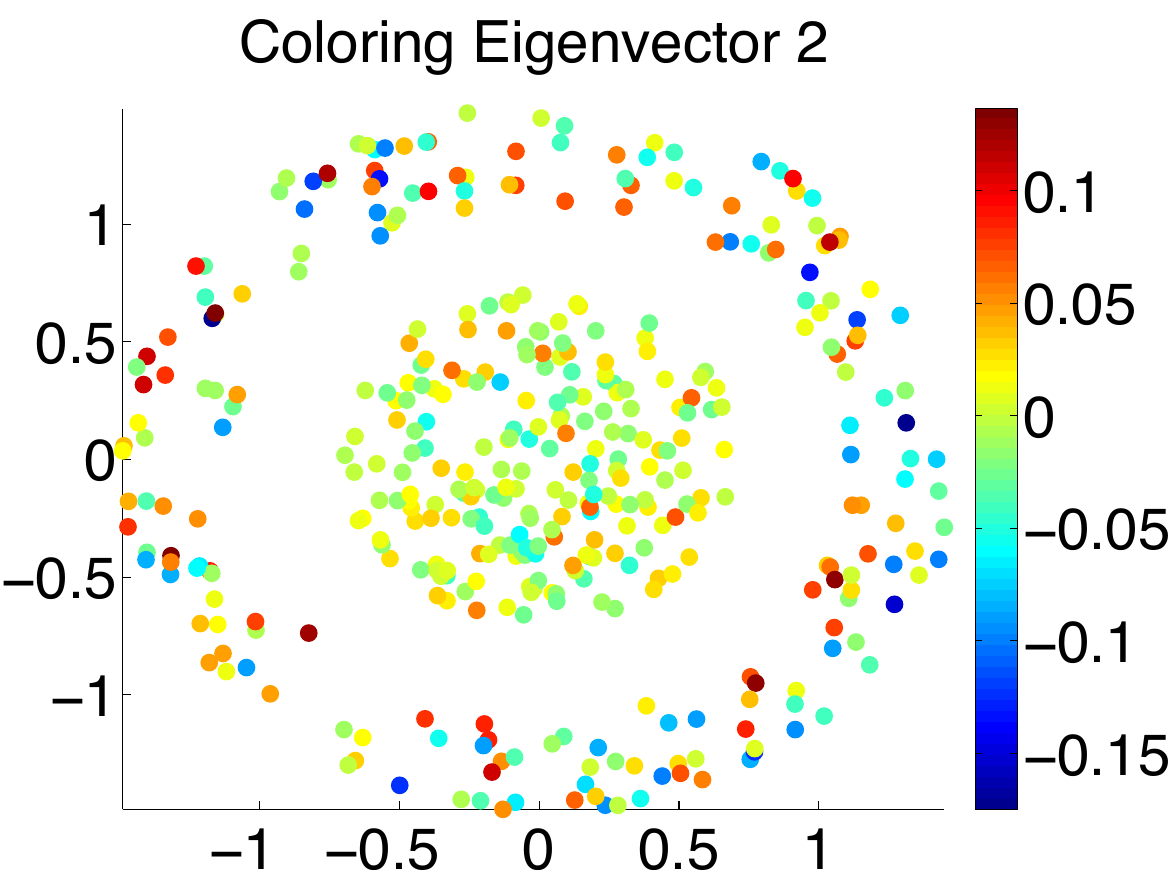}
\includegraphics[width=0.21\columnwidth]{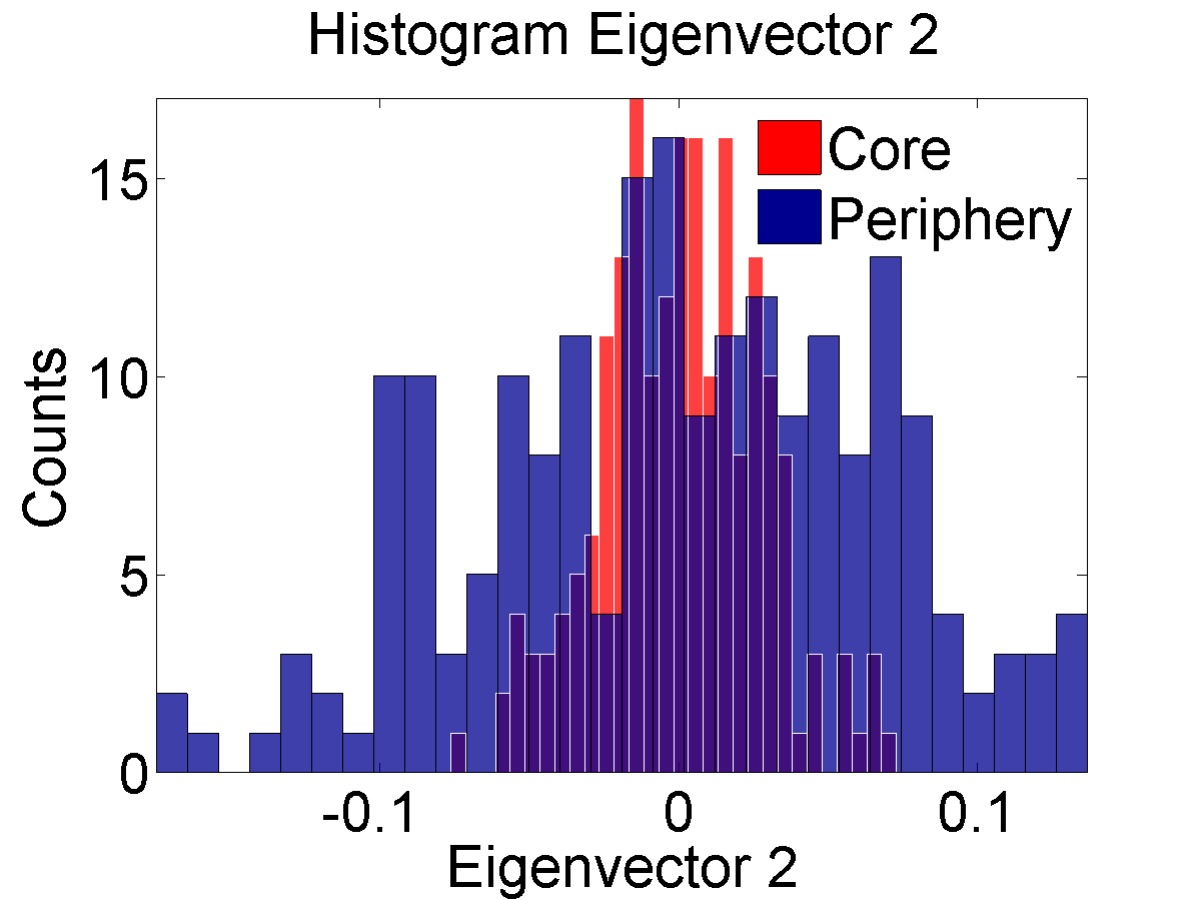}
\includegraphics[width=0.21\columnwidth]{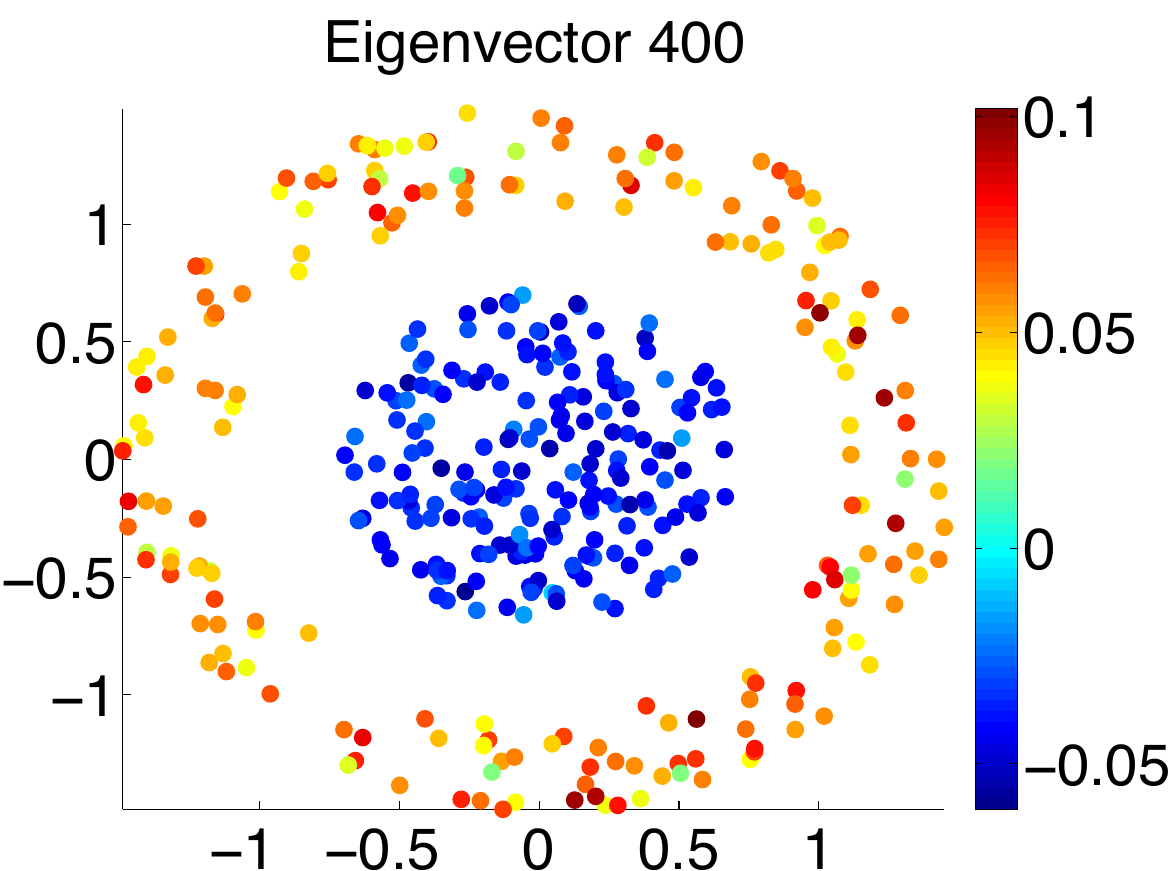}
\includegraphics[width=0.21\columnwidth]{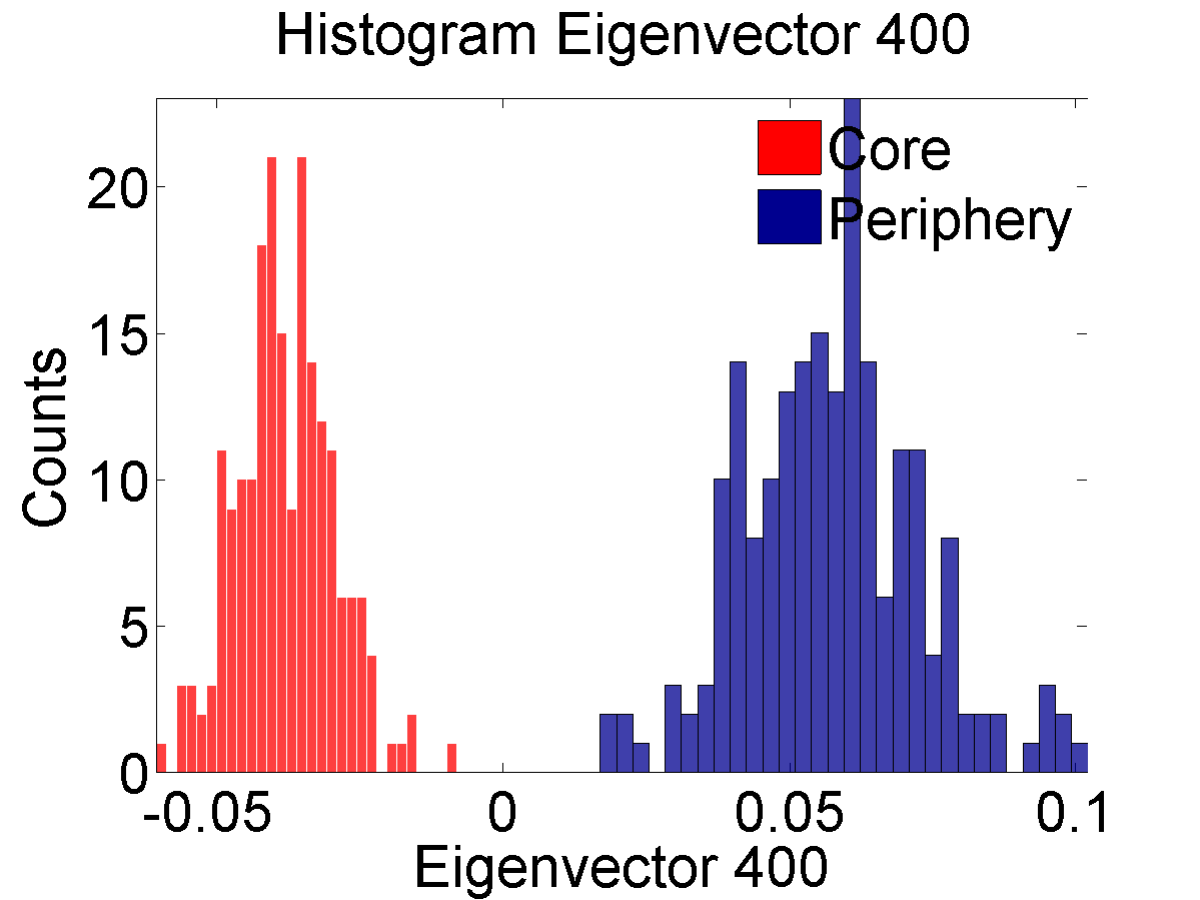}}

\end{center}
\caption{[Color] Our simulations illustrate the interplay between the top and bottom parts of the spectrum of the random-walk Laplacian matrix $L$ as a network transitions from a block model with block-diagonal ``community structure'' to a block model with core--periphery structure. Each row uses one network from the SBM $G(p_{cc}, p_{cp},p_{pp},n_c,n_p)$ with $n=400$ vertices (with 200 core and 200 peripheral vertices) with a fixed core--core interaction probability $p_{cc}=0.8$, a fixed periphery--periphery interaction probability $p_{pp}=0.3$, and a varying core--periphery interaction probability $p_{cp} \in [0.3, 0.7]$. We vary $p_{cp}$ in increments of $0.1$, so the top row has $p_{cp} = 0.3$, the second row has $p_{pp} = 0.4$, and so on.  The first and third columns give a coloring of a two-dimensional visualization of the graph vertices: the core vertices are contained in a disc that is centered at the origin, and the peripheral vertices lie on a ring around the core vertices. The second and fourth columns, respectively, show histograms of the entries of the eigenvectors ${\bf v}_2$ and ${\bf v}_{400}$. These eigenvectors correspond, respectively, to the largest (nontrivial) and smallest eigenvalues of the associated random-walk Laplacian matrix. The red color indicates core vertices, and the blue color indicates peripheral vertices. In Fig.~\ref{fig:LaplacianSpect}, we plot the spectrum associated to each of the above six networks. 
}
\label{fig:LaplacianSmLg}
\end{figure}

For small values of $p_{cp}$ (e.g., $p_{cp} = 0.3$ or $p_{cp} = 0.4$), the network does not exhibit core--periphery structure. Instead, it has a single community that is represented by the densely connected graph of vertices in the set $V_C$. As expected, the eigenvector ${\bf v}_2$ is able to highlight the separation between the $V_C$ and $V_P$ vertices very well, whereas the bottom eigenvector ${\bf v}_n$ is not particularly helpful. For $p_{cp}=0.5$, neither of the two eigenvectors above are able to capture the separation between $V_C$ and $V_P$. However, as $p_{cp}$ increases to $p_{cp}= 0.6$ and $p_{cp} = 0.7$ --- such that we are closer to the idealized block model in (\ref{nullmodel}) --- there now exists a densely-connected subgraph of $V_P$ in the complement graph $\bar{G}$. Instead of using the top nontrivial eigenvector $\bar{{\bf v}}_2$ of $\bar{L}$, we use the eigenvector ${\bf v}_n$ that corresponds to the smallest eigenvalue $\lambda_n$ of $G$, as this eigenvector is able to highlight core--periphery structure in $G$. In Fig.~\ref{fig:LaplacianSpect}(a), we show that there is a clear separation between $\lambda_2$ and the bulk of the spectrum. Similarly, Fig.~\ref{fig:LaplacianSpect}(e) illustrates a clear separation between $\lambda_n$ and the bulk of the spectrum. For intermediate values of $p_{cp}$, such a spectral gap is significantly smaller or even nonexistent.

\begin{figure}[h!]
\begin{center}
\subfigure[$p_{cc}=0.8$, $p_{cp} = 0.3$, $p_{pp}=0.3$]{
\includegraphics[width=0.40\columnwidth]{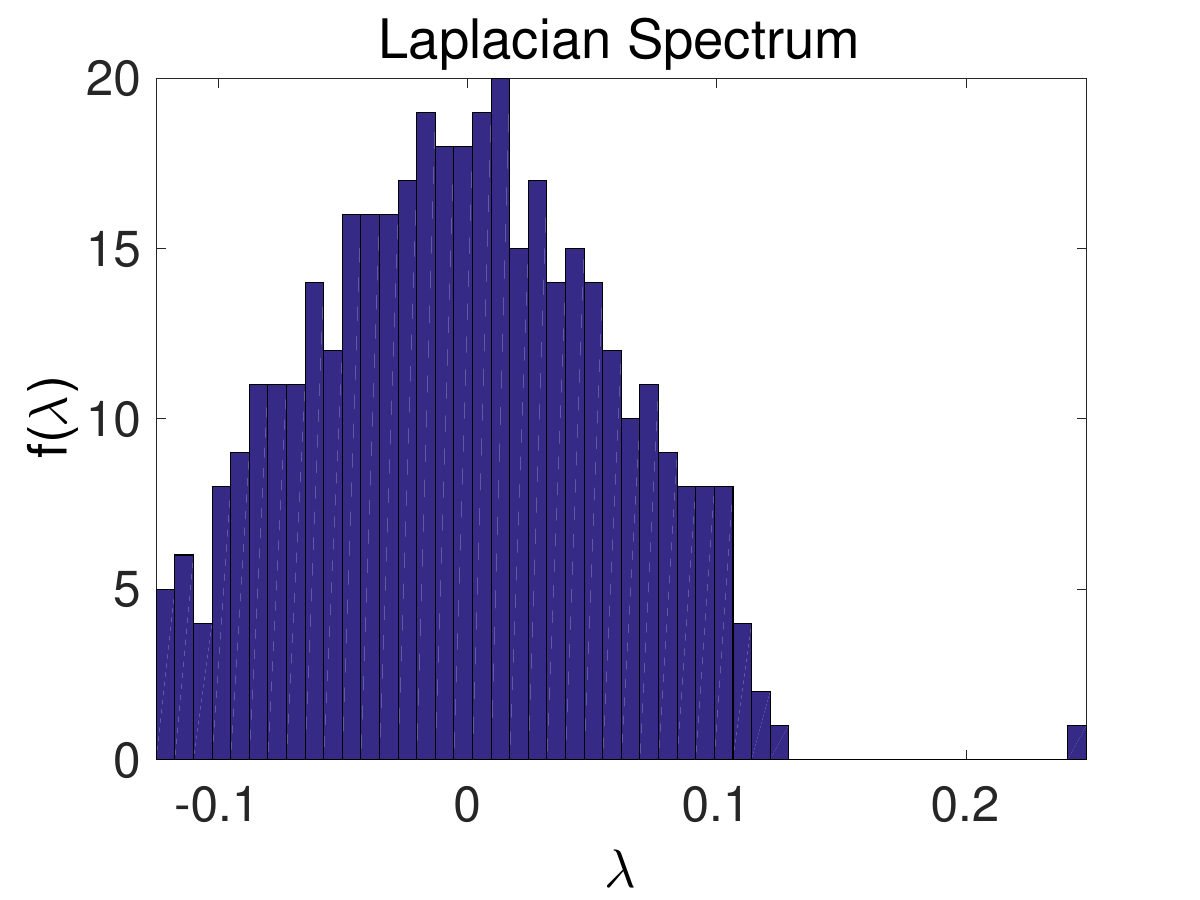}}\hspace{1cm}
\subfigure[$p_{cc}=0.8$, $p_{cp} = 0.4$, $p_{pp}=0.3$]{
\includegraphics[width=0.40\columnwidth]{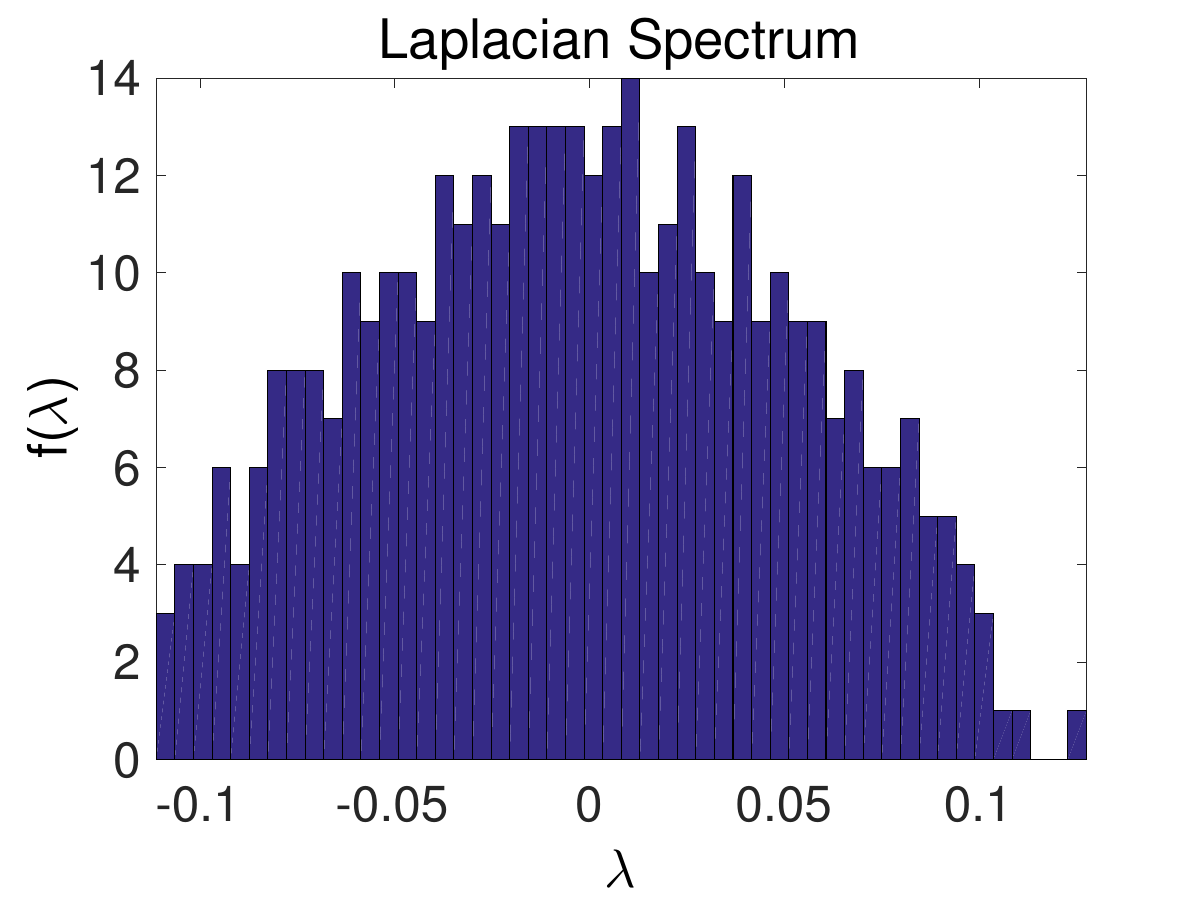}}
\subfigure[$p_{cc}=0.8$, $p_{cp} = 0.5$, $p_{pp}=0.3$]{
\includegraphics[width=0.40\columnwidth]{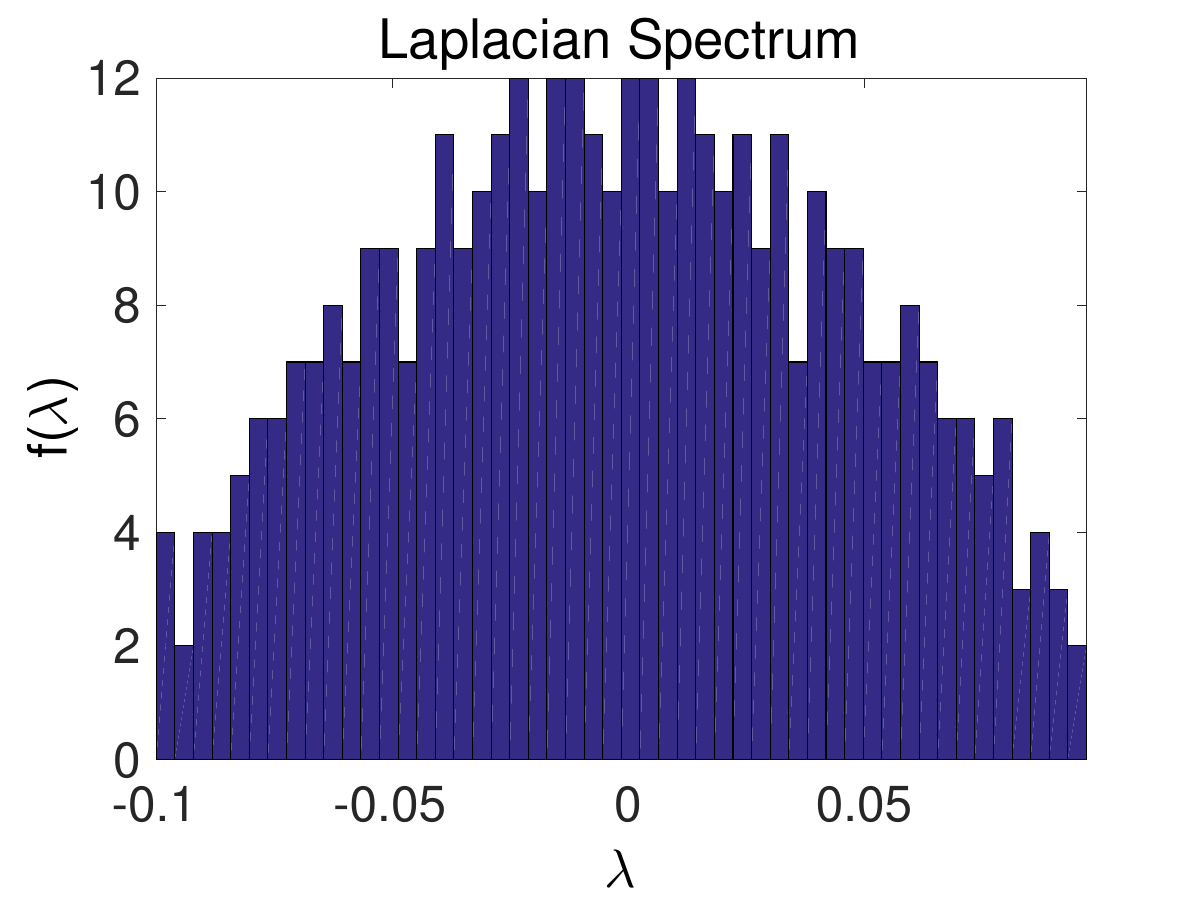}}\hspace{1cm}
\subfigure[$p_{cc}=0.8$, $p_6 = 0.6$, $p_{pp}=0.3$]{
\includegraphics[width=0.40\columnwidth]{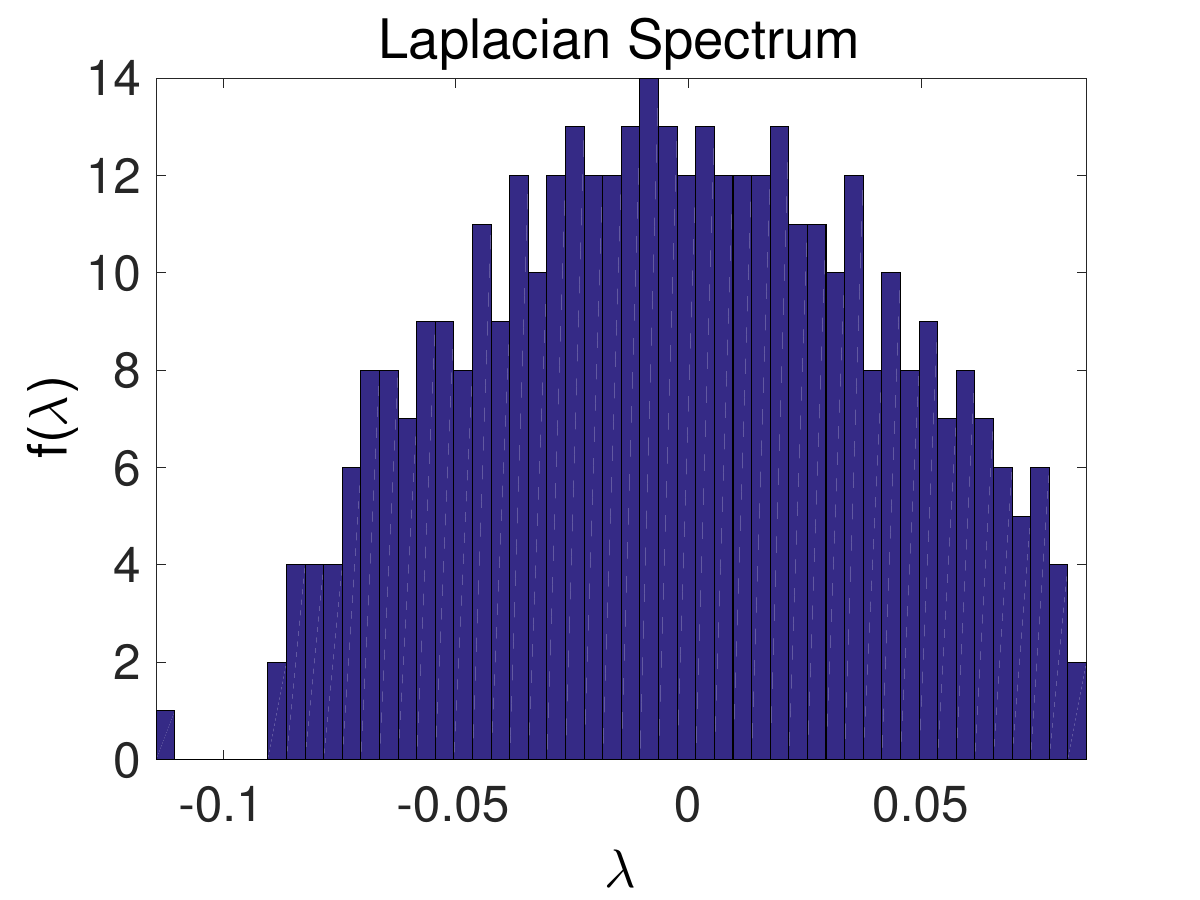}}\hspace{1cm}
\subfigure[$p_{cc}=0.8$, $p_6 = 0.7$, $p_{pp}=0.3$]{
\includegraphics[width=0.40\columnwidth]{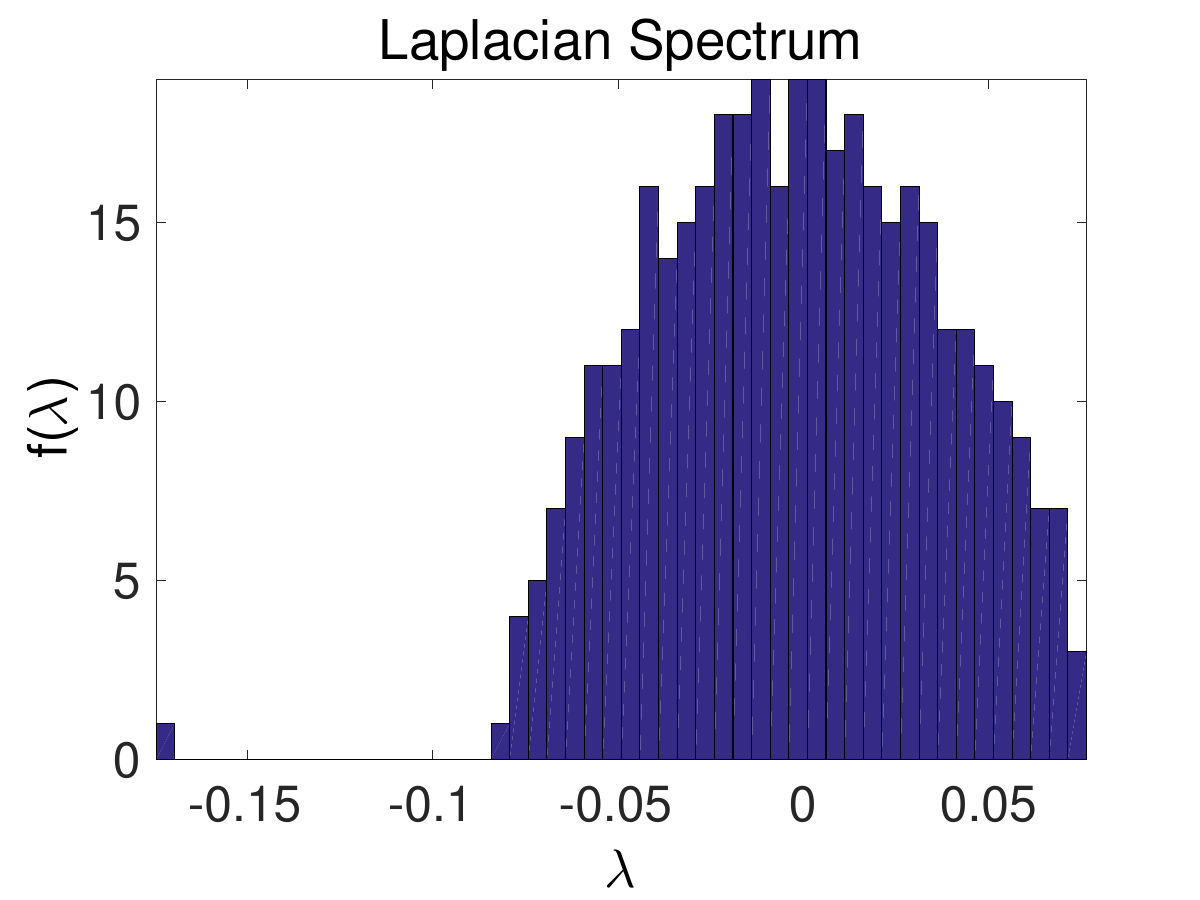}
}
\end{center}
\caption{Spectra $f(\lambda)$ of random-walk Laplacian matrices' eigenvalues $\lambda$ for several instances of the SBM $G(p_{cc},p_{cp},p_{pp},n_c,n_p)$. In Fig.~\ref{fig:LaplacianSmLg}, we plotted histograms of the eigenvectors corresponding to the smallest and second largest eigenvalues for these matrices.
} 
\label{fig:LaplacianSpect}
\end{figure}

In conclusion, for core--periphery detection, one should consider the eigenvector ${\bf v}_n$ as in Algorithm \ref{LapSignAlgo}, whereas one should use the eigenvector ${\bf v}_2$ when trying to detect a single dense community. As illustrated in Figure \ref{fig:LaplacianSpect}, one can also use the spectrum of the random-walk Laplacian as guidance. The former scenario is hinted by the presence of a spectral gap to the left of the bulk of the distribution, and the latter scenario is hinted by a spectral gap to the right of the bulk of the distribution.


\section{Numerical Experiments}  \label{sec:numSims}

In this section, we conduct a series of numerical experiments to compare different methods for detecting core--periphery structure and to assess the robustness of our methods to perturbations of a network. In Section \ref{sec:family_of_synthetic_networks}, we examine synthetic networks with a global community structure and local core--periphery structure. In Section \ref{sec:real_networks}, we apply our various methods for detecting core--periphery structure to several empirical data sets. In Appendix 3, we examine networks with ``planted'' high-degree vertices in the periphery, motivated by the recent work of \cite{XZhang2014} that demonstrated that degree-based separation is suboptimal for certain types of networks (in particular, ones with either a very weak or very strong core--periphery structure. Throughout this section, we use the term {\sc Degree-Core} to refer to the method of detecting core--periphery structure by simply computing the vertex degrees and then applying the {\sc FIND-CUT} method. In doing so, we assume that we have knowledge of the ``boundary'' sizes and thereby assume that there is a lower bound on the sizes of the core and periphery sets.

\begin{figure}[h!]
\begin{center}
\includegraphics[width=0.38\textwidth]{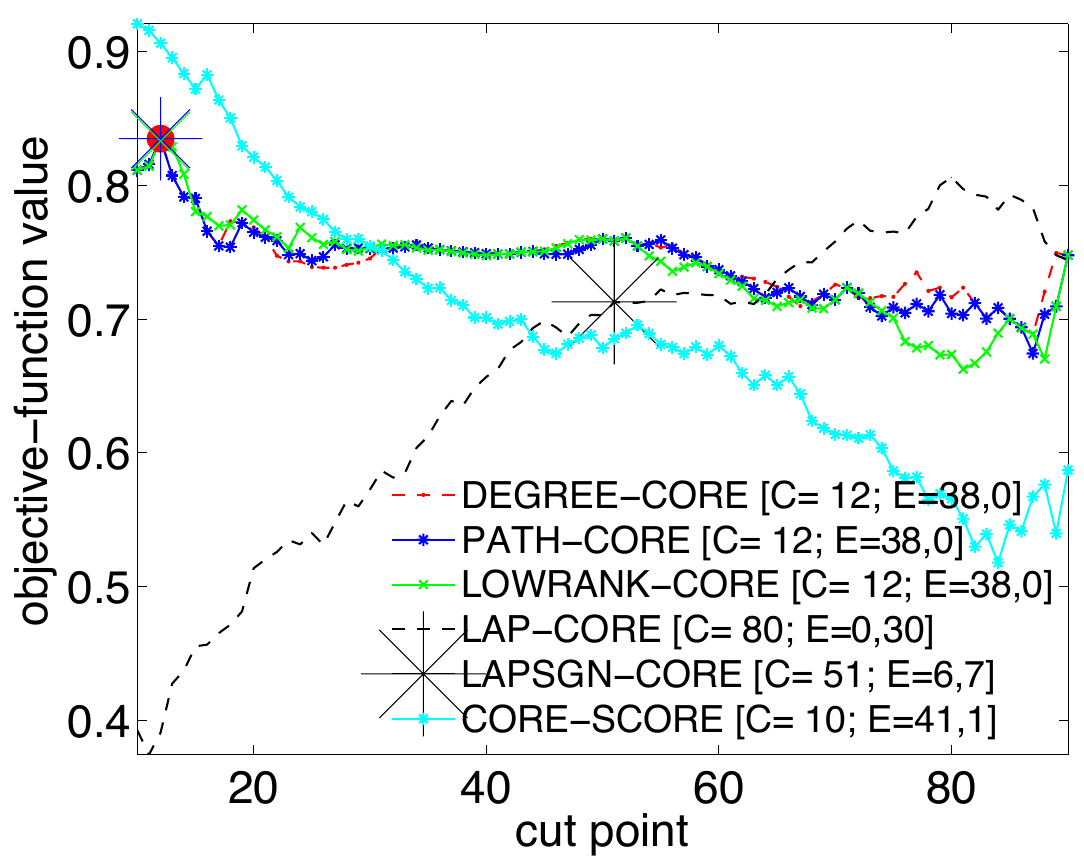}
\includegraphics[width=0.38\textwidth]{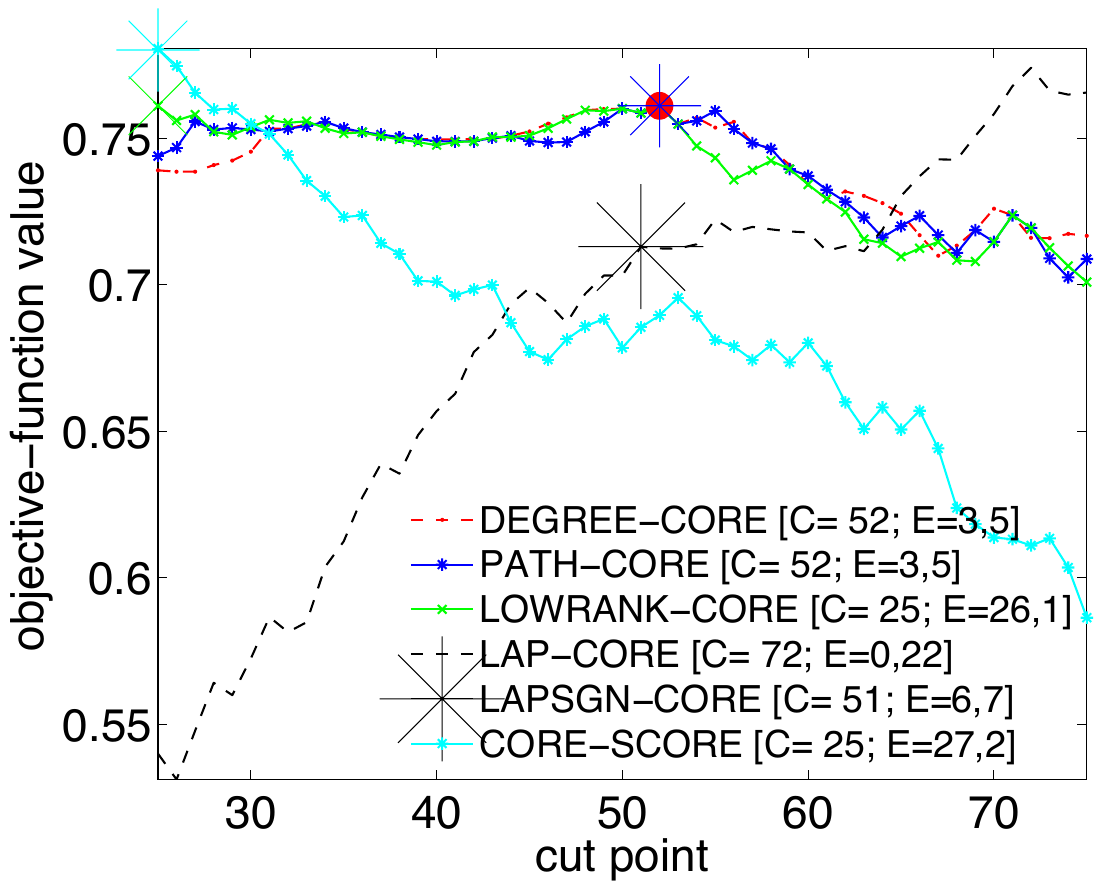}
\end{center}
\caption{Comparison of methods for detecting core--periphery structure for a graph from the ensemble $G(p_{cc},p_{cp},p_{pp},n_c,n_p)$ with $n = 100$ vertices (and, in particular, $n_c = 50$ core vertices and $n_p = 50$ peripheral vertices) and edge probabilities $(p_{cc}, p_{cp}, p_{pp}) = (0.5,0.5,0.27)$ for the objective function in Eq.~\eqref{eq:FindCut}. We assume a minimum size for the core and periphery sets of at least (left) 10 vertices and (right) 25 vertices. We mark the cut points that maximize the objective function in Eq.~\eqref{eq:FindCut} on the curves with a large asterisk for {\sc LapSgn-Core} and using other symbols whose colors match the colors of the corresponding curves for the other methods. The cut point refers to the number of core vertices. In the legends, ${\tt C}$ denotes the size of the core set that maximizes the objective function Eq.~\eqref{eq:FindCut}, and ${\bf E} = (y_1,y_2)$ denotes the corresponding $2$-vector of errors. The first component of ${\bf E}$ indicates the number of core vertices that we label as peripheral vertices, and the second indicates the number of peripheral vertices that we label as core vertices.
}
\label{fig:ex100_5_5_27}
\end{figure}

As we illustrate in Fig.~\ref{fig:ex100_5_5_27}, the {\sc LapSgn-Core} method yields the same results whether or not we impose lower bounds on the sizes of the core and periphery sets, as it does not rely on information about the size of the core set. As we discussed in Section \ref{sec:Laplacian}, it depends only on the sign of the entries of the top eigenvector of $L$.  All of the other methods that we examine suffer from a ``boundary effect,'' as the {\sc Find-Cut} algorithm finds a global optimum at (or very close to) the boundary of the search interval. 
When $\beta$ is known, we are planting core and periphery sets of known sizes, so we can examine the number of false-positive errors (i.e., vertices incorrectly assigned to the core set) and false-negative errors (i.e., vertices incorrectly assigned to the periphery set) for the various methods for detecting core--periphery structure. If we enforce a minimum size of 20 for the core and periphery sets, we find that {\sc LapSgn-Core} is the only method that yields satisfactory results from this perspective, because all other methods find a maximum of the objective function that lies close to the boundary. When we increase the lower bound of the core and periphery sets from 20 to 50, the {\sc Degree-Core} and {\sc Path-Core} methods yield very good results (in terms of the numbers of false positives and false negatives), followed by {\sc LapSgn-Core}, {\sc Lap-Core}, {\sc LowRank-Core}, and {\sc Core-Score}. When the fraction of vertices that belong to the core is known, then {\sc Degree-Core}, {\sc Path-Core}, and {\sc LowRank-Core} again yield the best results, followed by {\sc LapSgn-Core}, {\sc Lap-Core}, and {\sc Core-Score}.

Again evaluating the methods in terms of the number of false positives and false negatives, one can increase the accuracy of the methods to detect core--periphery structure by considering other local maxima of the objective function \eqref{cpobjDens}, especially if one is searching further away from the boundary. However, for these examples, the {\sc LapSgn-Core} and {\sc Core-Score} methods still yield unsatisfactorily results even when considering additional local minima. Interestingly, their objective functions are monotonic (increasing for the former and decreasing for the latter) with respect to the vector of sorted scores. After assigning vertices to a core set or peripheral set using any of the methods above, one can also add a post-processing step in the spirit of either the gradient-descent refinement step in non-convex optimization \cite{NesterovOpt} or Kernighan--Lin vertex swaps in community detection \cite{New06,Richardson2009}.

The critical eye may object that a separation based on vertex degree yields results that are as good as the other best-performing methods. However, the recent work of \cite{XZhang2014} demonstrated that {\sc Degree-Core} separation is suboptimal for certain types of networks, although {\sc Degree-Core} appears to be good enough when there is only a weak core--periphery structure. When a network's core and periphery are separated very strongly, examining vertex degree also appears to be reasonable. However, for pronounced core--periphery structure that is neither too weak nor too strong (i.e., in the most relevant situation for applications \cite{XZhang2014}), one needs to use methods that are more sophisticated than simply considering vertex degrees. Reference~\cite{puckmason} also includes a salient discussion of examining a network's core--periphery structure simply by computing vertex degrees. To illustrate the sensitivity of the {\sc Degree-Core} method to the presence of high-degree peripheral vertices, we perform a pair of numerical experiments in which we purposely plant high-degree vertices in the periphery set (see Appendix 3). In these experiments, the {\sc LapSgn-Core} method achieves the lowest number of errors, whereas {\sc Degree-Core} is one of the worst performers. In addition, one can see from Table \ref{tab:correlation_real_networks}, which gives the Pearson and Spearman correlation coefficients for various coreness measures, that the results of our proposed methods are often only moderately correlated with {\sc Degree-Core}, and they can thus return solutions that differ significantly from naive separation based on vertex degree. From the perspective of applications, we note the work of Kitsak et al. \cite{kitsak2010identification} on the identification of influential spreaders in networks. Kitsak et al. argued that the position of a vertex relative to the organization of a network determines its spreading influence to a larger extent than any local property (e.g., degree) of a vertex. Their findings also suggest that a network's core vertices (as measured by being in the $k$-core of a network with high $k$) are much better spreaders of information than vertices with merely high degree. Recent followup work has also suggested that many core spreaders need not have high degrees \cite{morone2015}, further highlighting the substantive difference between core vertices and high-degree vertices.


\subsection{A Family of Synthetic Networks}
\label{sec:family_of_synthetic_networks}

In this section, we detail our numerical results when applying our methods to a family of synthetic networks with a planted core--periphery structure.  We again examine the performance of the methods with respect to how many core and peripheral vertices they classify correctly.

We use variants of the random-graph ensemble that was introduced in \cite{puckmason}. Let $C_1(n,\beta,p,\kappa)$ denote a family of networks with the following properties: $n$ is the number of vertices, $\beta$ is the fraction of vertices in the core, and the edge probabilities for core--core, core--periphery, and periphery--periphery connections are given by $\mathbf{p}=(p_{cc}, p_{cp}, p_{pp})$ with $p_{cc}=\kappa^2 p$, $p_{cp}=\kappa p$, and $p_{pp}=p$. Let $C_2(n,\beta,p,\kappa)$ denote a family of networks, from a slight modification of the above model, in which the edge probabilities are now given by $\mathbf{p}=(p_{cc}, p_{cp}, p_{pp})$ with $p_{cc}=\kappa^2 p$, $p_{cp}=\kappa p$, and $p_{pp}=\kappa p$. In our simulations, we fix $n=100$, $\beta=0.5$, and $p=0.25$, and we examine core--periphery structure using each of the proposed methods. We average our results over 100 different instantiations of the above graph ensembles for each of the parameter values $\kappa=1.1, 1.2, \dots, 2$. We also compare our results with the {\sc Core-Score} algorithm introduced in \cite{puckmason}, and we remark that the results of {\sc Core-Score} correspond are for only single networks drawn from the above ensembles. The inefficient running time of the {\sc Core-Score} algorithm renders it infeasible to average over 100 different instantiations of a graph ensemble.

\begin{figure}[h!]
\begin{center}
\subfigure[Without knowledge of $\beta$; single experiment] 
{\includegraphics[width=0.38\textwidth]{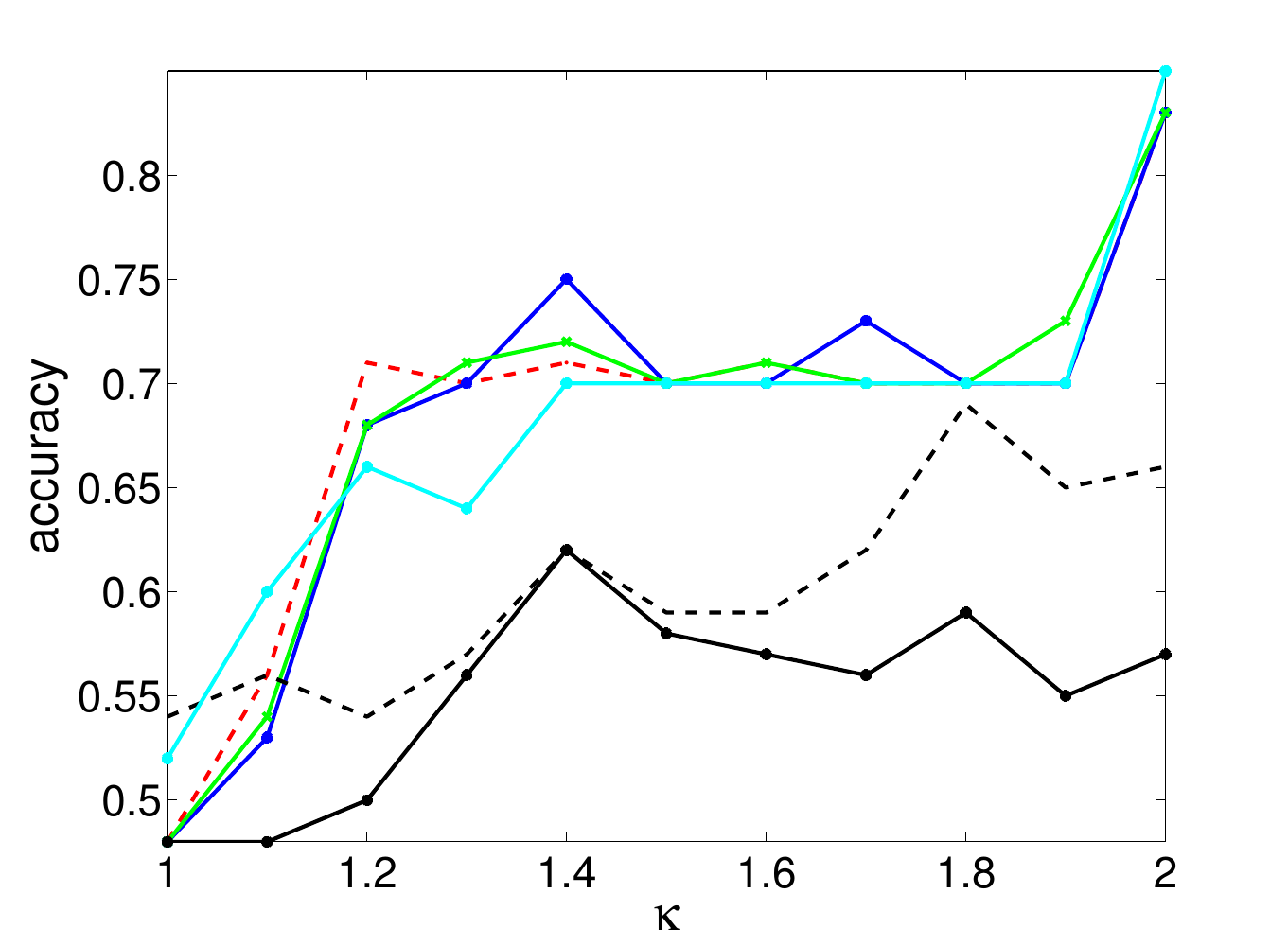}}
\hspace{1cm}
\subfigure[With knowledge of $\beta$; single experiment]
{\includegraphics[width=0.38\textwidth]{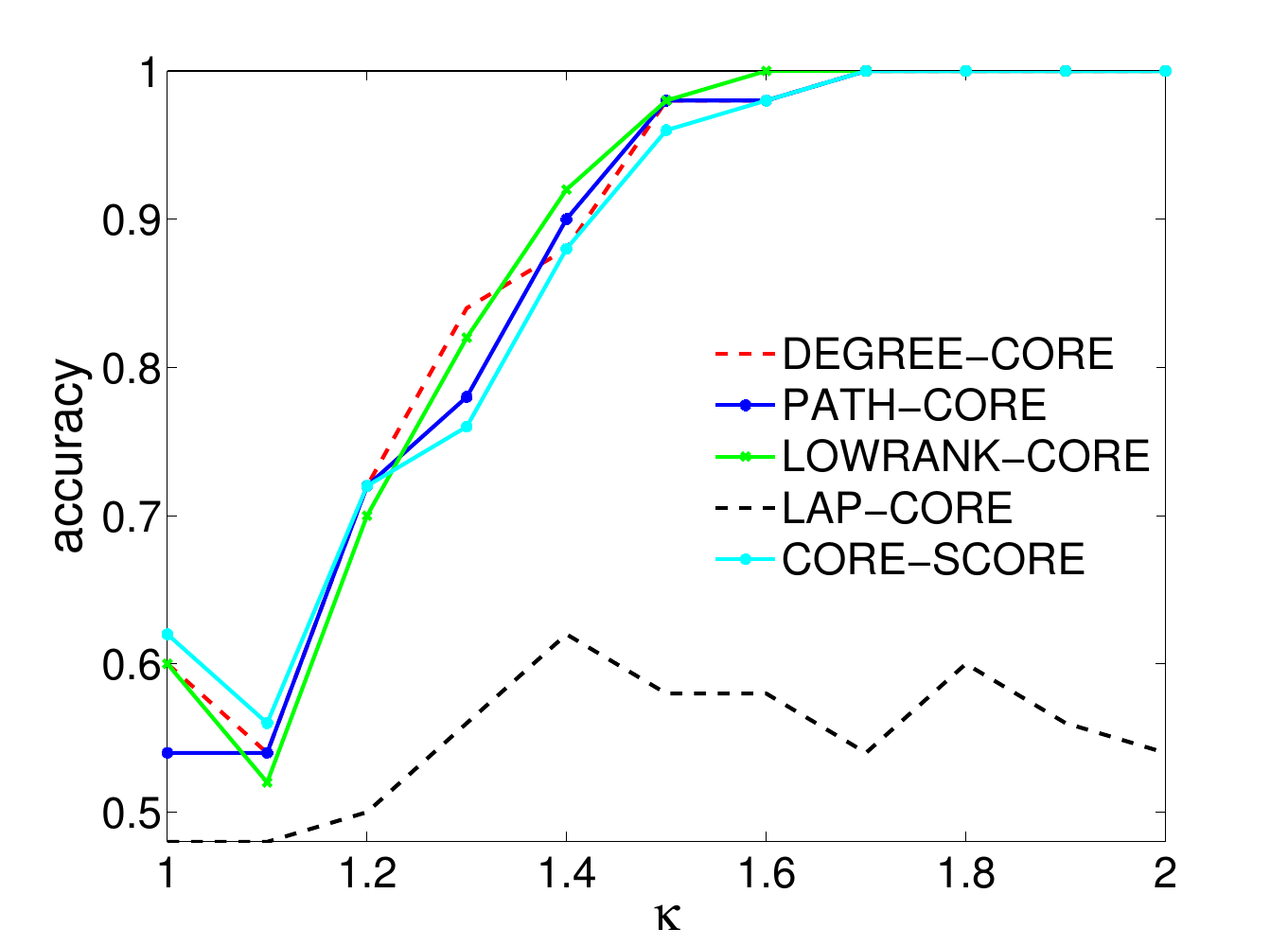}}
\subfigure[Without knowledge of $\beta$; averaged over 100 experiments]
{\includegraphics[width=0.38\textwidth]{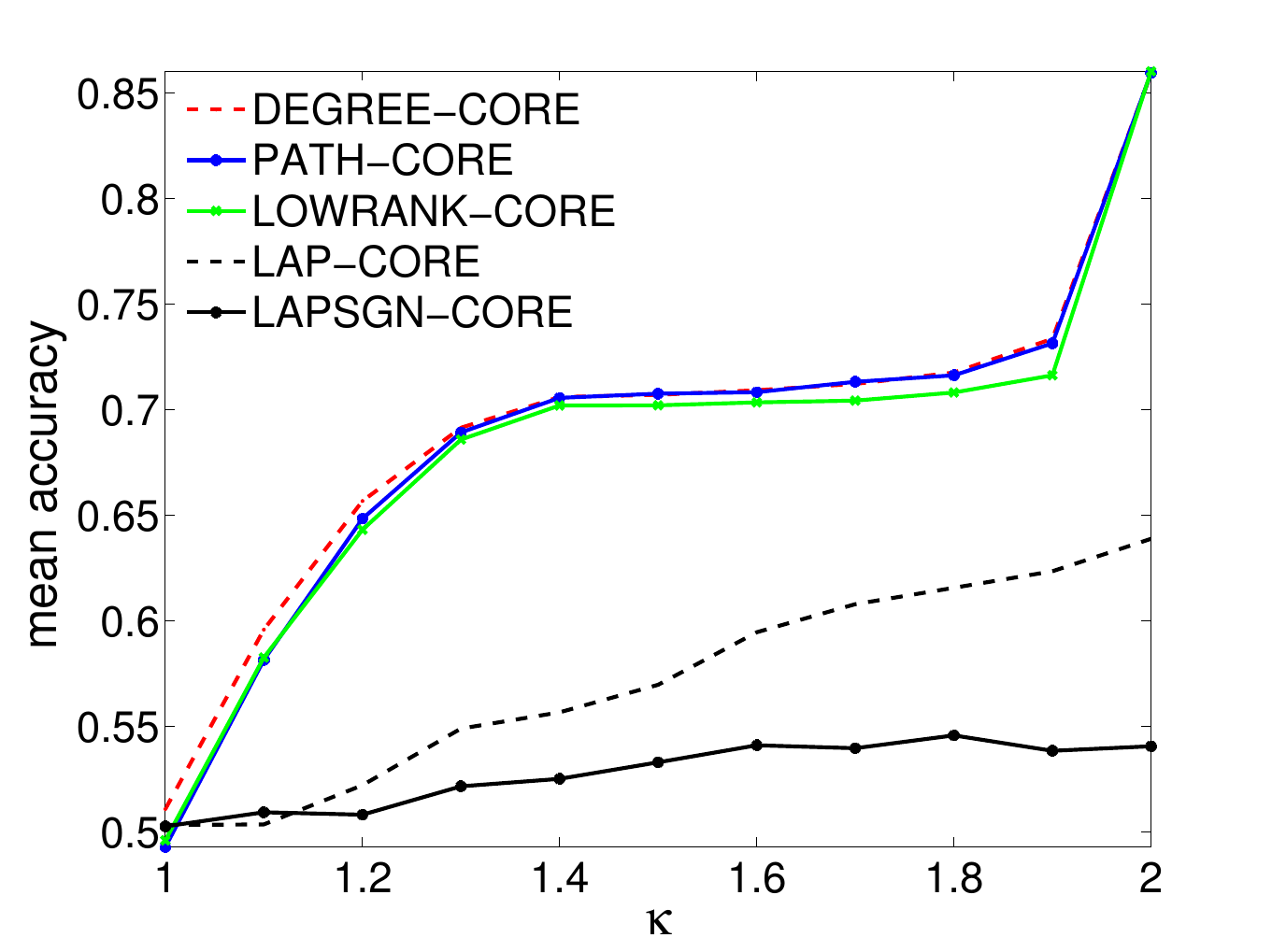}}
\hspace{1cm}
\subfigure[With knowledge of $\beta$; averaged over 100 experiments]
{\includegraphics[width=0.38\textwidth]{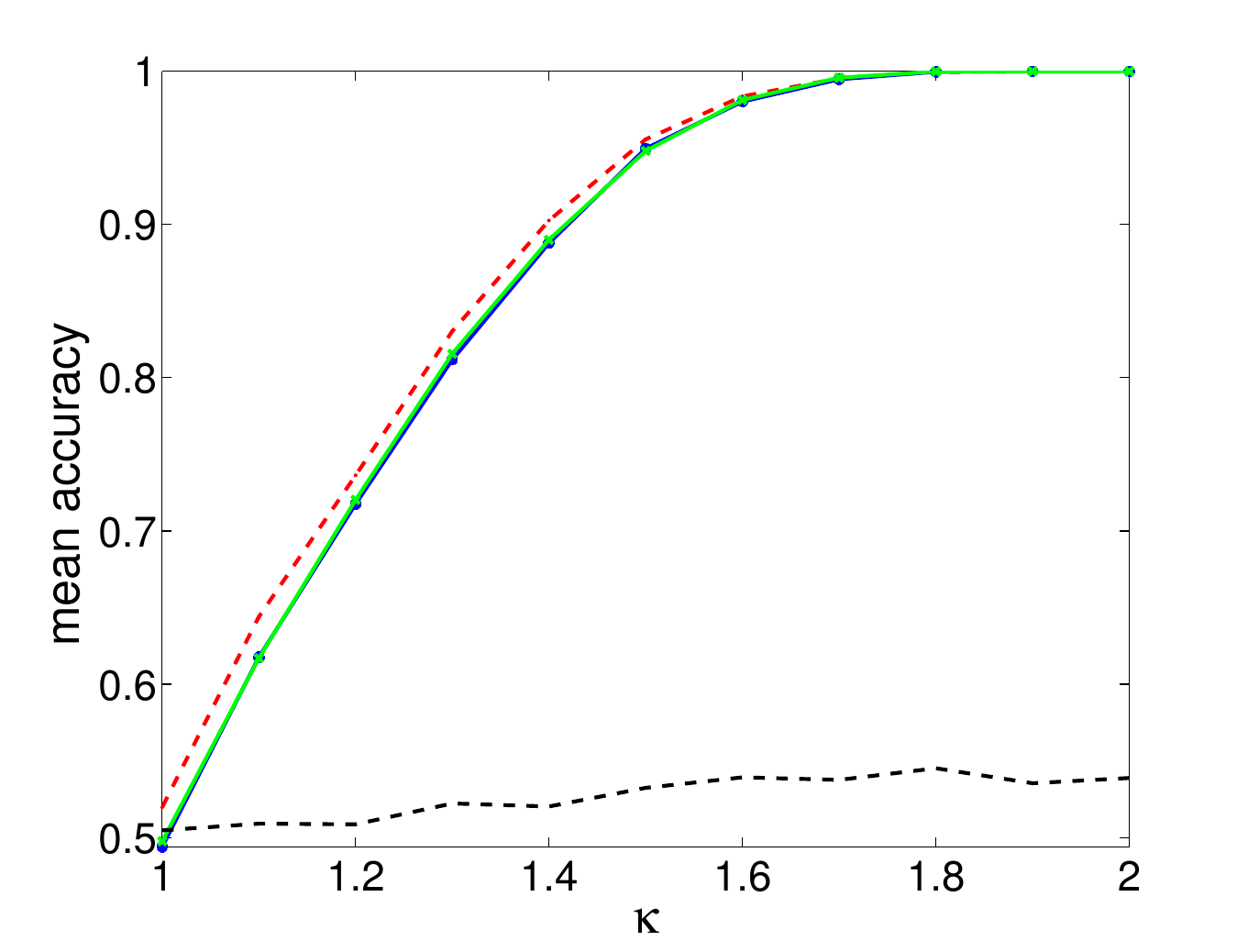}}
\end{center}
\caption{Comparison of methods for core--periphery detection using the graph ensemble $C_1(n,\beta,p,\kappa)$ with $n=100$, $\beta=0.5$, $p=0.25$, and edge probabilities $\mathbf{p}=(p_{cc}, p_{cp}, p_{pp})$, where $p_{cc}=\kappa^2 p$, $p_{cp}=\kappa p$, and $p_{pp}=p$. We vary $\kappa \in [1,2]$ in increments of $0.1$. The top plots illustrate our results for all methods on a single graph from $C_1(n,\beta,p,\kappa)$, and the bottom plots give results averaged over 100 different graphs from the ensemble for all methods except {\sc Core-Score}. The left plots do not use information about the size ($\beta$) of the core, as they rely only on the objective function that one maximizes; the right plots explicitly use knowledge of $\beta$. The colors and symbols in the legend in (c) also apply to (a), and the colors and symbols in the legend in (b) also apply to (d).
}
\label{fig:ex_k210}
\end{figure}

In Fig.~\ref{fig:ex_k210}, we examine the ensemble $C_1(n,\beta,p,\kappa)$ and find that {\sc Path-Core}, {\sc Degree-Core}, {\sc LowRank-Core}, and {\sc Core-Score} yield similar results. When $\beta$ is unknown, we find that {\sc Degree-Core}, {\sc Path-Core}, {\sc LowRank-Core}, and {\sc Core-Score} yield similar results to each other. However, when $\beta$ is known (i.e., when we assume a lower bound on the sizes of the core and periphery sets), we find that {\sc Degree-Core} and {\sc LowRank-Core} tend to perform slightly better than {\sc Core-Score} and {\sc Path-Core}. As expected, the aggregate performance of the various algorithms improves significantly when we assume knowledge of $\beta$. Unfortunately, in both scenarios, the two Laplacian-based methods yield very poor results. Recall that {\sc LapSgn-Core} yields exactly the same results both with and without knowledge of $\beta$, so we only show it in the plots without knowledge of $\beta$.

\begin{figure}[h!]
\begin{center}
\subfigure[Without knowledge of $\beta$; single experiment]
{\includegraphics[width=0.38\textwidth]{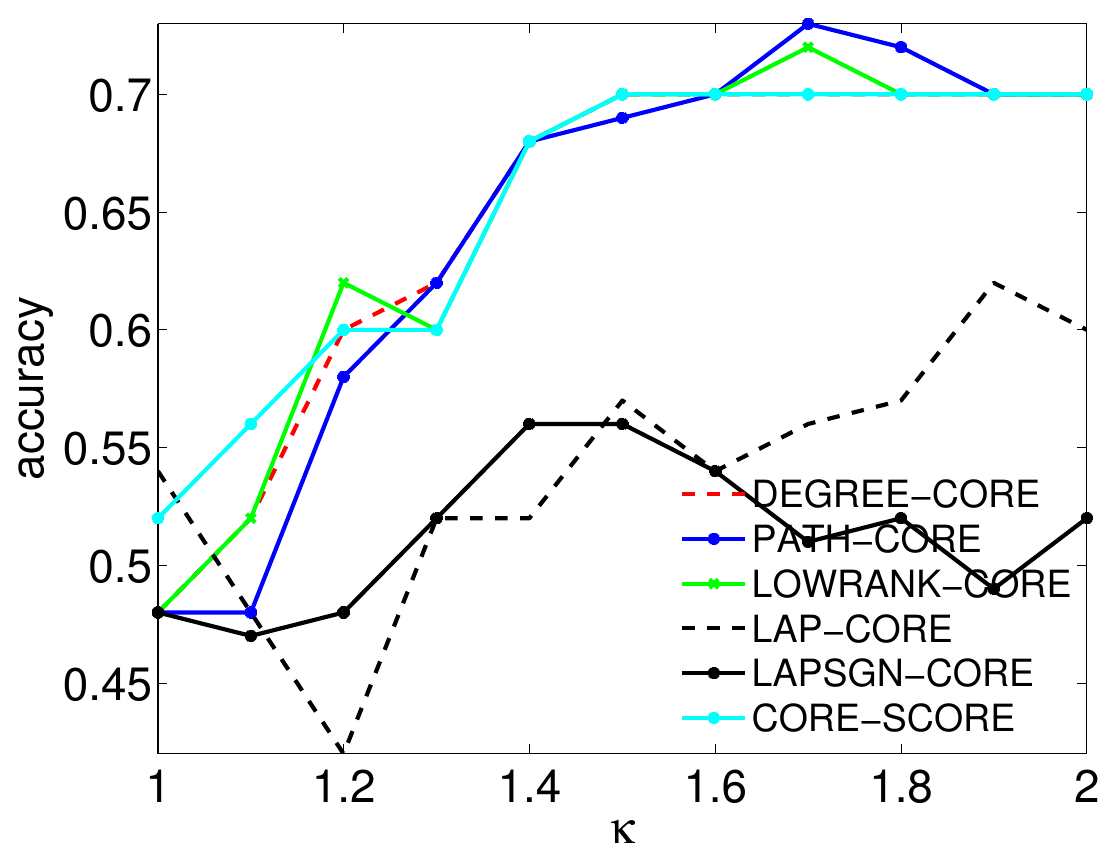}}
\subfigure[With knowledge of $\beta$; single experiment]
{\includegraphics[width=0.38\textwidth]{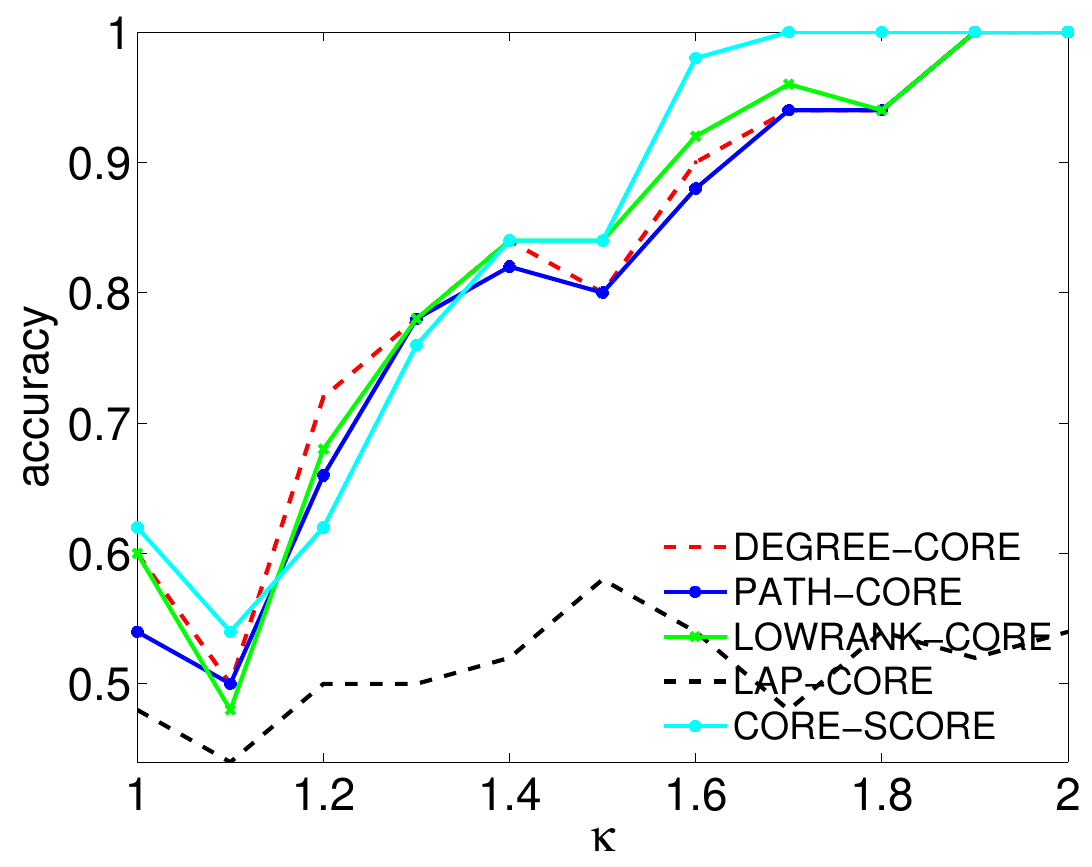}}
\subfigure[Without knowledge of $\beta$; averaged over 100 experiments]
{\includegraphics[width=0.38\textwidth]{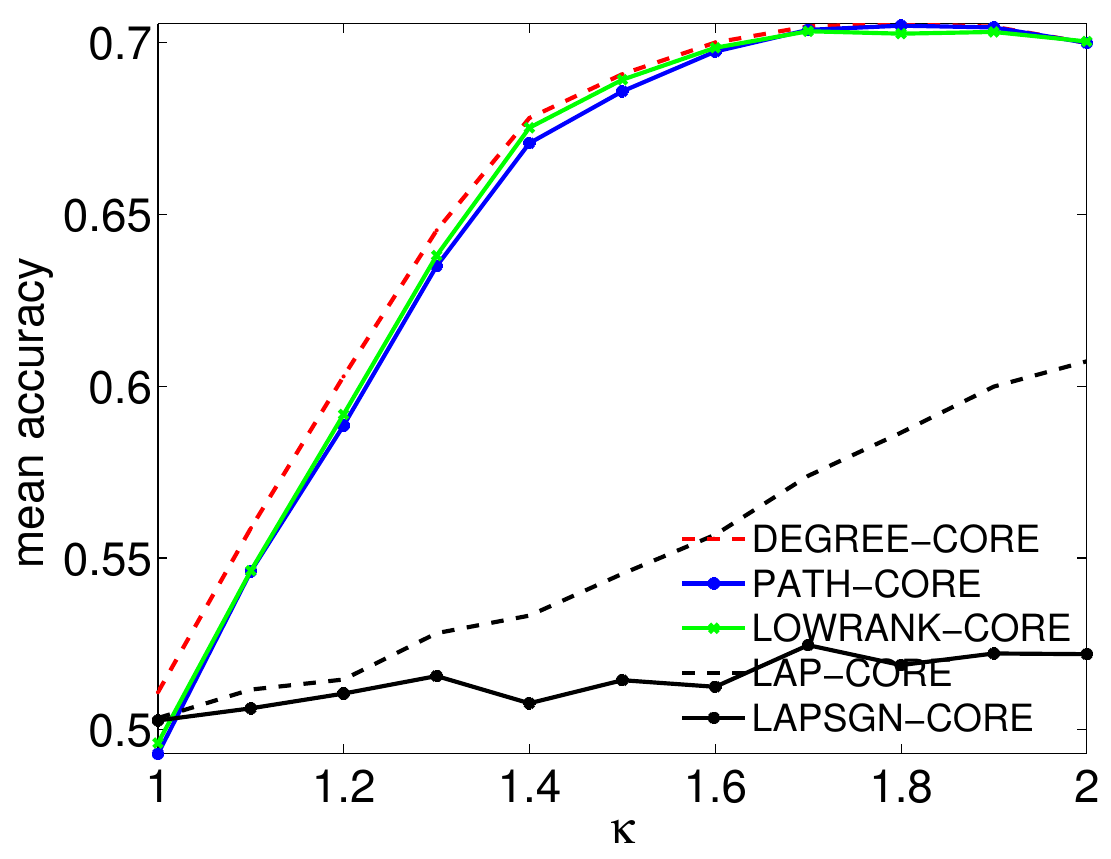}}
\subfigure[With knowledge of $\beta$; averaged over 100 experiments]
{\includegraphics[width=0.38\textwidth]{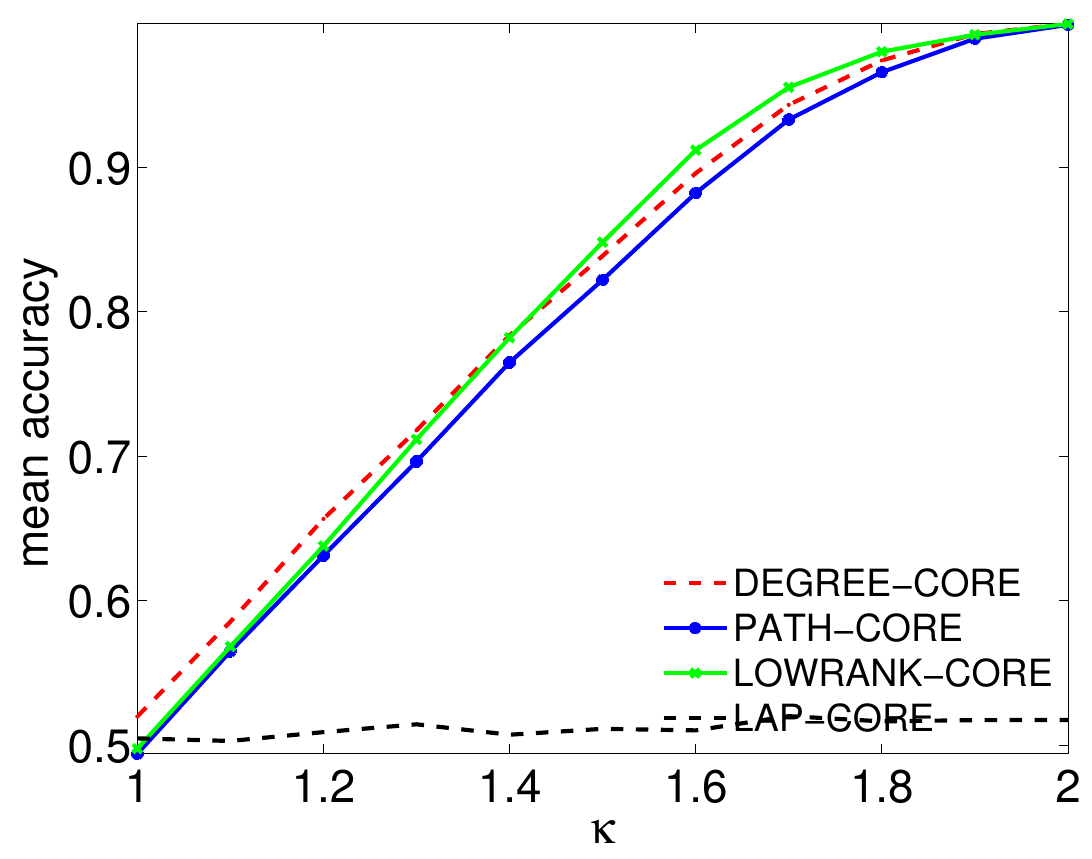}}
\end{center}
\caption{Comparison of methods for detecting core--periphery structure for the graph ensemble $C_2(n,\beta,p,\kappa)$ with $n=100$, $\beta=0.5$, $p=0.25$, and edge probabilities $\mathbf{p}=(p_{cc}, p_{cp}, p_{pp})$, where $p_{cc}=\kappa^2 p$, $p_{cp}=\kappa p$, and $p_{pp}=\kappa p$. We vary $\kappa \in [1,2]$ in increments of $0.1$. 
}
\label{fig:ex_k211}
\end{figure}

In Fig.~\ref{fig:ex_k211}, we plot our numerical results for the ensemble $C_2(n,\beta,p,\kappa)$. When $\beta$ is unknown, {\sc Degree-Core}, {\sc Path-Core}, {\sc LowRank-Core}, and {\sc Core-Score} again yield similar results. When we assume that $\beta$ is known, we find that {\sc Core-Score}, {\sc LowRank-Core}, {\sc Degree-Core} still perform similarly to each other, and they all do slightly better than {\sc Path-Core}. The Laplacian-based methods again perform very poorly, though {\sc Lap-Core} does slightly better than {\sc LapSgn-Core} when $\beta$ is unknown.

\begin{figure}[h!]
\begin{center}
\subfigure[Without knowledge of $\beta$; single experiment]
{  \includegraphics[width=0.38\textwidth]{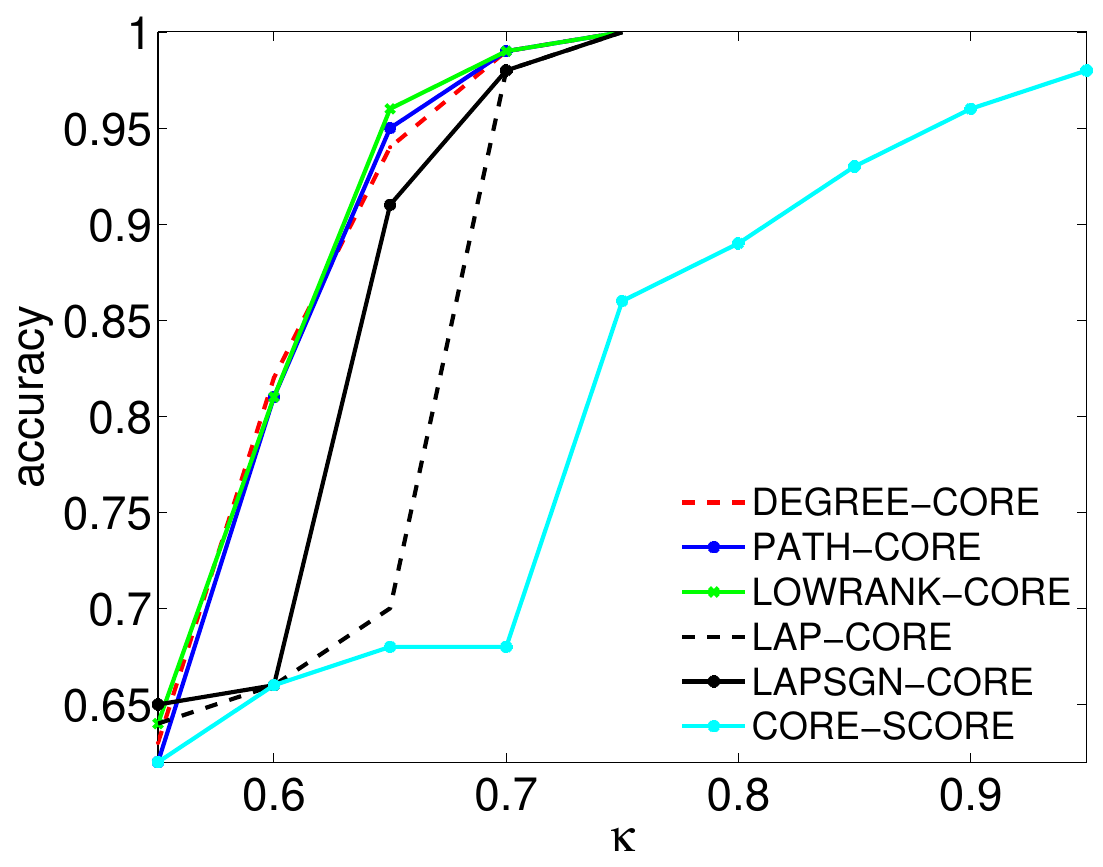}}
\subfigure[With knowledge of $\beta$; single experiment]
{     \includegraphics[width=0.38\textwidth]{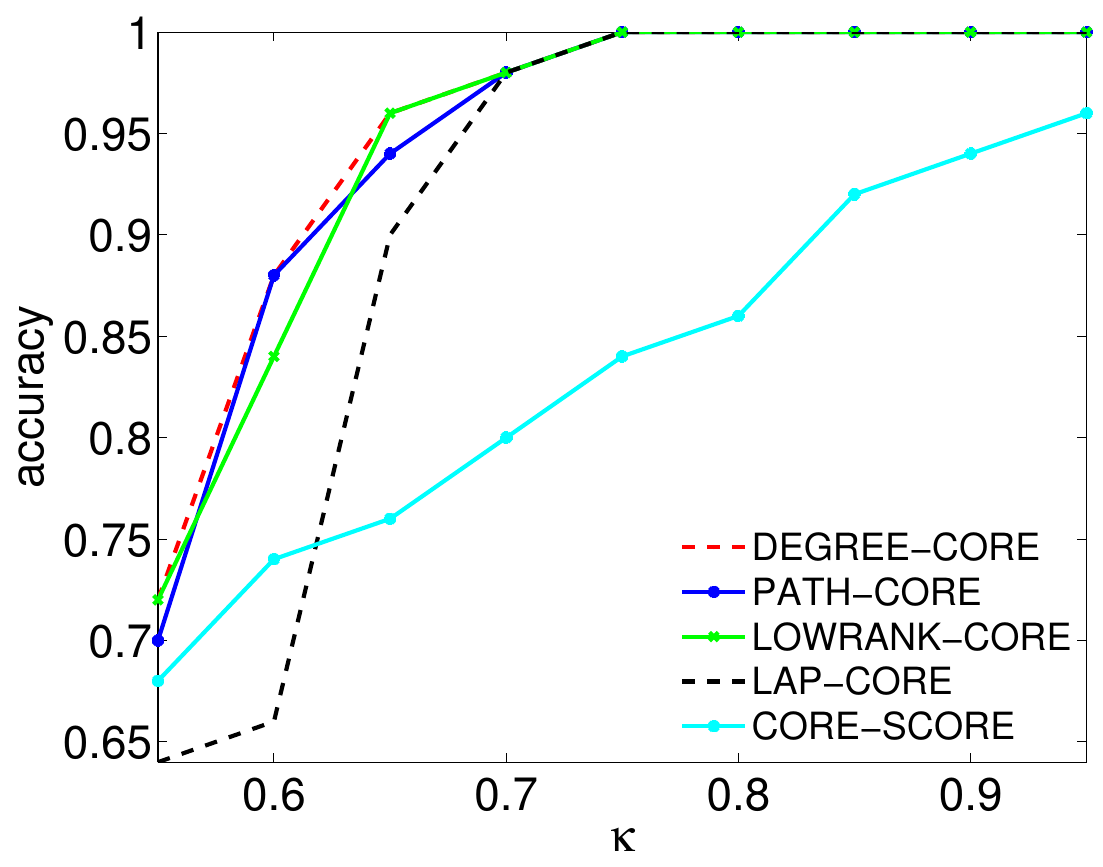}}
\subfigure[Without knowledge of $\beta$; averaged over 100 experiments]
{  \includegraphics[width=0.38\textwidth]{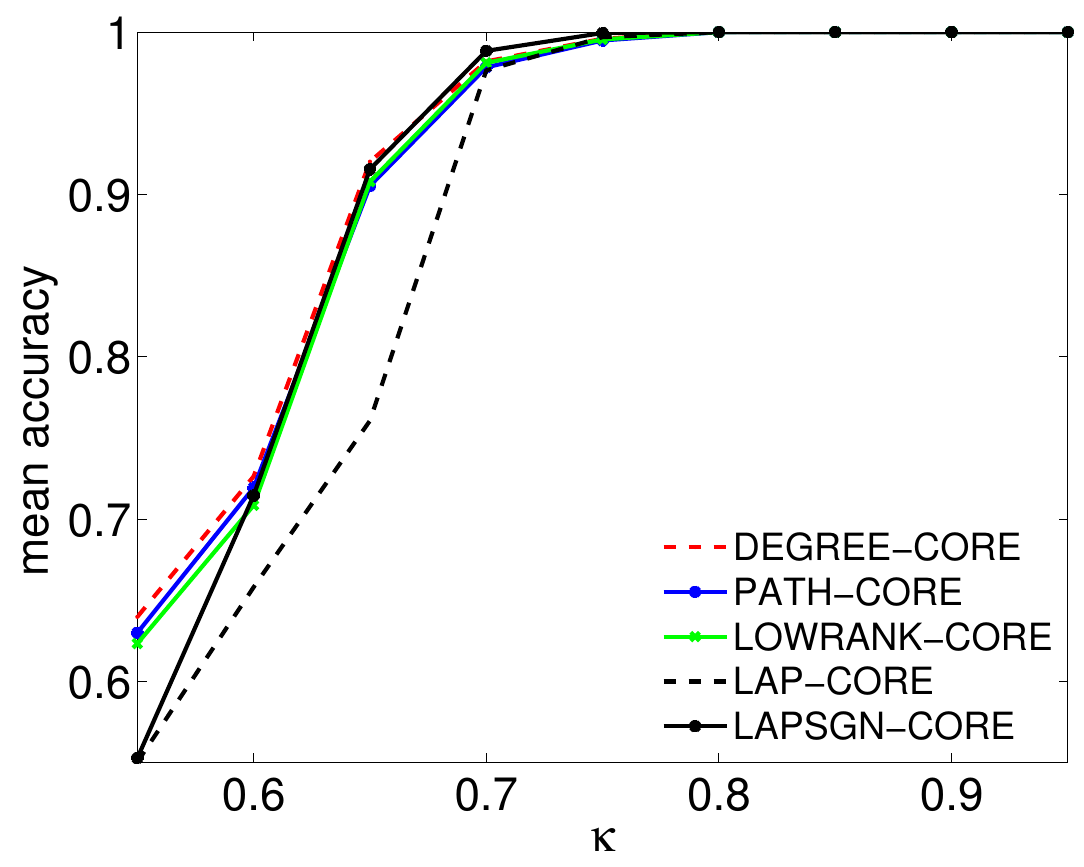}}
\subfigure[With knowledge of $\beta$; averaged over 100 experiments ]
{    \includegraphics[width=0.38\textwidth]{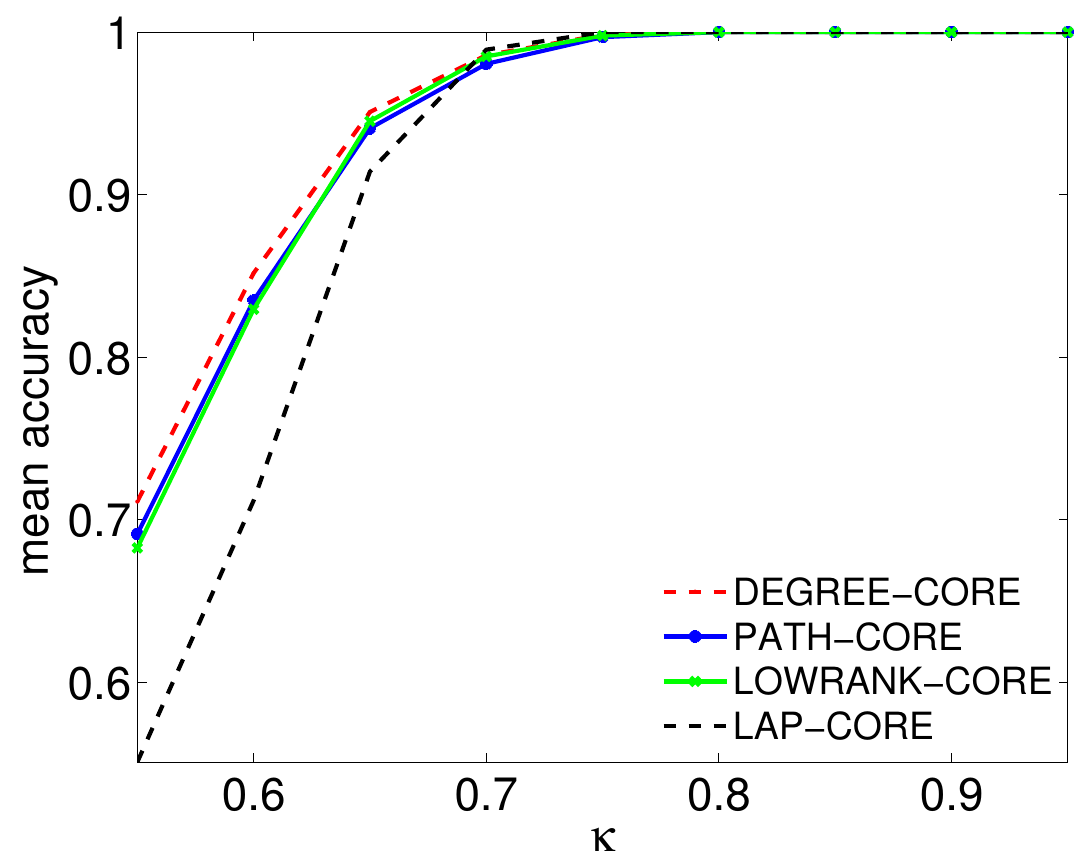}}
\end{center}
\caption{Comparison of the methods for a graph with $n=100$ vertices generated by a core--periphery block model with edge probabilities $\mathbf{p} = (p_{cc},p_{cp},p_{pp}) = (\kappa,\kappa,1-\kappa)$ for $\kappa \in \{0.55, 0.60, \dots,0.95\}$. 
}
\label{fig:ex_rankOneT}
\end{figure}

In Fig.~\ref{fig:ex_rankOneT}, we consider a graph with a core--periphery structure from a random-graph ensemble with edge probabilities $\mathbf{p} = (p_{cc}, p_{cp}, p_{pp}) = (\kappa,\kappa,1-\kappa)$ for different values of $\kappa$. 
The common feature of this set of experiments --- both when the boundary size $\beta$ is known and when it is unknown --- is that {\sc Degree-Core}, {\sc LowRank-Core}, and {\sc Path-Core} give the best results, whereas {\sc Core-Score} consistently comes in last place (except for doing somewhat better than the Laplacian-based methods for values of $\kappa$ in the range $[0.5,1]$) in terms of accuracy. In Fig.~\ref{fig:objFcnEvoAvg}, we consider the values of the objective function \eqref{eq:FindCut}, averaged over 100 runs, that we obtain using the different partitions of a network's vertices into core and periphery sets as we sweep along the sorted scores that we compute using each of the methods (except {\sc Core-Score}, which we omit because of its slow computational time). In Fig.~\ref{fig:objValAttained}, we compare the actual values of the objective function for a single experiment across all methods (including {\sc Core-Score}) as we vary the parameter $\kappa$. We also show the evolution of the value of the objective function as we sweep through the vector of scores from each method.

\begin{figure}[h!]
\begin{center}
\subfigure[$\kappa=0.55$]{\includegraphics[width=0.3\textwidth]{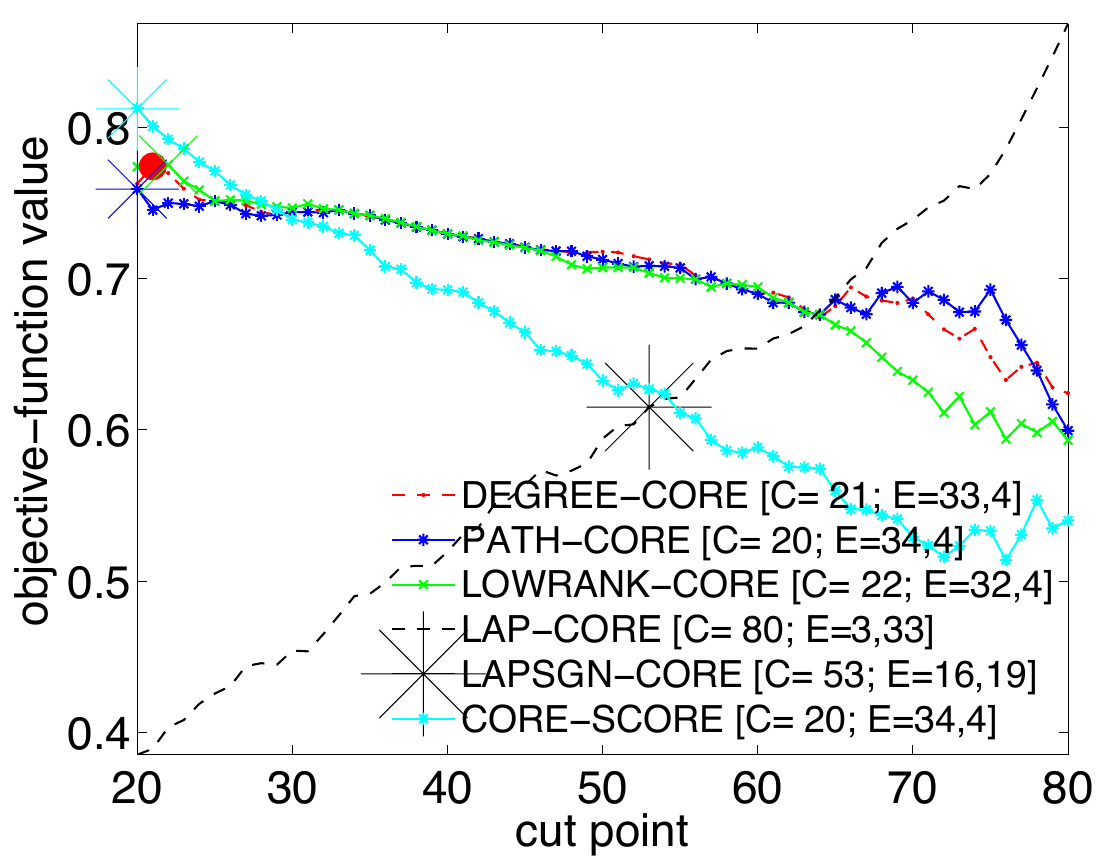}}
\subfigure[$\kappa=0.60$]{\includegraphics[width=0.3\textwidth]{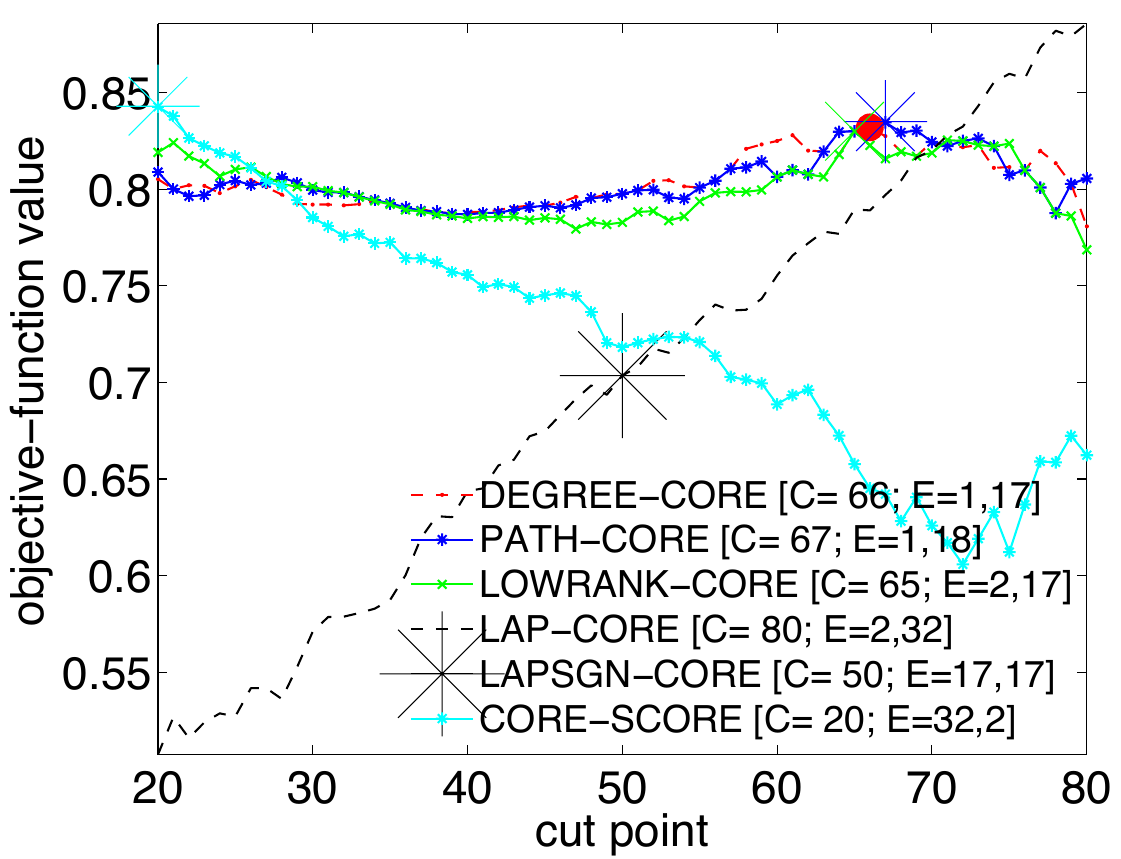}}
\subfigure[$\kappa=0.65$]{\includegraphics[width=0.3\textwidth]{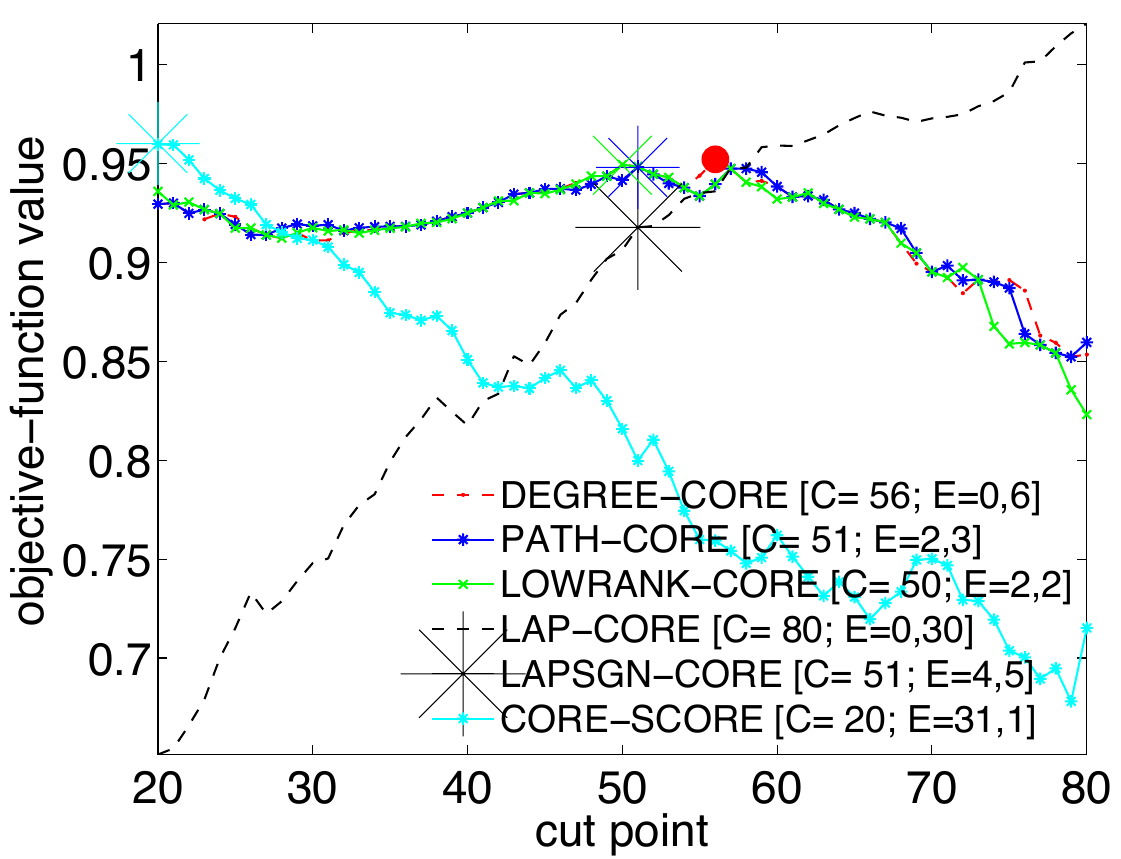}}
\subfigure[$\kappa=0.70$]{\includegraphics[width=0.3\textwidth]{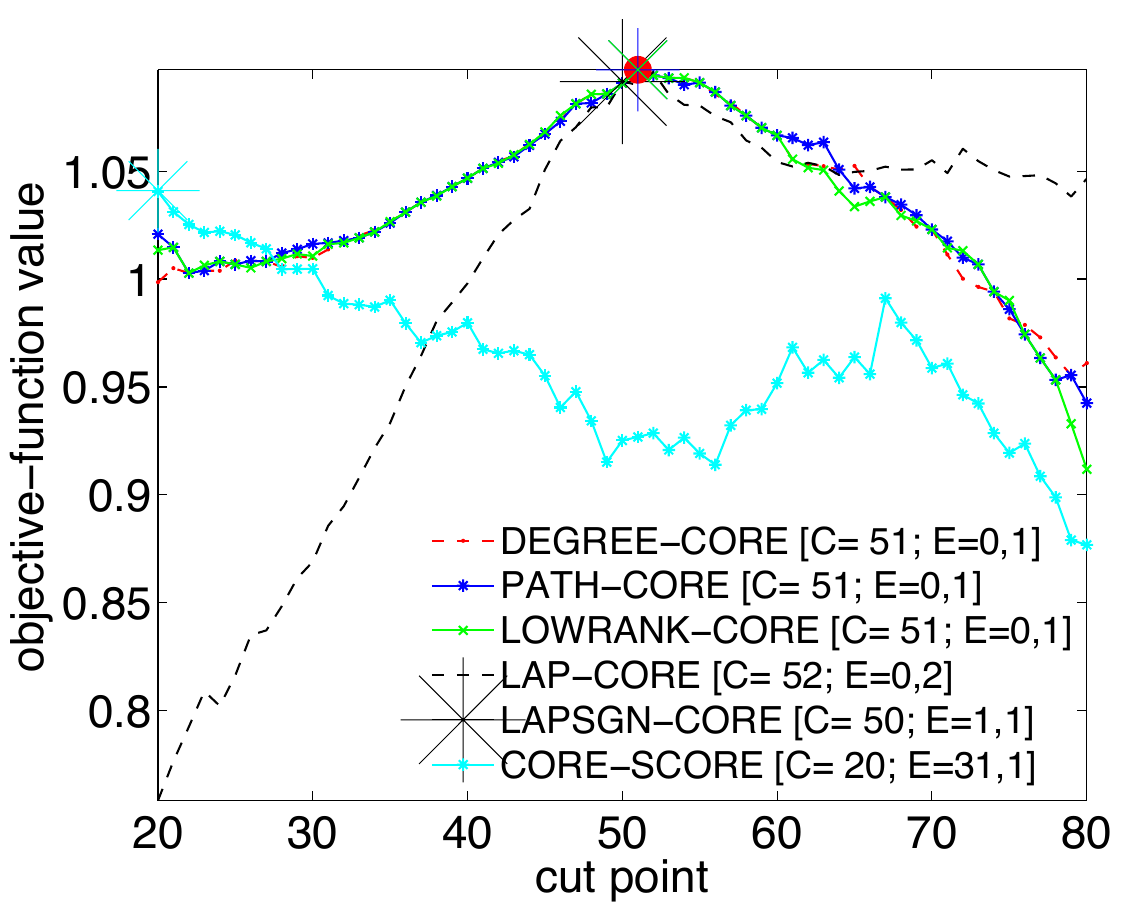}}
\subfigure[$\kappa=0.75$]{\includegraphics[width=0.3\textwidth]{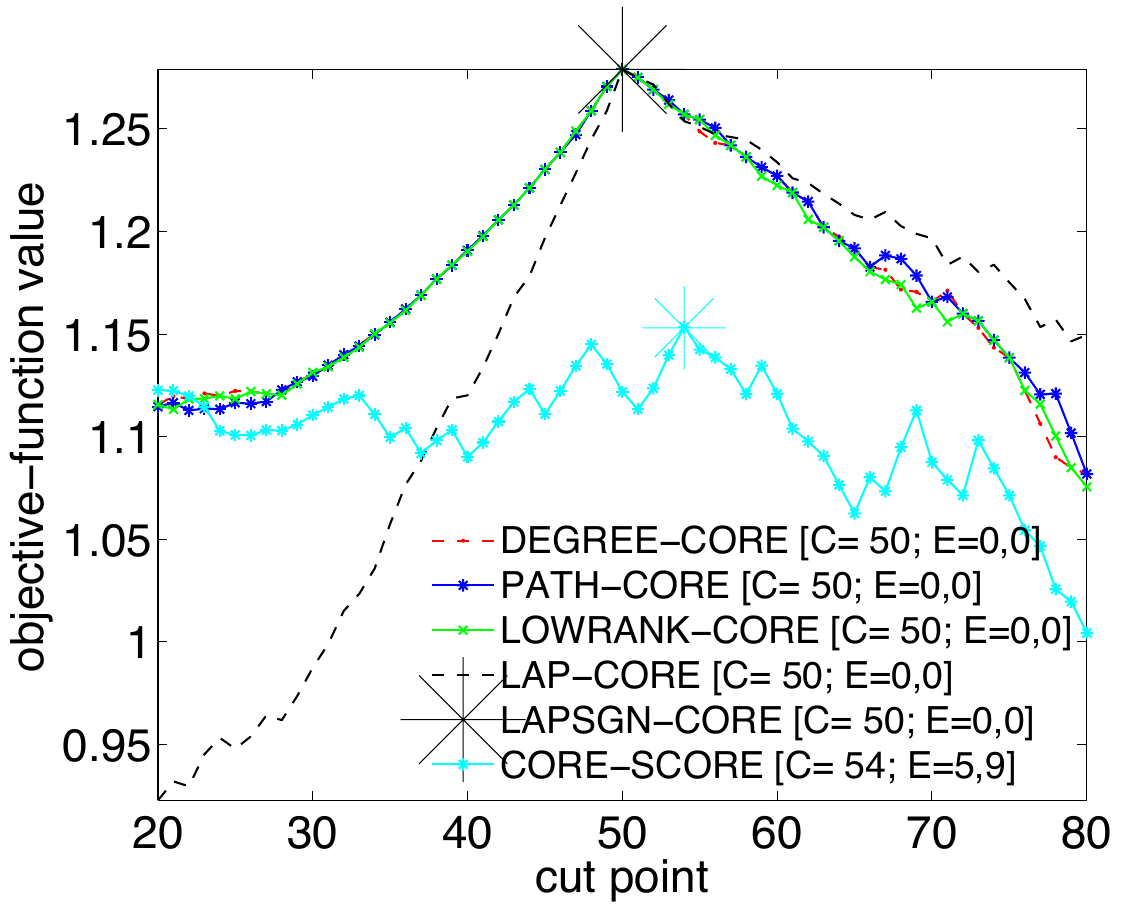}}
\subfigure[$\kappa=0.80$]{\includegraphics[width=0.3\textwidth]{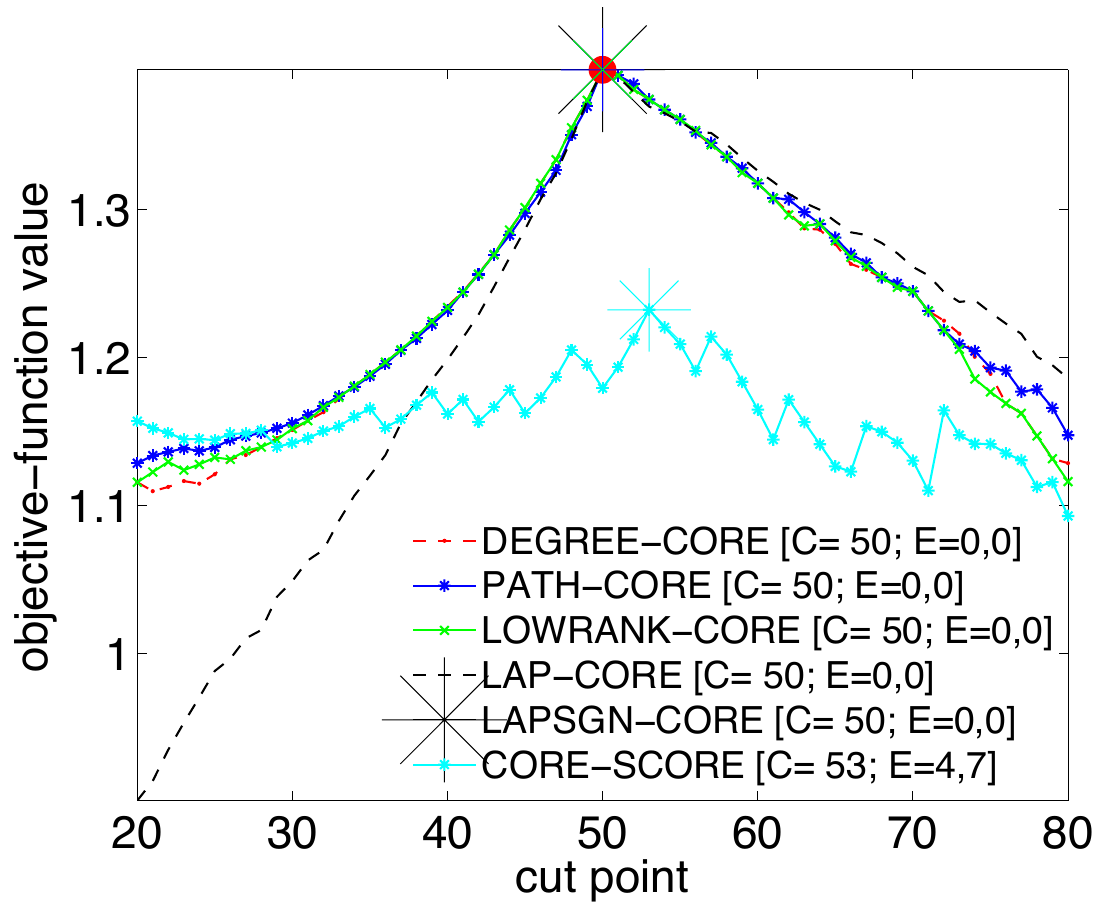}}
\subfigure[$\kappa=0.85$]{\includegraphics[width=0.3\textwidth]{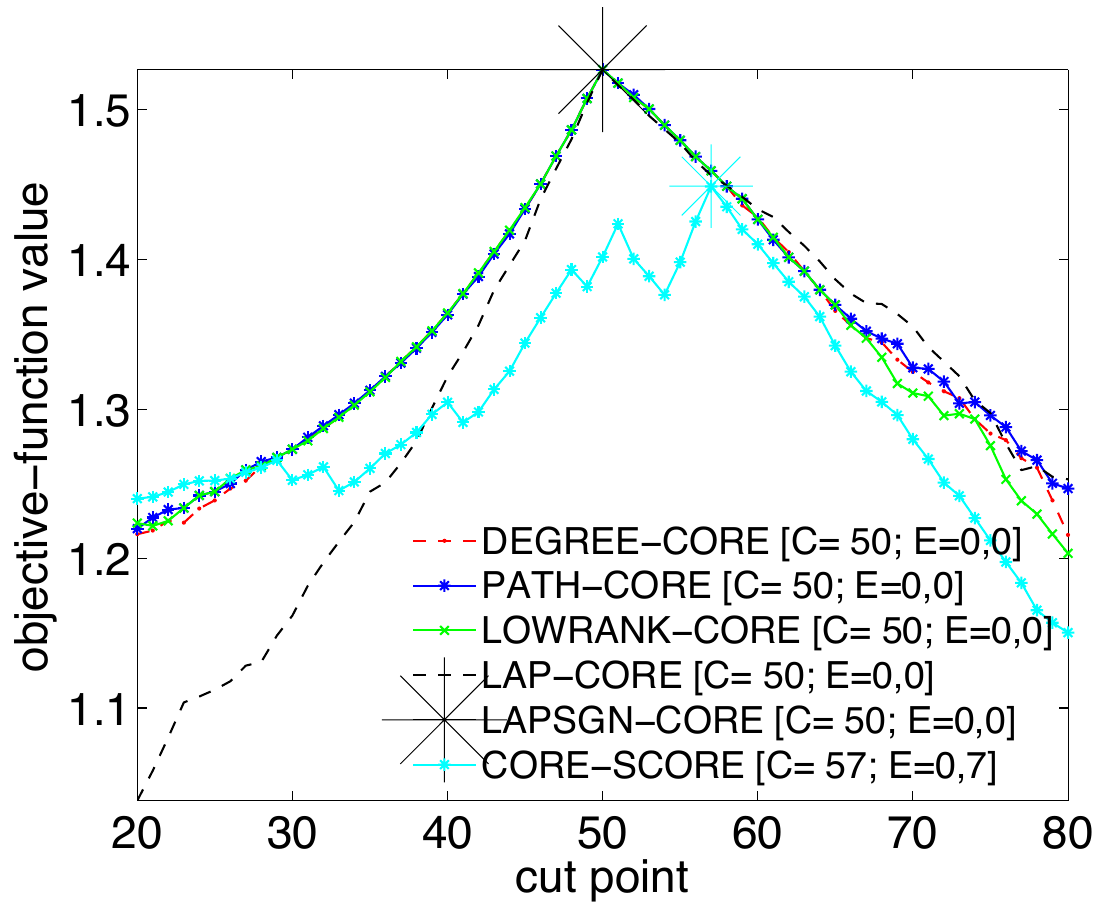}}\hspace{3mm}
\subfigure[$\kappa=0.90$]{\includegraphics[width=0.3\textwidth]{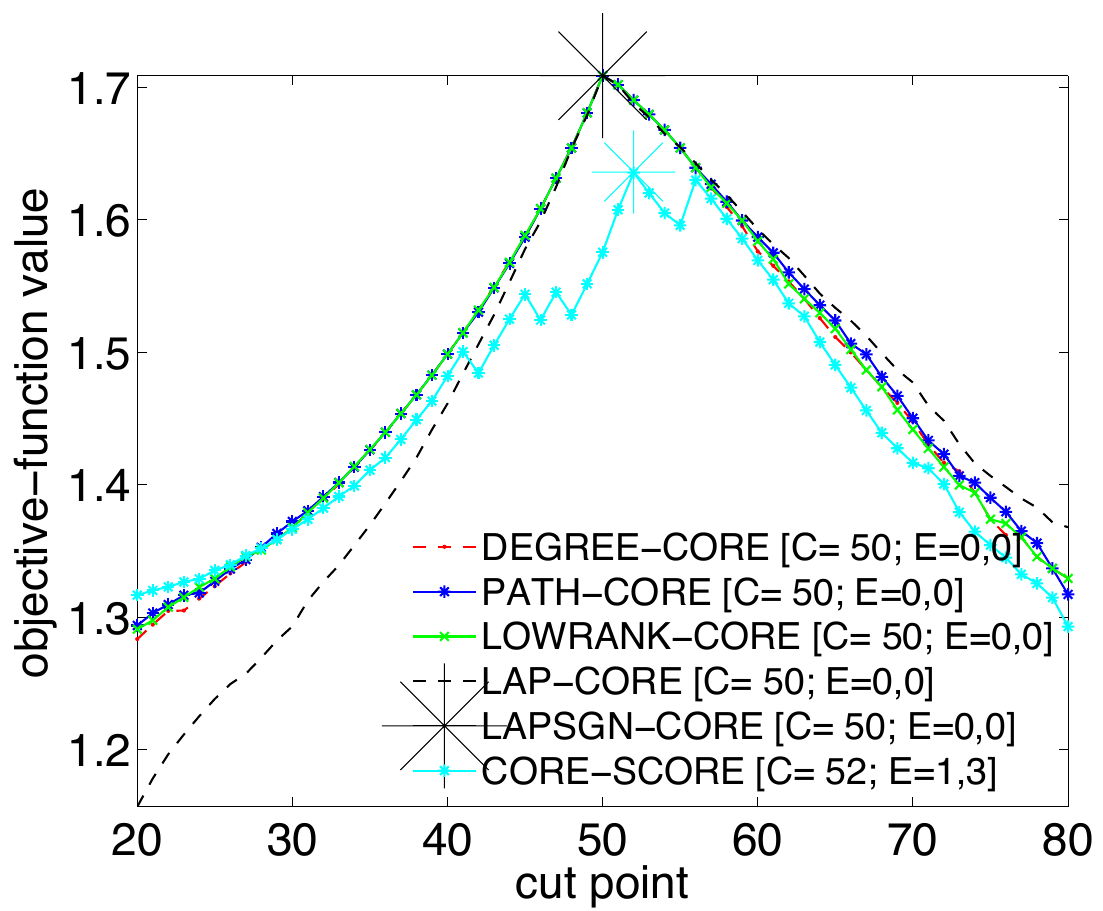}}\hspace{2mm}
\subfigure[$\kappa=0.95$]{\includegraphics[width=0.3\textwidth]{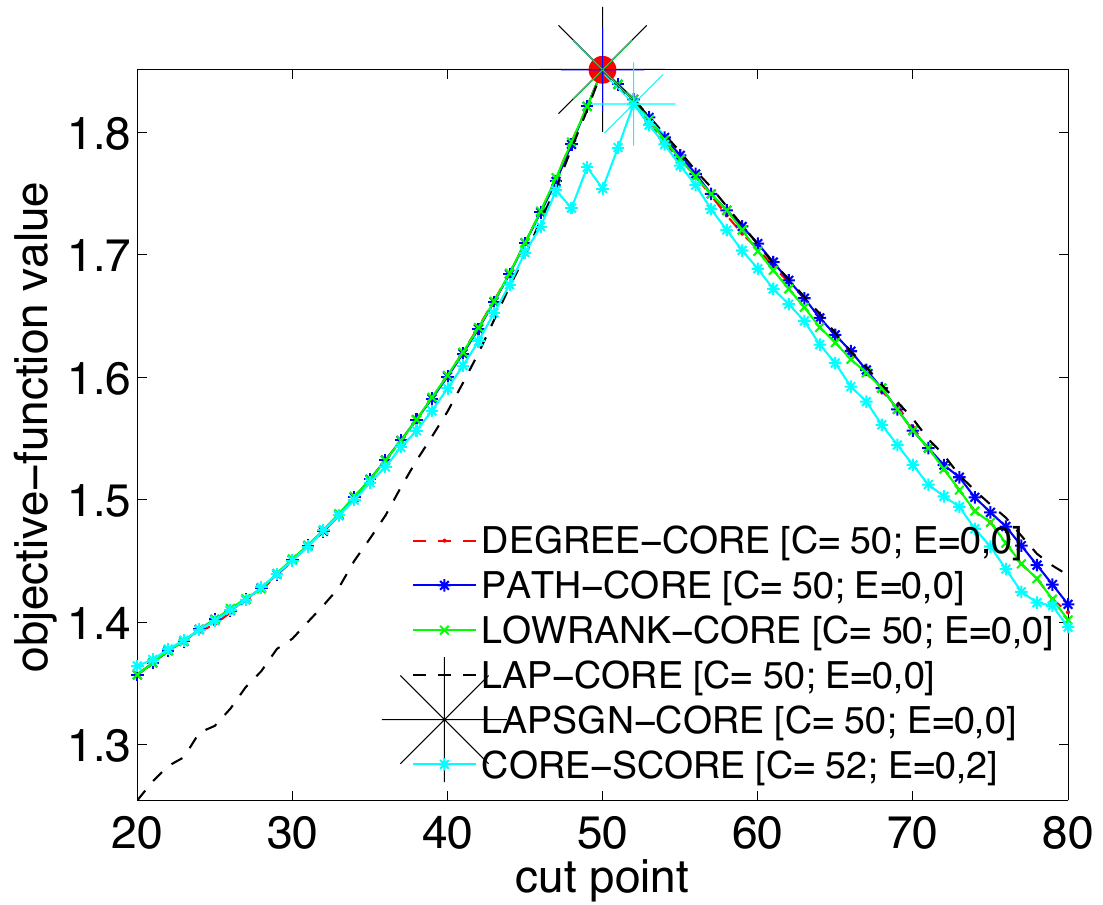}}
\end{center}
\caption{Comparison of the values of the objective function \eqref{eq:FindCut} of the partition of networks into a core set and a periphery set. We calculate these values from the sorted scores from the various methods for detecting core--periphery structure as we vary the parameter $\kappa$ in the ensemble $G(p_{cc},p_{cp},p_{pp})$ from Table \ref{tab:generalBlockModel} with $n=100$. The probability vector in the block model is $\mathbf{p} = (p_{cc}, p_{cp}, p_{pp}) = (\kappa,\kappa,1-\kappa)$. The ``cut point'' refers to the number of vertices in the core set. In the legends, ${\tt C}$ denotes the size of the core set that maximizes the objective function Eq.~\eqref{eq:FindCut}, and ${\bf E} = (y_1,y_2)$ denotes the corresponding $2$-vector of errors. The first component of ${\bf E}$ indicates the number of core vertices that we label as peripheral vertices, and the second indicates the number of peripheral vertices that we label as core vertices.
}
\label{fig:objFcnEvoAvg}
\end{figure}

\begin{figure}[h!]
\begin{center}
\subfigure[Without knowledge of $\beta$]{  \includegraphics[width=0.35\textwidth]{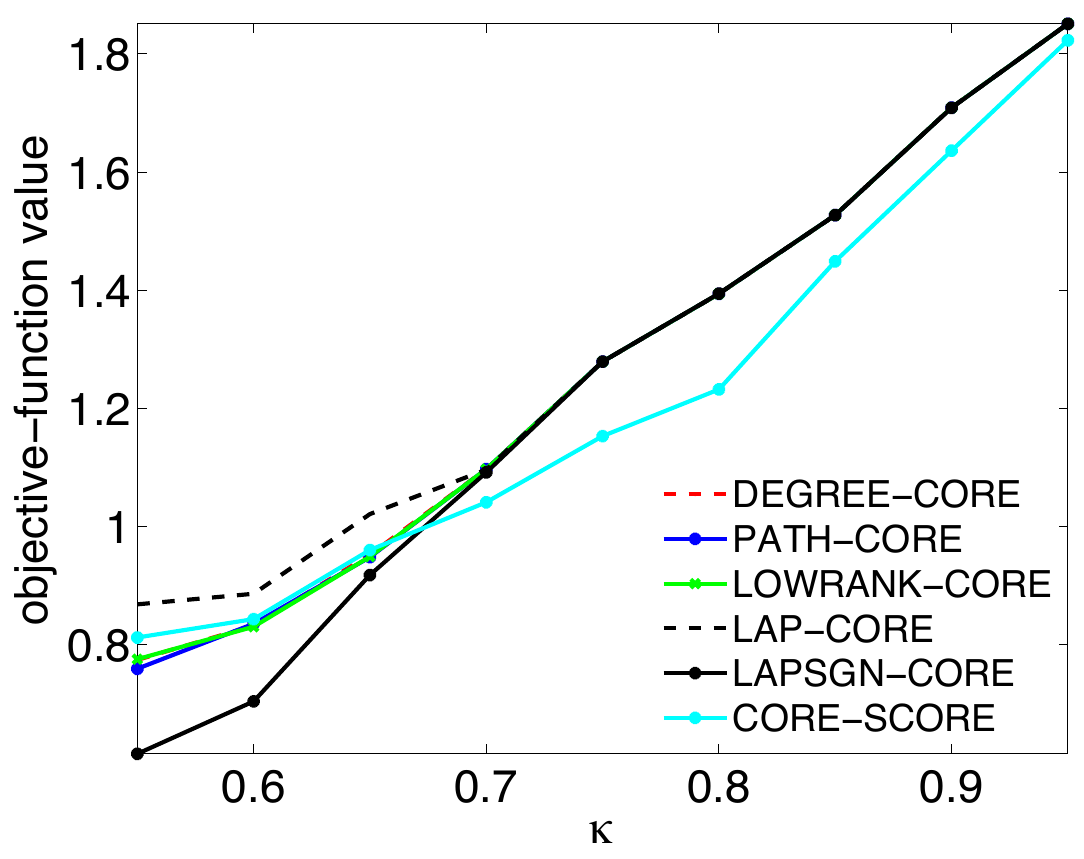}}
\subfigure[With knowledge of $\beta$]{  \includegraphics[width=0.35\textwidth]{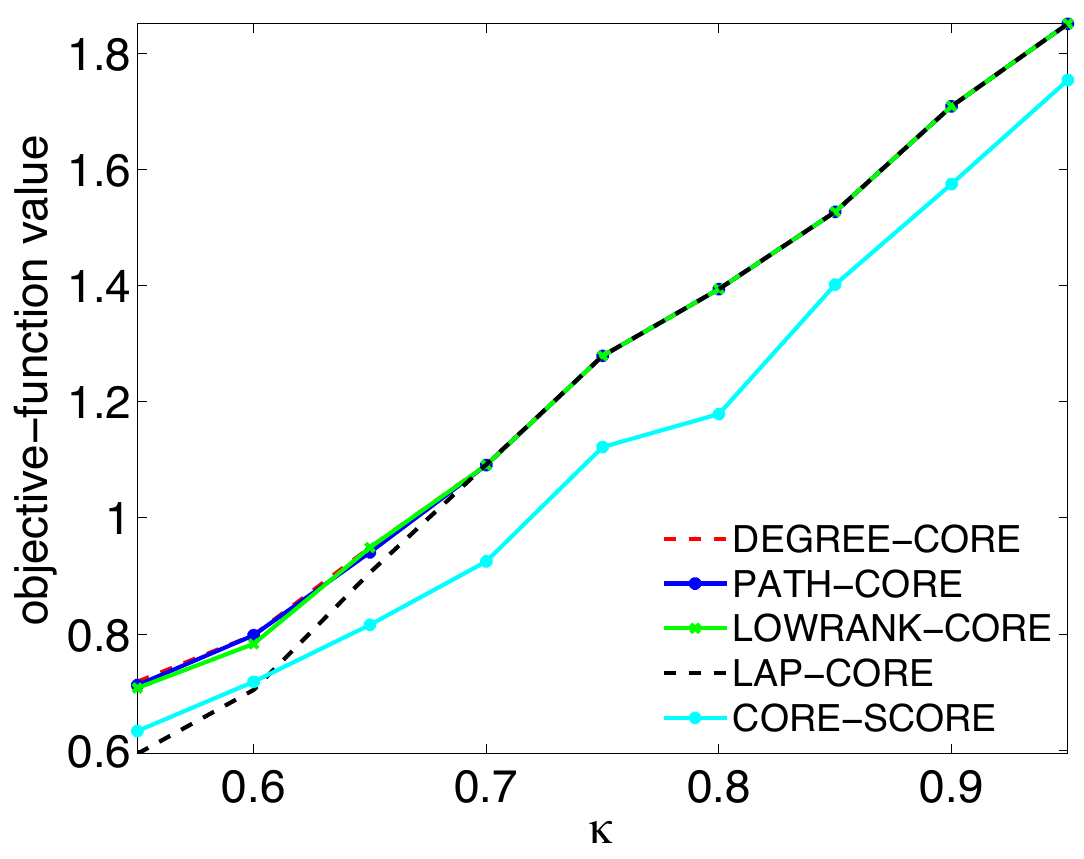}}
\end{center}
\caption{Comparison of the actual values of the objective function \eqref{eq:FindCut} for a single experiment for all methods as a function of the parameter $\kappa$ in the ensemble $G(p_{cc},p_{cp},p_{pp})$ from Table \ref{tab:generalBlockModel} with $n=100$ and $\beta=0.5$. The probability vector in the block model is $\mathbf{p} = (p_{cc},p_{cp},p_{pp}) = (\kappa,\kappa,1-\kappa)$.
}  
\label{fig:objValAttained}
\end{figure}

In Fig.~\ref{fig:runningTimes}, we compare the computation times (in seconds and on a $\log_{10}$ scale) for all of the methods that we examine. The computers that we use for this comparison have 12 CPU cores (Intel(R) Xeon(R) X5650 @ 2.67GHz) and have 48 GB RAM. The most computationally expensive method is {\sc Core-Score}, which is 1--2 orders-of-magnitude slower than {\sc Path-Core}, which is in turn 3--4 orders of magnitude slower than the spectral {\sc LowRank-Core} and {\sc Lap-Score} methods (which have very similar computation times). Finally, as expected, the trivial {\sc Degree-Core} method has the fastest computation times.

\begin{figure}[h!]
\subfigure[$C_1(n,\beta,p,\kappa)$, $n=100, \beta=0.5, p=0.25$]{  \includegraphics[width=0.35\textwidth]{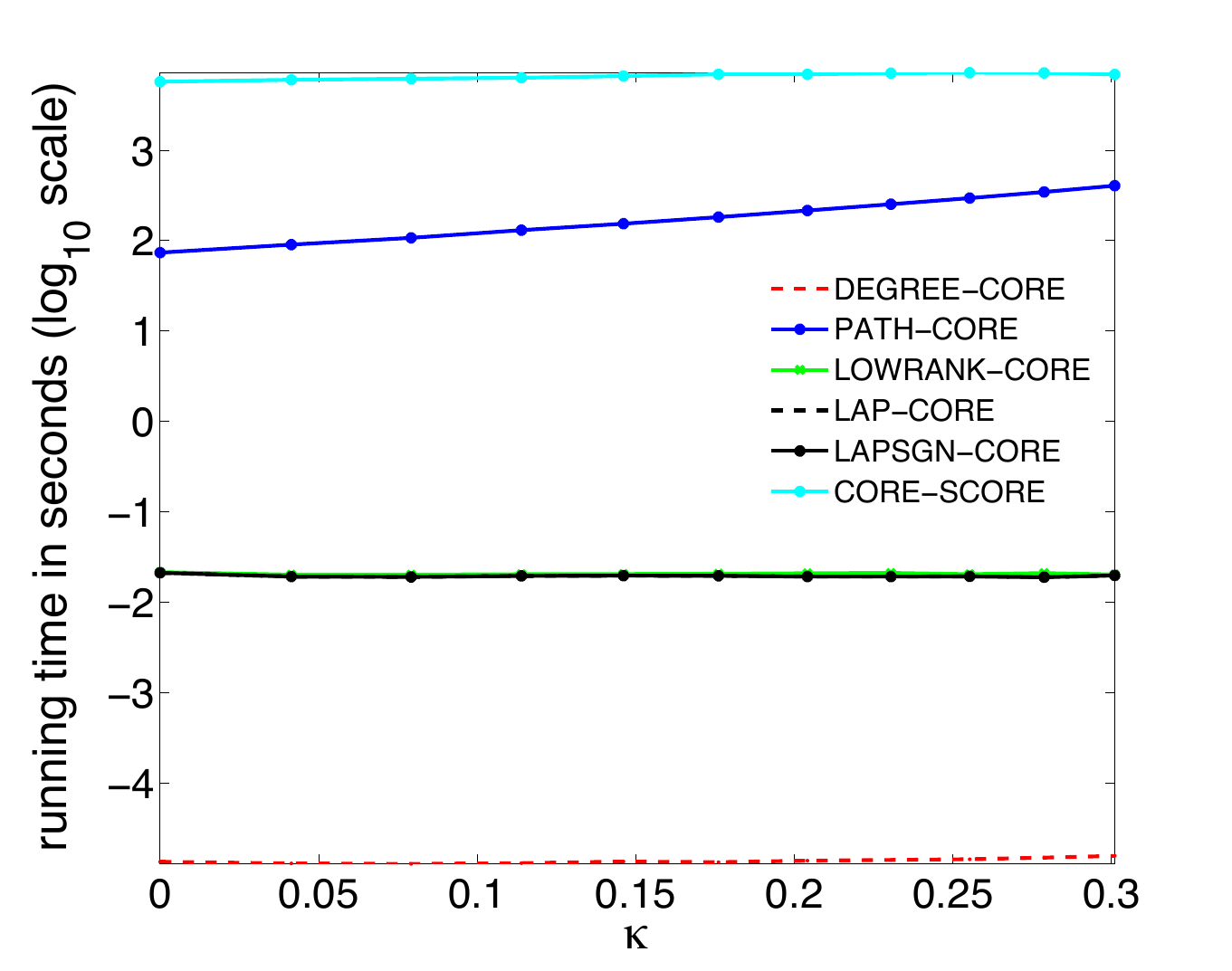}}
\subfigure[$C_2(n,\beta,p,\kappa)$, $n=100, \beta=0.5, p=0.25$]{     \includegraphics[width=0.35\textwidth]{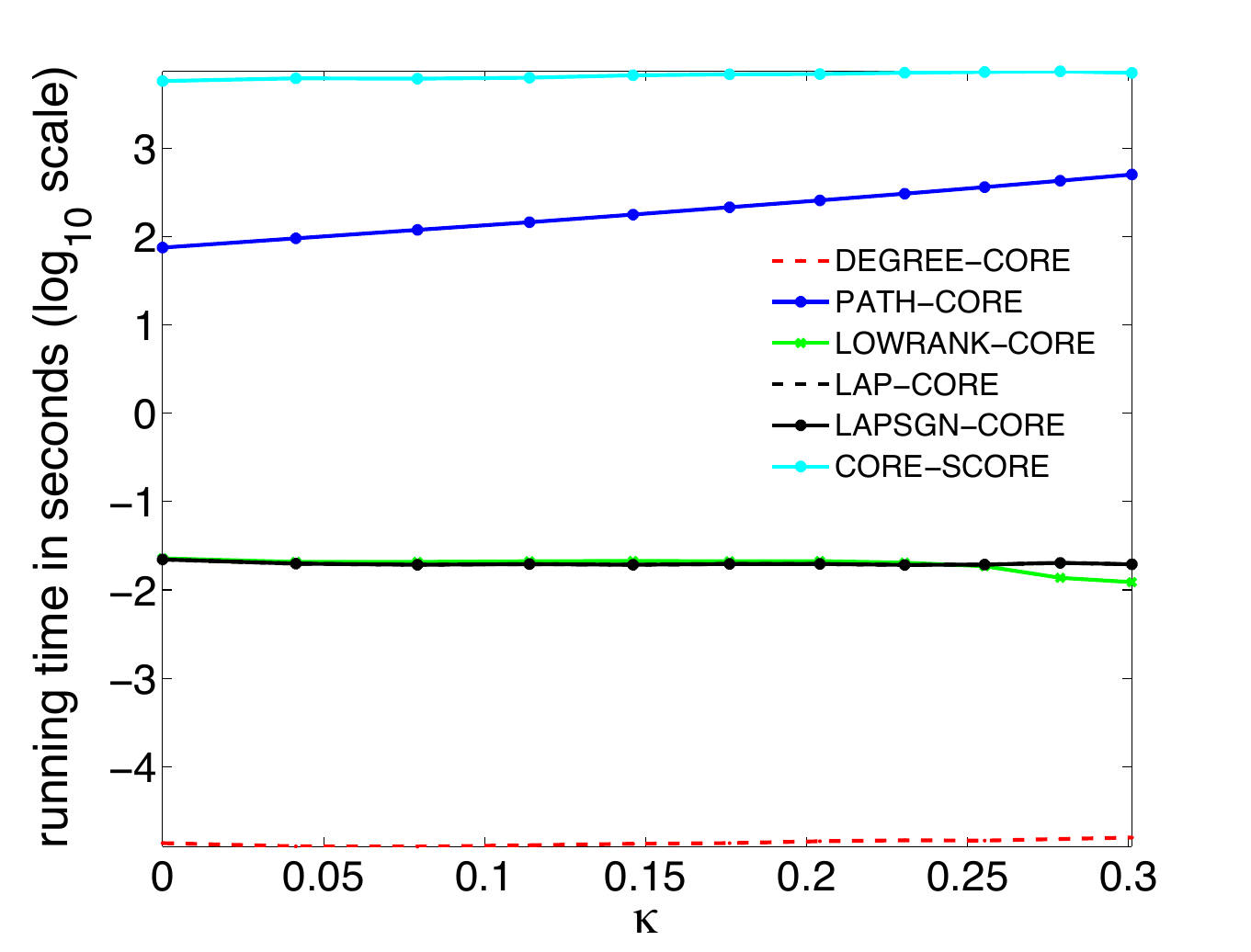}}
\subfigure[$G(p_{cc},p_{cp},p_{pp})$ with $\mathbf{p} = (p_{cc},p_{cp},p_{pp}) = (\kappa,\kappa,1-\kappa)$.]{ \includegraphics[width=0.35\textwidth]{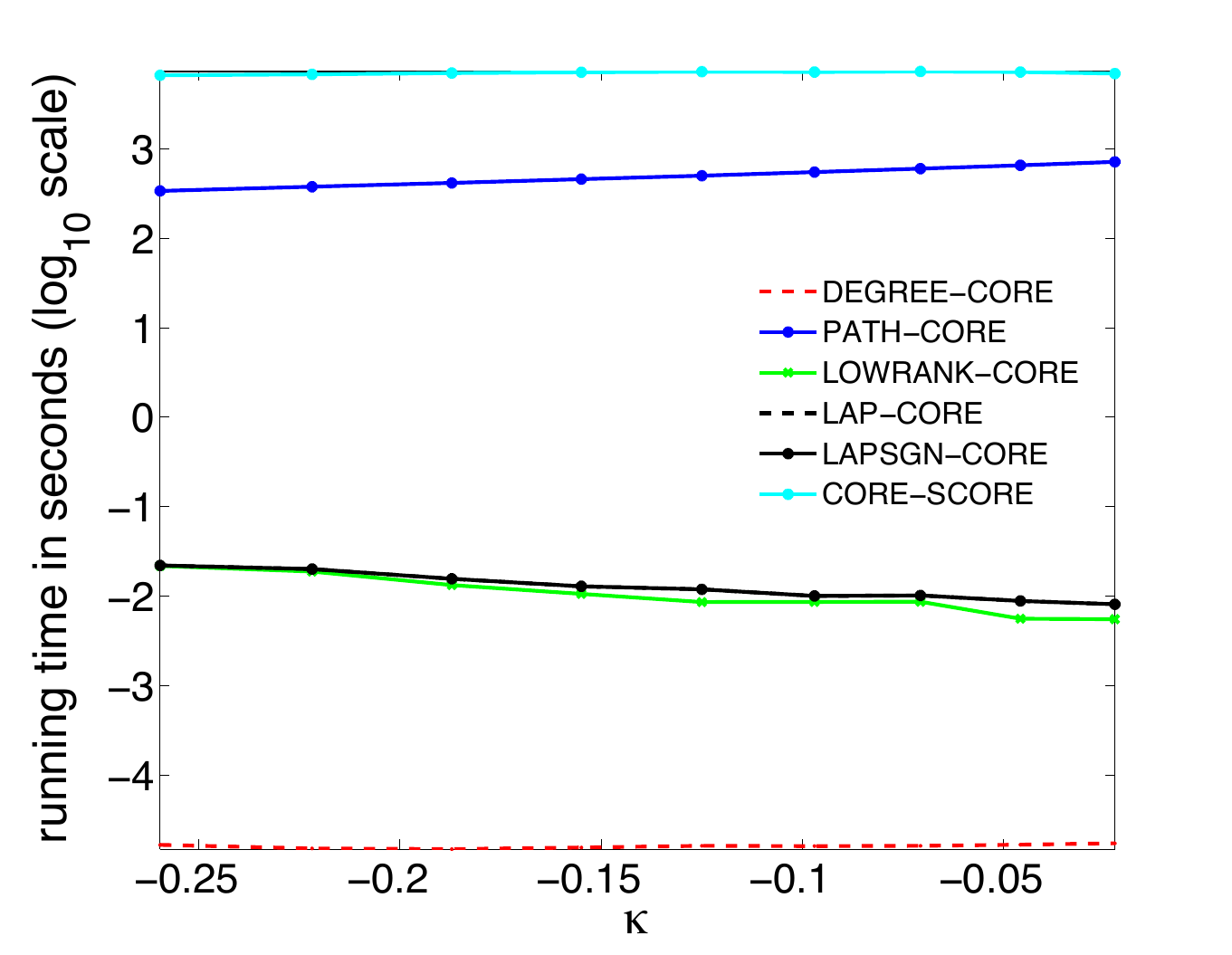}}
\caption{Comparison of the computation times (in seconds and on a $\log_{10}$ scale) for all methods and for three synthetic graph ensembles.
}
\label{fig:runningTimes}
\end{figure}


\subsection{Application to Empirical Data}
\label{sec:real_networks}

In a recent publication \cite{corePerApp}, a subset of us applied the {\sc Path-Core} and {\sc Core-Score} algorithms for detecting core--periphery structure in a variety of real-world networks. In the present paper, we use our various methods on a few other empirical data sets.

\begin{figure}[h!]
\begin{center}
\subfigure[NNS2006; boundary: 10\%]{\includegraphics[width=0.35\textwidth]{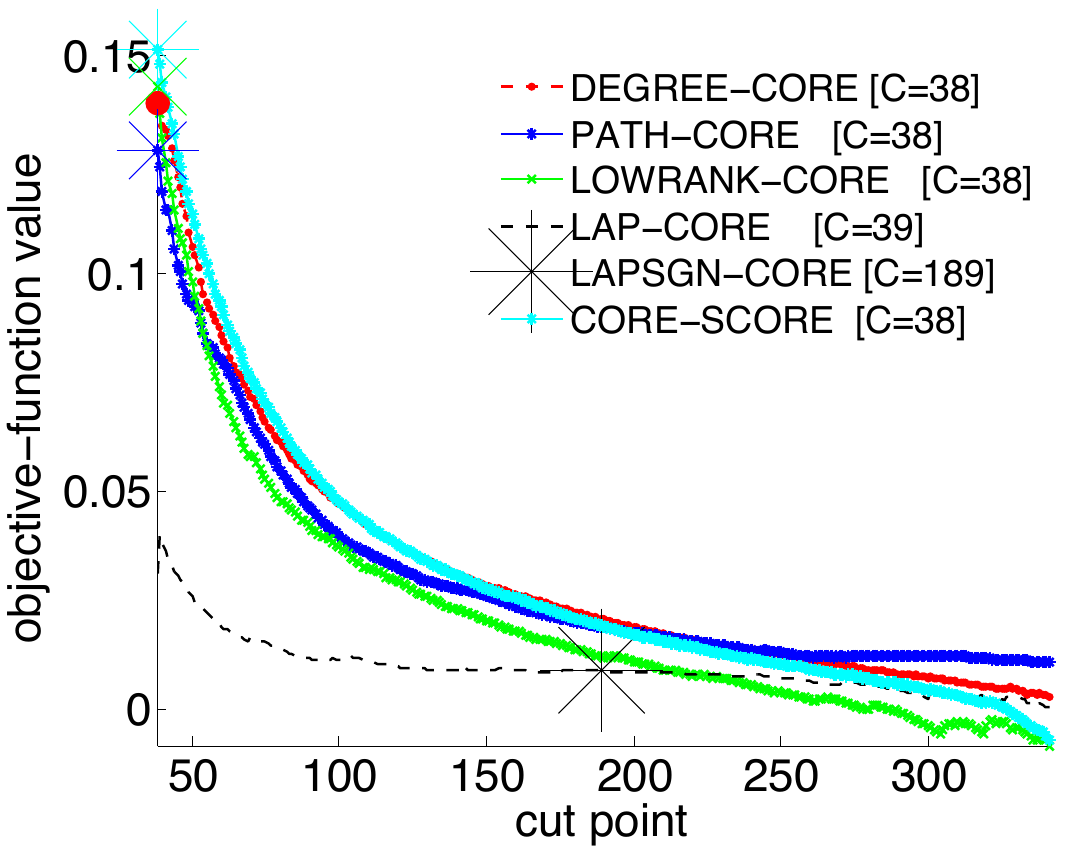}}
\subfigure[NNS2006; boundary: 20\%]{\includegraphics[width=0.35\textwidth]{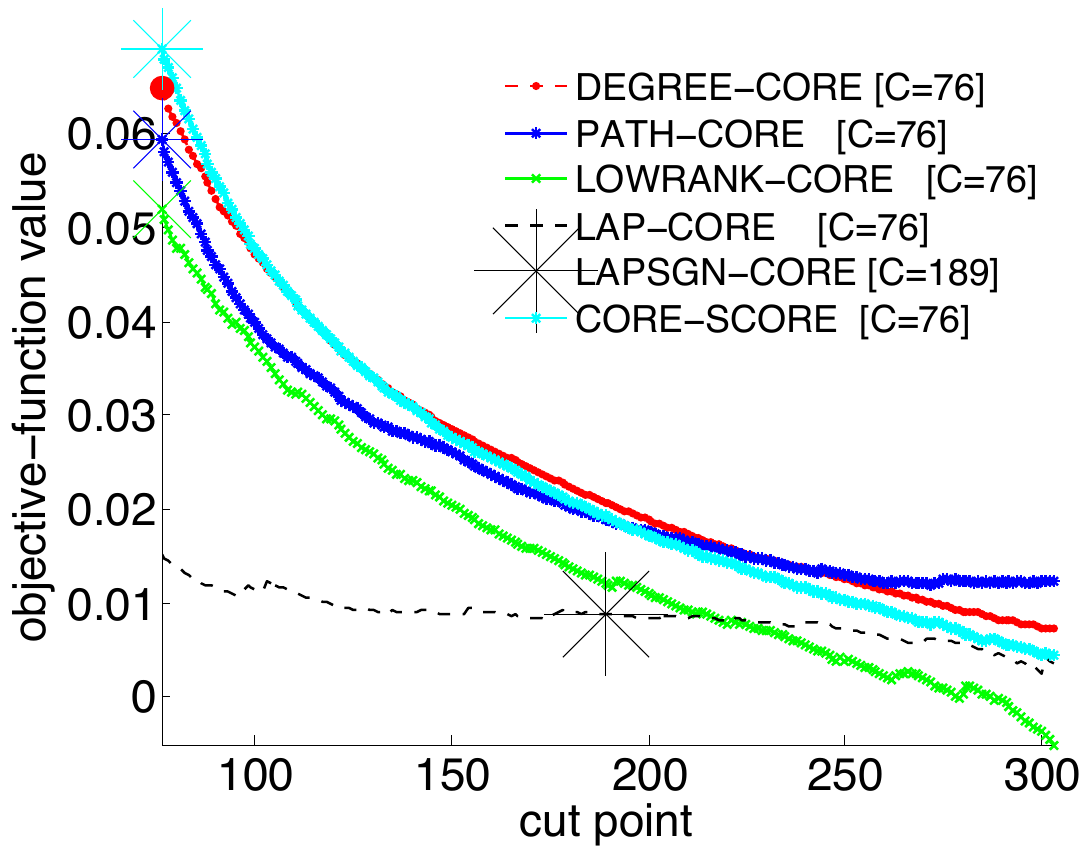}}
\subfigure[NNS2010; boundary: 10\%]{\includegraphics[width=0.35\textwidth]{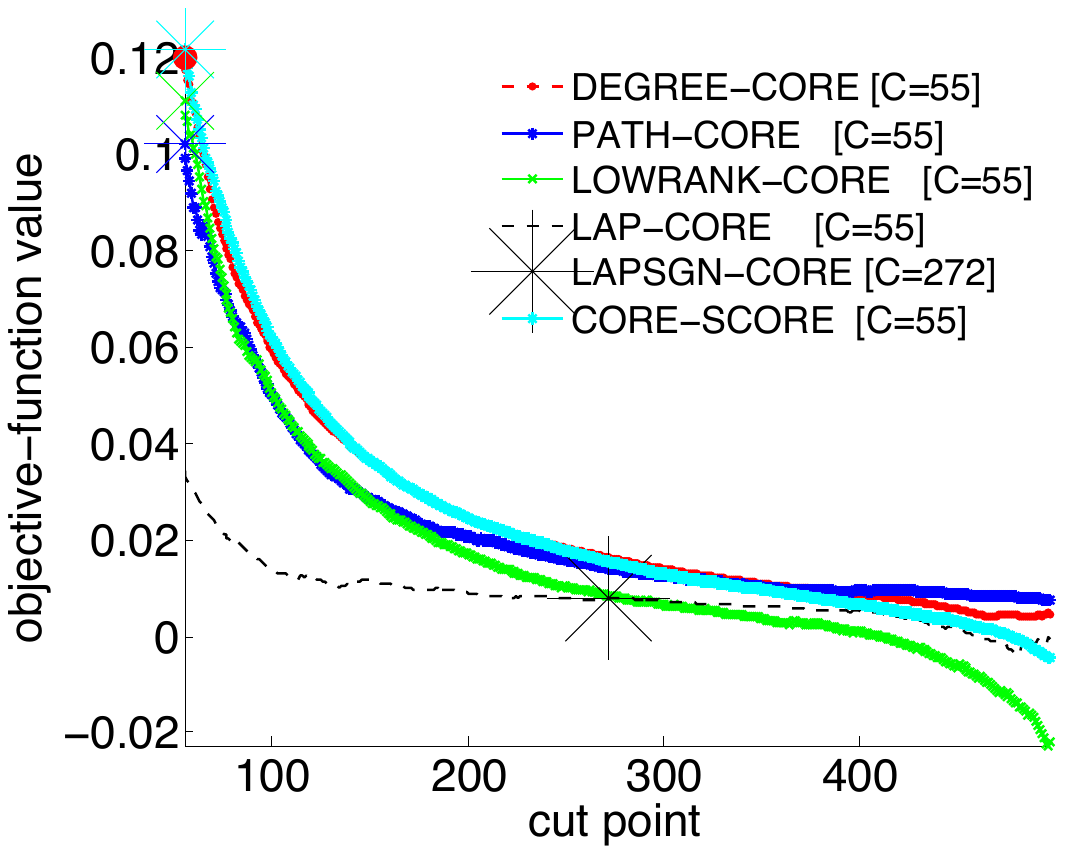}}
\subfigure[NNS2010; boundary: 20\%]{\includegraphics[width=0.35\textwidth]{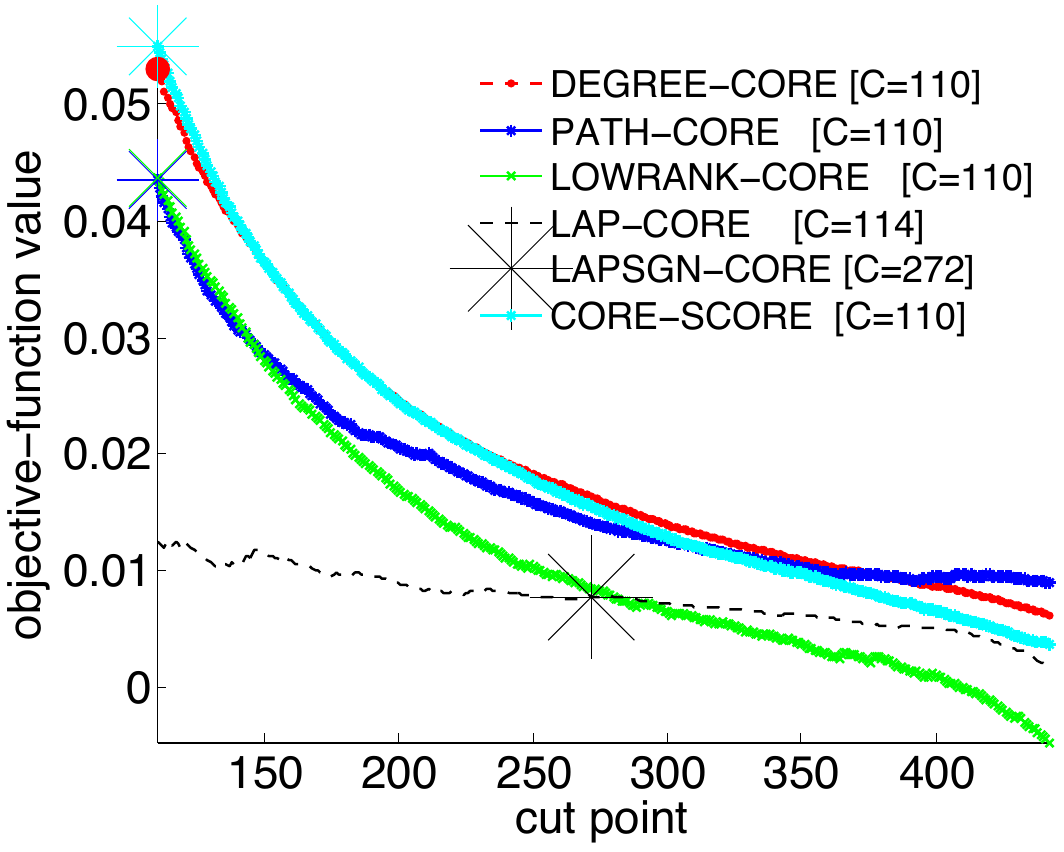}}
\end{center}
\caption{Comparison of the methods for detecting core--periphery structure for networks of network scientists in (a,b) a data set from 2006 \cite{New06} and (c,d) a data set from 2010 \cite{MapGenerator} for the objective function in Eq.~\eqref{eq:FindCut}. We assume a minimum size for the core and periphery sets of at least (a,c) 10\% of the vertices and (b,d) 20\% of the vertices. We mark the cut points that maximize the objective functions on the curves as a large asterisk for {\sc LapSgn-Core} and using other symbols whose colors match the colors of the corresponding curves for the other methods. The cut point refers to the number of core vertices, and the ${\tt C}$ values in the legends are the cut points that maximize the objective function~\eqref{eq:FindCut}. In other words, the optimal solution places ${\tt C}$ vertices in the core set. 
}
\label{fig:obj_function_nns}
\end{figure}

\begin{figure}[h!]
\begin{center}
\subfigure[Caltech; boundary: 10\%]{\includegraphics[width=0.35\textwidth]{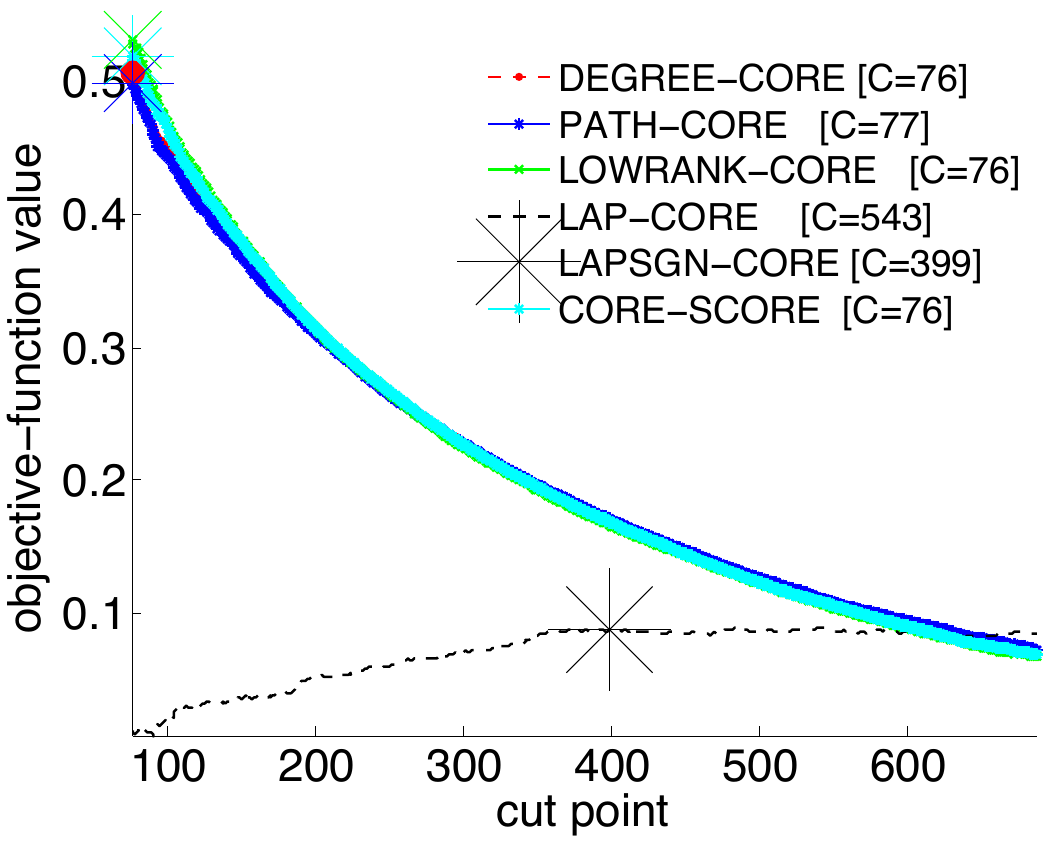}}
\subfigure[Caltech; boundary: 20\%]{\includegraphics[width=0.35\textwidth]{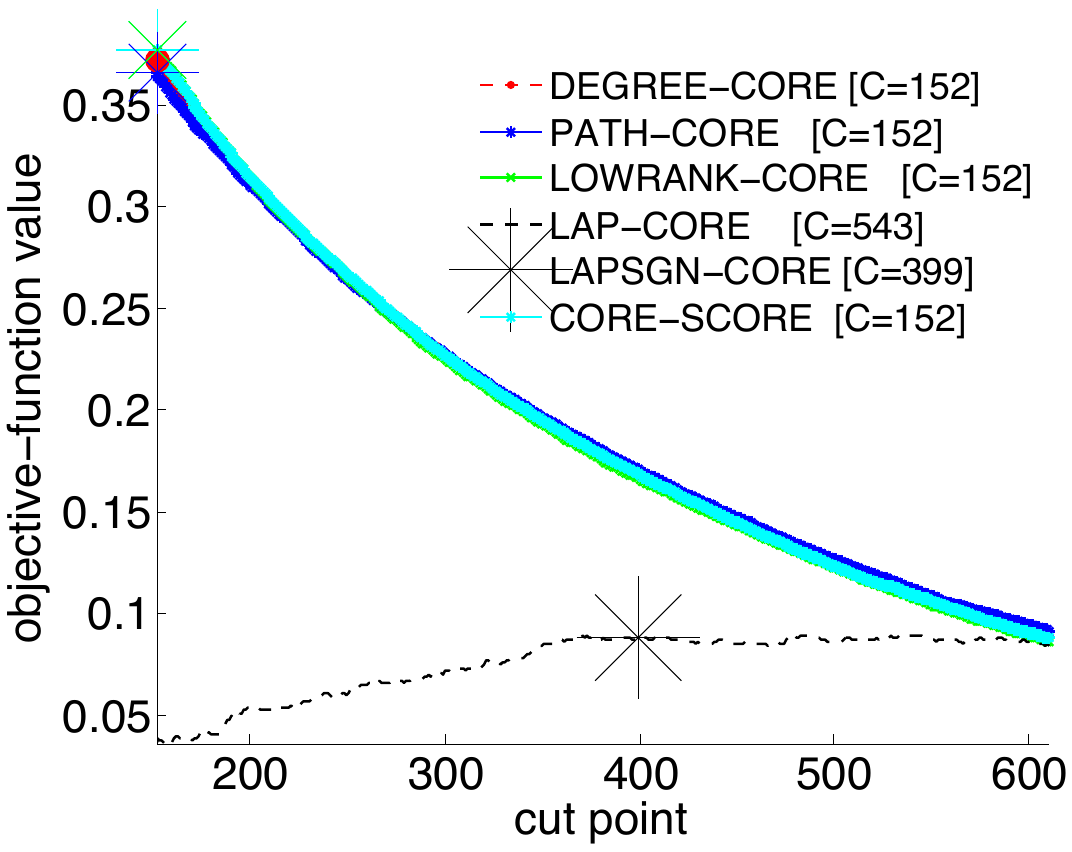}}
\subfigure[Reed; boundary: 10\%]{\includegraphics[width=0.35\textwidth]{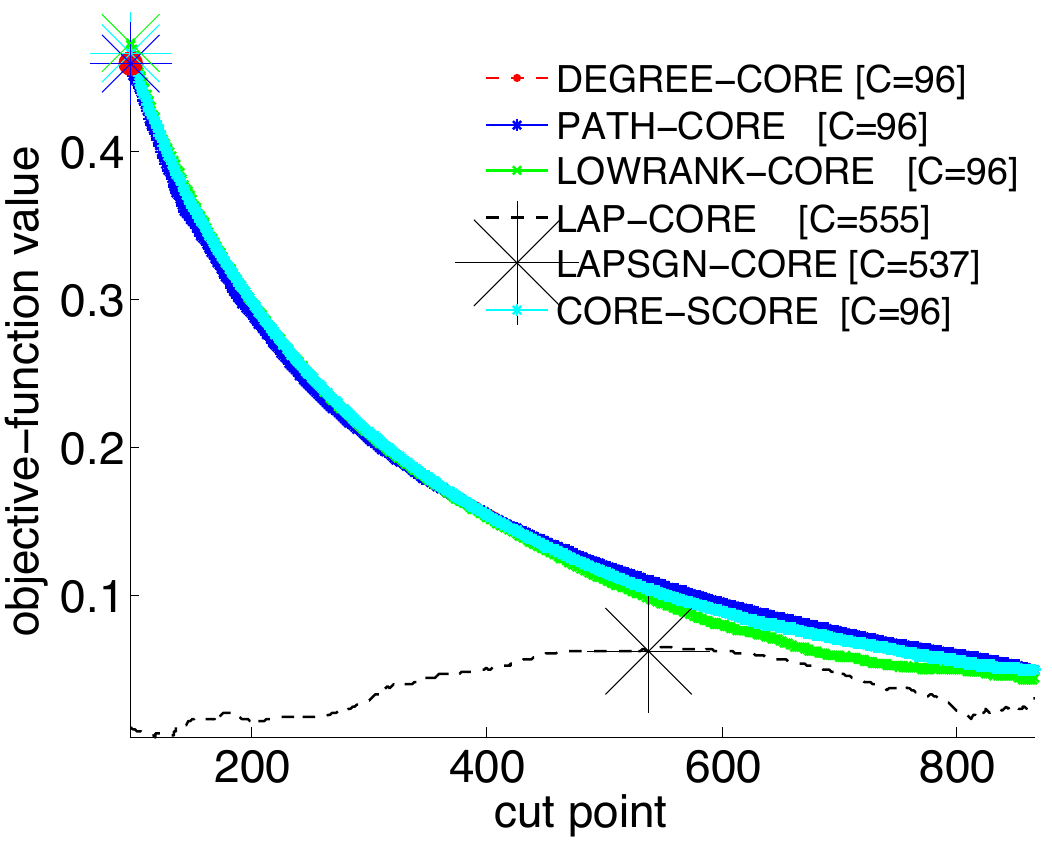}}
\subfigure[Reed; boundary: 20\%]{\includegraphics[width=0.35\textwidth]{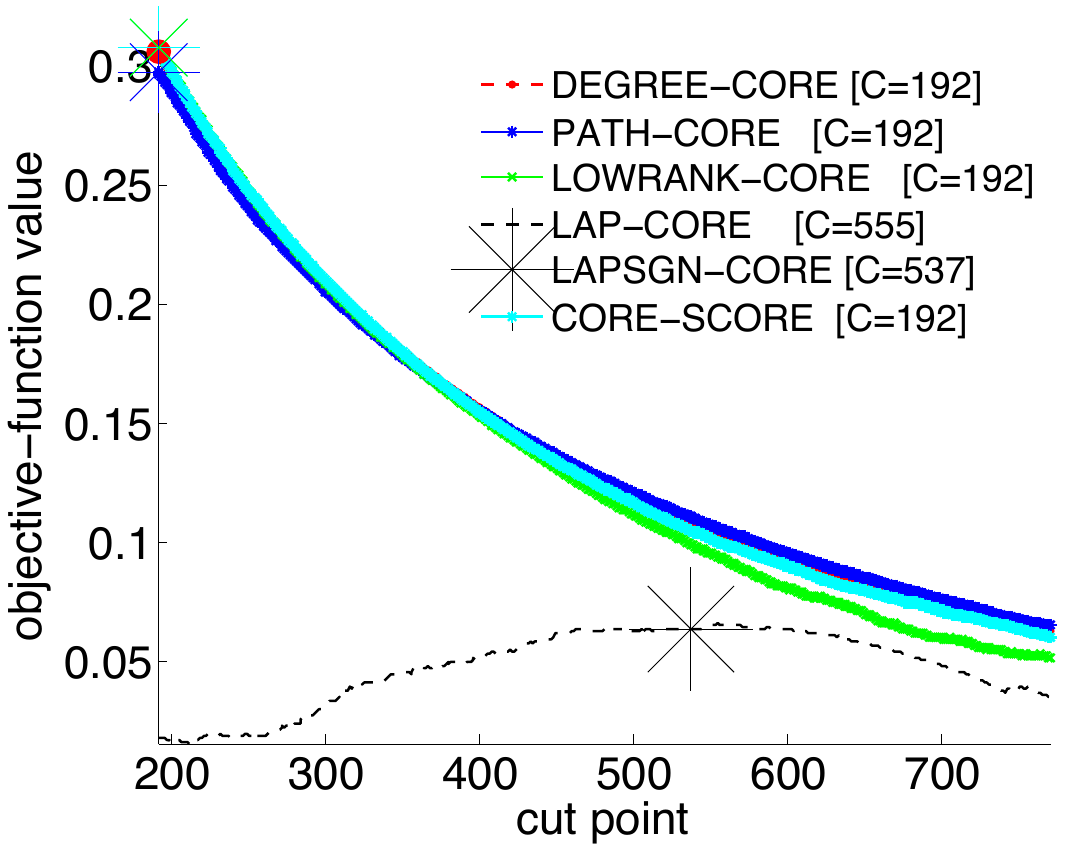}}
\end{center}
\caption{Comparison of the methods for detecting core--periphery structure for Facebook networks \cite{mason13,FacebookData} of (a,b) Caltech and (c,d) Reed College for the objective function in Eq.~\eqref{eq:FindCut}. We assume a minimum size for the core and periphery sets of at least (a,c) 10\% of the vertices and (b,d) 20\% of the vertices. We mark the cut points that maximize the objective functions on the curves as a large asterisk for {\sc LapSgn-Core} and using other symbols for the other methods. The cut point refers to the number of core vertices, and the ${\tt C}$ values in the legends are the cut point that maximizes the objective function~\eqref{eq:FindCut}. In other words, the optimal solution places ${\tt C}$ vertices in the core set.
}
\label{fig:obj_function_facebook}
\end{figure}

We consider four examples of social networks using all of the methods that we have discussed for detecting core--periphery structure. The first two graphs are publicly-available networks of network scientists from 2006 (NNS2006) \cite{New06} and 2010 (NNS2010) \cite{MapGenerator} with 379 and 552 vertices, respectively, in their largest connected components (LCCs). Reference \cite{puckmason} considered core--periphery structure in both of these networks. The vertices are scholars (predominantly from physics) who study network science, and the weight of each (undirected) edge represents the strength of a coauthorship relationship.  (See the original references for additional discussion of these networks and for more details about the weights, which are not necessarily defined in the same way in the two networks.) The other two networks are two universities (Caltech and Reed College) from the Facebook100 data set \cite{mason13,FacebookData}, which consists of a single-time snapshot from the online social network Facebook in autumn 2005 for each of 100 universities in the United States. Caltech has 762 vertices in its LCC, and Reed has 962 vertices in its LCC.

In Figs.~\ref{fig:obj_function_nns} (for the networks of network scientists) and \ref{fig:obj_function_facebook} (for the Facebook networks), we present the objective-function values \eqref{eq:FindCut} for each method for detecting core--periphery structure. 
In Table \ref{tab:correlation_real_networks} in Appendix 4, we compare the Pearson and Spearman correlations between the coreness values of the methods for these empirical networks. For these networks, we find that the values of {\sc Degree-Core}, {\sc Core-Score}, {\sc Path-Core}, and {\sc LowRank-Core} are usually strongly correlated to each other, whereas the {\sc Lap-Core} values are very different (and sometimes almost entirely uncorrelated). We find similar results when we use a similarity measure to compare partitions into a core set and periphery set from maximizing the objective function \eqref{eq:FindCut}. We compute a similarity between two measures using the expression
\begin{equation}
	S_\textrm{frac} = \frac{w_1}{w_1 + w_0}\,,
\label{def:S_frac}
\end{equation}
where $w_1$ is the number of vertices classified in the same way (i.e., either both as core vertices or both as peripheral vertices) in both measures, and $w_0$ is the number of vertices that are classified differently in the two measures. (Thus, $w_0 + w_1 = n$ is the total number of vertices.) One can also observe that the two networks of network scientists are similar to each other and that the two Facebook networks are similar to each other in terms of their correlations and core--periphery partitions. See Table \ref{tab:correlation_real_networks} in Appendix 4, and also see Figs.~\ref{fig:obj_function_nns} and \ref{fig:obj_function_facebook}. For instance, the core--periphery separation points of {\sc Lap-Core} and {\sc LapSgn-Core} yield much closer $S_\textrm{frac}$ values for Facebook networks than for the networks of network scientists.


\section{Summary and Discussion} \label{sec:future}

We introduced several new methods for detecting core--periphery structure in graphs, and we compared these methods to each other and with {\sc Core-Score} (an existing method) using both synthetic and empirical networks. 
Our approach based on transportation relies on computing shortest paths in a graph between a pair of adjacent vertices after temporarily removing the edge between the two vertices. Another approach, which is motivated by the existence of a low-rank structure in networks that exhibit core--periphery structure, relies on a low-rank approximation of the adjacency matrix of a graph. We also introduced two methods that rely on the bottom eigenvector of the random-walk Laplacian associated to a graph. Finally, we introduced an objective function that helps in the classification of vertices into core and peripheral vertices, and we showed how one can use this objective function after obtaining a vector of scores to measure coreness (using any of the above methods). Core--periphery structure is a common feature of real-world networks, and it is important to continue to develop methods to detect it and to compare the performance of such methods against each other on a wide variety of networks. We have introduced and explored the performance of several new methods in this paper. The different methods that we introduce are based on rather different ideas, and it is very important to explore core--periphery structure from a multitude of perspectives.

Given the common use of $k$-cores in the consideration of core parts of networks, it is also interesting to examine the assignment of vertices into core and periphery sets based only on vertex degrees. Although using vertex degree as a measure of centrality or likelihood of belonging to a core can often produce inaccurate results \cite{puckmason}, it can sometimes be true that a degree-based classification of vertices as core vertices or peripheral vertices should be successful for certain random-graph ensembles (and certain empirical networks) \cite{XZhang2014}. One can thus ask what properties such ensembles ought to have. More generally, it is also important to compare coreness scores with other centrality measures \cite{barucca2015,puckmason,corePerApp}. Another interesting question is whether one can use current methods for solving the group-synchronization problem (such as the eigenvector method and semidefinite programming \cite{sync,Goemans_Williamson,gb08}) for the detection of core--periphery structure in various families in networks.

An important future application is to examine core--periphery structure in temporal and multilayer networks \cite{holme12,mason12,kivela2014}. Community structure (see, e.g., \cite{mason12,jeub2015b}) has been studied in such contexts, and it should also be very insightful to also consider core--periphery structure in multilayer networks.  Another interesting direction is developing additional objective functions with which to classify vertices into core and periphery sets.  

Networks have many different types of mesoscale structures. In most research thus far, community structure has taken center stage. Other mesoscale structures, such as role assignment \cite{rossi2014} and core--periphery structure \cite{cp-review}, are also very important. These ideas are worthy of considerably more exploration.


\section*{Acknowledgements}

S.H.L. and M.A.P. were supported by a grant (EP/J001795/1) from the Engineering and Physical Sciences Research Council (EPSRC), and M.A.P. and P.R. were supported by the James S. McDonnell Foundation (\#220020177). S.H.L. was supported by Basic Science Research Program through the National Research Foundation of Korea (NRF) funded by the Ministry of Education (2013R1A1A2011947).
M.C. thanks Radek Erban and OCCAM at University of Oxford for their warm hospitality while hosting him for two months during Spring 2012 (during which this project was initiated) and is grateful to Amit Singer for his guidance and support via Award Number R01GM090200 from the NIGMS and Award Number FA9550-09-1-0551 from AFOSR. M.C. and P.R. also acknowledge support from AFOSR MURI grant FA9550-10-1-0569, ONR grant N000141210040, and ARO MURI grant W911NF-11-1-0332. Part of this work was undertaken while M.C. and P.R. were attending the Semester Program on Network Science and Graph Algorithms at the Institute for Computational and Experimental Research in Mathematics (ICERM) at Brown University.



\clearpage

\section*{Appendix 1: Algorithm for Computing Path-Core scores}
\label{sec:appendix_PathCore}

Let $G(V,E)$ be an unweighted graph without self-edges or multi-edges (i.e., it is a simple graph). Recall that we define the {\sc Path-Core} score (\ref{def:PathCoreCentrality}) of a vertex $i\in V$ as the sum over all adjacent vertex pairs in $G$ of the fraction of shortest nontrivial paths containing $i$ between each vertex pair in $V(G)\setminus  i$. By ``nontrivial,'' we mean that the direct edge between those adjacent vertices does not count as a path. 
Our algorithm has strong similarities to the algorithm presented in~\cite{brandes}, and we follow some of the notation introduced therein. Let $d_G(j,i)$ be the ``distance'' between vertices $j$ and $i$; we define this distance as the minimum length of any path that connects $j$ and $i$ in $G$. Let $\sigma_{st}(i)$ be the number of shortest paths between $s$ and $t$ that contain $i$. Define the set of \emph{predecessors} of a vertex $i$ on shortest paths from $s$ as 
\begin{equation*}
	P_s(i)=\{ j \in V : (j,i)  \in E\,, d_G(s,i)=d_G(s,j)+1 \}\,.
\end{equation*}

We use the following observation: if $i$ lies on a shortest path between $s$ and $t$, then 
\begin{equation*}
	\sigma_{st}(i)=\left( \sum_{k \in P_s(i)} \sigma_{si}(k) \right) \times \left( \sum_{l \in P_t(i)} \sigma_{it}(l) \right)\,.
\end{equation*}	
This will help us count the number of shortest paths on which a vertex lies without keeping track of the locations of these shortest paths. In the {\sc Path-Score} algorithm, $\sigma_s(i)$ is the number of paths between $s$ and $i$ of length $d_{G'}(s,i)$ if and only if $i$ lies on a shortest path between $s$ and $t$ (i.e., if $(s,t)$ is the edge that is currently removed), where $G'$ is the graph $G\setminus (s,t)$. The algorithm records the distance between $s$ and $i$ in $G'$ as $d_s(i)$. In Algorithm \ref{pathscorepuck}, we calculate {\sc Path-Core} scores for every vertex.

\begin{algorithm}
\caption{{\sc Path-Core}: Computes {\sc Path-Core} scores for all vertices of a graph $G$.}
\label{pathscorepuck}
  \begin{multicols}{2}
\begin{algorithmic}[1]
\footnotesize
\REQUIRE $G$
\ENSURE $C_P$
\item[]
\STATE $C_P(w) \leftarrow 0, \, w \in V;$
\item[]
\FOR{$ (s,t )\in E(G) $}
\item[]
\STATE $G' \leftarrow G \setminus ( s,t ) ; $\label{startit}
\STATE $\sigma_s(w), \sigma_t(w)   \leftarrow 0,  v \in V ;$ \STATE $\sigma_s(s), \sigma_t(t) \leftarrow 1;$
\STATE $d_s(w),d_t(w)\leftarrow -1, \, v \in V ;$
\STATE$ d_s(s),d_t(t) \leftarrow 0;$
\STATE $Q \leftarrow$ empty queue;
\item[]
\STATE enqueue $ s \rightarrow Q;$
\WHILE{$Q$ not empty}
\STATE dequeue $w \leftarrow Q;$
\FOR{each $u \in \Gamma_{G'} (w)$}
\IF{$d_s(u)<0$}
\STATE enqueue $ u \rightarrow Q;$
\STATE $d_s(u) \leftarrow d_s(w)+1;$
\ENDIF

\ENDFOR
\ENDWHILE
\item[]
\STATE enqueue $ t \rightarrow Q;$
\WHILE{$Q$ not empty}
\STATE dequeue $w \leftarrow Q;$
\FOR{each $u \in \Gamma_{G'} (w)$}
\IF{$d_t(u)<0$}
\STATE enqueue $ u \rightarrow Q;$
\STATE $d_t(u) \leftarrow d_t(w)+1;$
\ENDIF
\IF{ $d_s(u)<d_s(w)$}
\STATE $\sigma_t(u)=\sigma_t(u)+\sigma_t(w);$
\ENDIF
\ENDFOR
\ENDWHILE
\item[]
\STATE enqueue $ s \rightarrow Q;$
\WHILE{$Q$ not empty}
\STATE dequeue $w \leftarrow Q;$
\FOR{each $u \in \Gamma_{G'} (w)$}
\IF{$d_t(u)<d_t(w)$}
\STATE enqueue $ u \rightarrow Q;$
\STATE $\sigma_s(u)=\sigma_s(u)+\sigma_s(w);$
\ENDIF
\ENDFOR
\ENDWHILE
\item[]
\FOR{$w \in V \setminus (s,t)$}
\STATE $C_P(w)=C_P(w)+ \sigma_s(w) \cdot \sigma_t(w) / \sigma_s(t);$\label{endit}
\ENDFOR
\item[]
\ENDFOR
\end{algorithmic}  \end{multicols}
\end{algorithm}

\normalsize
\begin{lemma}
Algorithm \ref{pathscorepuck} outputs the {\sc Path-Core} scores for all vertices in an unweighted graph $G$.
\end{lemma}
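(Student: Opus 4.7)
The plan is to prove, edge-by-edge, that Algorithm \ref{pathscorepuck} correctly aggregates the contributions defining (\ref{def:PathCoreCentrality}). Fix an edge $(s,t) \in E(G)$ and set $G' = G \setminus (s,t)$. I need to show that, after executing lines \ref{startit}--\ref{endit} of the outer loop for this edge, the quantity $\sigma_s(w)\,\sigma_t(w)/\sigma_s(t)$ added to $C_P(w)$ in line \ref{endit} equals $\sigma_{st}(w)\vert_{G'}/\sigma_{st}\vert_{G'}$ for every $w \in V \setminus \{s,t\}$. Summing over $(s,t) \in E(G)$ then matches (\ref{def:PathCoreCentrality}) exactly, establishing the lemma.

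First I would verify that the BFS traversals correctly produce $d_s(v) = d_{G'}(s,v)$ and $d_t(v) = d_{G'}(t,v)$ for every vertex $v$ reachable in $G'$; this is the standard layer-by-layer BFS correctness argument for unweighted graphs. Along the way one must handle the degenerate case in which $(s,t)$ is a bridge: then $\sigma_{st}\vert_{G'} = 0$, and the algorithm is understood to skip the contribution (equivalently, the product $\sigma_s(w)\sigma_t(w)$ is also $0$ for all $w$ because $t$ is never reached from $s$ in $G'$).

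Next I would invoke the following combinatorial decomposition. For any $w \neq s,t$ lying on some shortest $s$-$t$ path in $G'$,
$$\sigma_{st}(w)\vert_{G'} \;=\; \sigma_{sw}\vert_{G'} \cdot \sigma_{wt}\vert_{G'},$$
since every shortest $s$-$t$ path through $w$ splits uniquely into a shortest $s$-$w$ prefix and a shortest $w$-$t$ suffix, and conversely any such concatenation is itself a shortest $s$-$t$ path because $d_{G'}(s,w) + d_{G'}(w,t) = d_{G'}(s,t)$. For $w$ lying on no shortest $s$-$t$ path, $\sigma_{st}(w)\vert_{G'} = 0$, so the claim reduces to showing that $\sigma_s(w)\sigma_t(w) = 0$ for such $w$.

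The technical heart is therefore proving that the two path-counting loops compute $\sigma_s(w) = \sigma_{sw}\vert_{G'}$ and $\sigma_t(w) = \sigma_{wt}\vert_{G'}$ exactly for vertices $w$ on some shortest $s$-$t$ path, and yield zero otherwise. This would be done by induction on BFS layers using the predecessor recursion $\sigma_{sw} = \sum_{u \in P_s(w)} \sigma_{su}$ recalled in the excerpt. The filtering conditions (comparing $d_s$-values in the $t$-BFS and $d_t$-values in the $s$-BFS) have the effect of restricting relaxations to predecessors that themselves lie on shortest $s$-$t$ paths, so that the accumulated counts are exactly the subpath counts inside the union of shortest $s$-$t$ paths. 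This is the step I expect to be the main obstacle: one must track that the BFS processes vertices in the correct layer order under the filter (and, in particular, repair what appears to be a typographical condition on line 26, which should read $d_t(u) < d_t(w)$). Once this invariant is established, $\sigma_s(w)\sigma_t(w)/\sigma_s(t)$ equals the target ratio for the fixed edge $(s,t)$, and summing over $E(G)$ completes the proof.
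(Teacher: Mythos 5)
Your proposal follows essentially the same route as the paper's proof: a per-edge analysis of one iteration, BFS correctness for the distances, the predecessor recursion for the path counts, and the product decomposition $\sigma_{st}(w) = \sigma_{sw}\cdot\sigma_{wt}$ for vertices on shortest $s$--$t$ paths (with zero contributions otherwise). You are in fact somewhat more careful than the paper in flagging the bridge/unreachable case and the typographical condition in the third BFS, but the underlying argument is the same.
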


\begin{proof}
It suffices to show for one edge $(s,t) \in E(G)$ and one iteration (i.e., lines \ref{startit}--\ref{endit}) that the algorithm counts, for each vertex $w \in V(G) \setminus (s,t)$, the number of shortest paths between $s$ and $t$ that contain $w$. This number $\sigma_{s,t}(w)$ is given by the algorithm as $\sigma_s(w) \cdot \sigma_t(w)$. In this case, $\sigma_s(w)$ is the number of paths between $s$ and $w$ of length $d_{G'}(s,w)$ (where the graph $G' = G \setminus (s,t)$) if and only if $w$ lies on a shortest path between $s$ and $t$. 

Algorithm \ref{pathscorepuck} performs three breadth-first-searches (BFSs). In the first BFS, it searches from vertex $s$ and records the distances from $s$ to all other vertices. It then performs a BFS starting from vertex $t$. During this second BFS, it records the distances to all vertices from $t$, and it also records $\sigma_t(w)$ for vertices that lie on a shortest path between $s$ and $t$. The {\sc Path-Score} algorithm knows that $u$ lies on a shortest path between $s$ and $t$ if it has a distance from $s$ that is less than the distance from $s$ of its predecessor in the BFS from $t$. In other words, if $d_t(w)<d_t(u)$, then an edge $(w,u)$ lies on a shortest path between $s$ and $t$ if and only if $d_s(u)<d_s(w)$. Additionally,
\begin{equation*} 
	\sigma_t(u)=\sum_{w \in P_t(u) } \sigma_t(w)\,.
\end{equation*}	

In the second BFS, Algorithm \ref{pathscorepuck} finds a vertex $u$ exactly once for each of its predecessors $w\in P_t(u)$, and it adds $\sigma_t(w)$ to $\sigma_t(u)$. Therefore, in the second BFS, for each vertex $u \in V(G) \setminus (s,t)$, {\sc Path-Score} records $\sigma_t(v)$ as the number of shortest paths from $t$ to $u$ if $u$ is on a shortest path between $s$ and $t$. If it is not, then $\sigma_t(u)$ is still $0$. 

By the same arguments, in the third BFS, for each vertex $u \in V(G) \setminus (s,t)$, {\sc Path-Core} records $\sigma_s(u)$ as the number of shortest paths from $s$ to $u$ if $u$ is on a shortest path between $s$ and $t$. If it is not, then $\sigma_s(u)$ is still $0$.

It should now be clear that for all $w \in V(G) \setminus (s,t)$, it follows that $\sigma_s(w) \cdot \sigma_t(w)$ yields $\sigma_{s,t}(w)$.
\end{proof}

\vspace{.2 in}

\begin{lemma}
Algorithm \ref{pathscorepuck} finishes in $\mathcal{O}(m^2)$ time.
\end{lemma}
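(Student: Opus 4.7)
The plan is to perform a straightforward per-iteration analysis of the outer for-loop in Algorithm \ref{pathscorepuck}. First I would observe that the outer loop runs exactly $m$ times, once for each edge $(s,t) \in E(G)$. The bulk of the argument then reduces to showing that the body of one iteration (lines \ref{startit}--\ref{endit}) executes in $\mathcal{O}(n+m)$ time, since summing this over $m$ iterations yields $\mathcal{O}(m(n+m)) = \mathcal{O}(m^2)$, where the last step uses the implicit assumption that $m \geq n-1$ (which is the regime of interest, as isolated vertices contribute nothing to any {\sc Path-Core} score).

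For a single iteration I would break the work into three parts. First, constructing $G' = G \setminus (s,t)$ and initializing the arrays $\sigma_s,\sigma_t,d_s,d_t$ costs $\mathcal{O}(n+m)$. Second, the three BFS traversals rooted at $s$, $t$, and $s$ are standard graph traversals in which each vertex $w$ is dequeued at most once and for which the inner for-loop iterates over the neighborhood $\Gamma_{G'}(w)$. Summing over vertices gives $\sum_{w\in V}|\Gamma_{G'}(w)| = 2|E(G')| \leq 2m$ edge inspections per BFS, so each BFS runs in $\mathcal{O}(n+m)$ provided the work done per edge inspection is $\mathcal{O}(1)$. This latter fact follows from the specifications of the subroutines {\sc Distance} (Algorithm \ref{distancealg}) and {\sc Path-Count} (Algorithm \ref{pathcount}), each of which performs only constant-time comparisons and additive updates to the distance and path-count entries. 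Third, the final for-loop updating $C_P(w)$ for $w \in V \setminus (s,t)$ is clearly $\mathcal{O}(n)$.

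The main obstacle I anticipate is cleanly justifying the ``enqueued at most once'' property for the second and third BFSs, whose enqueuing patterns are slightly non-standard — in particular, the third BFS guards the enqueue of $u$ by the condition $d_t(w) < d_t(u)$, which is meant to restrict traversal to edges that point outward along the BFS-layer structure rooted at $t$. I would argue that this condition is satisfied for each vertex $u$ by at most one predecessor that actually triggers an enqueue, so no vertex enters the queue more than once. Once this point is verified from the structure of {\sc Distance} and {\sc Path-Count}, the three BFSs each cost $\mathcal{O}(n+m)$, the entire iteration body costs $\mathcal{O}(n+m) = \mathcal{O}(m)$, and multiplication by the $m$ iterations of the outer loop yields the desired $\mathcal{O}(m^2)$ bound.
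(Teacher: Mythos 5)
Your proposal is correct and follows essentially the same route as the paper's proof: the outer loop runs once per edge, each iteration consists of three BFS traversals in which every edge of $G'$ is examined a constant number of times with $\mathcal{O}(1)$-time calls to {\sc Distance} and {\sc Path-Count}, giving $\mathcal{O}(m)$ per iteration and $\mathcal{O}(m^2)$ overall. Your additional attention to the initialization cost, the implicit assumption $m \geq n-1$, and the ``enqueued at most once'' property of the non-standard second and third traversals makes the argument slightly more careful than the paper's, but it is the same analysis.
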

\begin{proof}
Algorithm \ref{pathscorepuck} iterates (i.e., it runs lines \ref{startit}--\ref{endit}) once for each edge. In one iteration, it performs three BFSs. During a BFS, every edge of $G'$ is considered exactly once; this is an $\mathcal{O}(1)$ time procedure. Therefore, every iteration of {\sc Path-Core} runs in $\mathcal{O}(m)$ time, and the temporal complexity of {\sc Path-Core} is $\mathcal{O}(m^2)$.
\end{proof}

\vspace{.2 in}

For weighted graphs, one can implement an algorithm that is very similar to Algorithm \ref{pathscorepuck}. This algorithm uses Dijkstra's algorithm for shortest paths instead of BFS, and it runs in $\mathcal{O}(m+n \log n)$ time instead of $\mathcal{O}(m)$, so the total temporal complexity becomes $\mathcal{O}(m^2 + mn \log n)$.


\section*{Appendix 2: Symmetry in the Random-Walk Laplacian}

We now show that a symmetry relation like $\eqref{KelMans}$ exists for the random-walk Laplacian associated to an unweighted graph only under certain conditions. Additionally, the most obvious version of such a statement does not hold. To see this, let ${\bf x}$ be an eigenvector of $\bar{L}$ (which is nontrivial, so ${\bf x} \perp {\bf \mb{1}}_{n}$). We use the notation $\bar{D} =\mathrm{diag}(n-1-d_i)$, where $d_i$ denotes the degree of vertex $i$, and calculate
\begin{align}\label{above}
	\bar{L}	&=  \bar{D}^{-1}  \bar{A} \notag \\
			&=  \bar{D}^{-1}   ( J_n - A - I_n)  \notag \\
			&= \bar{D}^{-1}   ( J_n - I_n)  - \bar{D} ^{-1 }  A \notag \\
			&= \bar{D}^{-1}   ( J_n - I_n)  - \bar{D} ^{-1 }  D D^{-1} A \notag \\
			&= \bar{D}^{-1}   ( J_n - I_n)  - \bar{D} ^{-1 }  D  L \,.
\end{align}
Because $\bar{D}^{-1 } = \mathrm{diag}\left(\frac{1}{n-1-d_i}\right)$  and $D = \mathrm{diag}(d_i)$ are diagonal matrices, it follows that $\bar{D}^{-1 } D = \mathrm{diag}\left(\frac{d_i}{n-1-d_i}\right) $.
Given an eigenvector ${\bf x}$ of $\bar{L}$, we obtain $\bar{L} {\bf x} = \bar{\lambda} {\bf x}$ for some eigenvalue $\bar{\lambda}$. Because $\bar{L}$ is a row-stochastic matrix, it has the trivial eigenvalue $\bar{\lambda}_1=1$ with associated eigenvector $\bar{{\bf v}}_{1} ={\bf \mb{1}}_{n}$. We apply both sides of Eq.~\eqref{above} to the eigenvector ${\bf x}$ and note that $J_n  {\bf x} ={\bf 0}$ because $ {\bf x} \perp {\bf \bar{v}}_{1} = {\bf \mb{1}}_{n}$. We thereby obtain
\begin{align}\label{above2}
	\bar{\lambda} {\bf x}      &=    \bar{L} {\bf x}	\notag \\
					 &=   \bar{D}^{-1}   ( J_n - I_n) {\bf x}  - \bar{D} ^{-1 }  D  L {\bf x} \notag \\
					 &= - \bar{D}^{-1}   I_n  {\bf x}  - \bar{D} ^{-1 }  D  L {\bf x}\,. 
\end{align}
Multiplying both sides of Eq.~\eqref{above2} by $D^{-1 }  \bar{D}$ on the left yields
\begin{equation}
	D^{-1 }  \bar{D}     \bar{\lambda} {\bf x}   = -  D^{-1 }  \bar{D}     \bar{D}^{-1}    {\bf x}  - L {\bf x}\,,
\end{equation}
so
\begin{align*}
	 L {\bf x}    &   =  - D^{-1 }  \bar{D}     \bar{\lambda} {\bf x}    -  D^{-1 }   {\bf x}  \\
	         &   =  - (   \bar{D}  \bar{\lambda}    +  I_n) D^{-1 }  {\bf x} \\
	         &   =  -   \mathrm{diag}\left( \frac{ \bar{\lambda} (n-1-d_i) +1}{d_i} \right)  {\bf x} \,. \\
\end{align*}
Therefore, ${\bf x}$ is not an eigenvector of $L$ unless
\begin{equation*}
	\frac{ \bar{\lambda} (n-1-d_i) +1}{d_i} = \theta  \quad \text{for all} \quad i\in \{1,\dots,n\}
\end{equation*}
for some constant $\theta$. In other words, $d_i = \frac{\bar{\lambda} (n-1) }{ \theta + \bar{\lambda}}$ for  $i \in \{1,\dots,n\}$, so the graph is $d$-regular, with 
$ d_1=d_2=\dots = d_n = d = \frac{\bar{\lambda} (n-1) }{ \theta + \bar{\lambda}}$. Therefore,
\begin{equation*}
	\theta = \frac{ \bar{\lambda} (n-1-d) +1}{d}\,,
\end{equation*}	
so the eigenvector ${\bf x}$ of $\bar{L}$ is also an eigenvector of $L$ (with a corresponding eigenvalue of $-\theta$).


\section*{Appendix 3: Planted High-Degree Vertices}

To illustrate the sensitivity of the {\sc Degree-Core} method to the presence of high-degree peripheral vertices, we conduct a numerical experiment in which we intentionally plant high-degree vertices in the periphery set.  This helps illustrate that it is dangerous to use methods like $k$-core decomposition (which has very strong demands that vertices have a high degree to be construed as core vertices) to study core--periphery structure \cite{cp-review}. In Fig.~\ref{fig:exPlantedDeg}, we consider a graph from the ensemble $G(p_{cc},p_{cp},p_{pp},n_c,n_p)$ with $n=100$ vertices, edge probabilities $(p_{cc}, p_{cp}, p_{pp}) = (0.4,0.4,0.2)$, $n_c$ core vertices, $n_p$ peripheral vertices (with $n=n_c + n_p$), and planted high-degree vertices in the periphery set.  To perturb the graph $G$ from the above ensemble to plant high-degree peripheral vertices, we proceed as follows. First, we select each peripheral vertex with independent probability $0.1$. Second, we connect each such vertex to $15$ non-neighboring peripheral vertices that we choose uniformly at random. In the left panel of Fig.~\ref{fig:exPlantedDeg}, we show an example with a boundary size of $10\%$, so we are assuming that the core and periphery sets each have at least $0.1n = 10$ vertices. We the search for a cut point in the interval $[10,90]$. In the right panel, we consider a larger boundary size and assume that the core and the periphery sets each have at least $25$ vertices. We now search for an optimal cut in the interval $[25,75]$. In the two planted-degree scenarios for which the size of the core set is unknown, all methods yield many misclassified vertices, although the {\sc LapSgn-Core} method has the lowest number (28) of misclassifications in both cases.

\begin{figure}[h!]
\begin{center}
\includegraphics[width=0.38\textwidth]{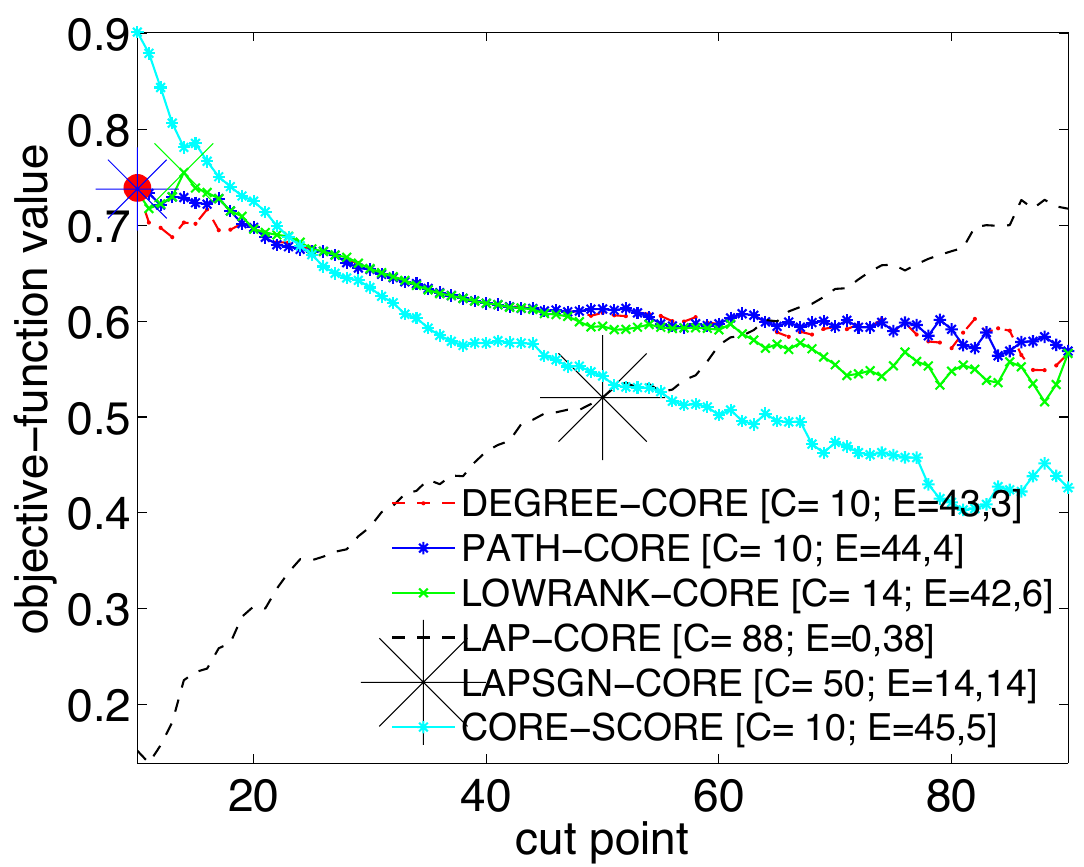}
\includegraphics[width=0.38\textwidth]{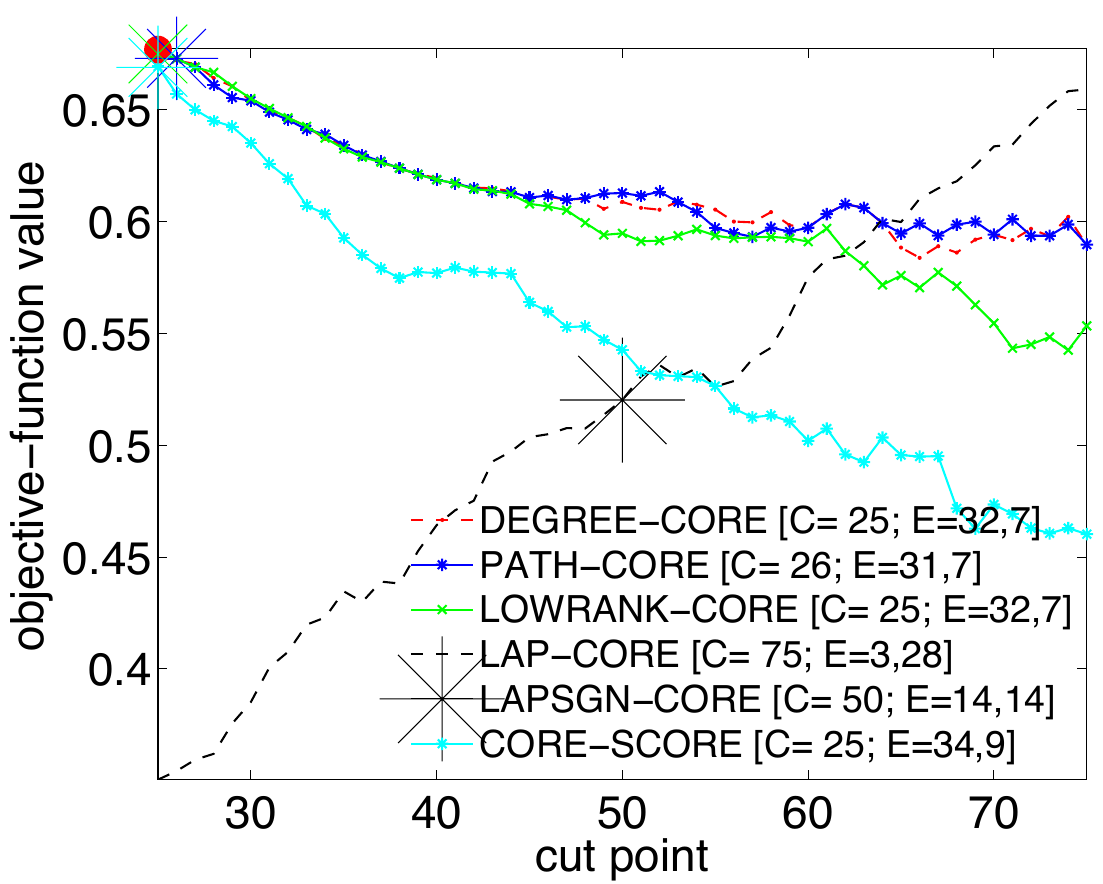}
\end{center}
\caption{Comparison of the methods for one realization of the graph ensemble $G(p_{cc},p_{cp},p_{pp},n, n_c, n_p)$ with $n=100$ vertices, $n_c=50$ core vertices, $n_p=50$ peripheral vertices, edge probabilities $(p_{cc}, p_{cp}, p_{pp}) = (0.4,0.4,0.2)$, and planted high-degree vertices for the objective function in Eq.~\eqref{eq:FindCut}. The cut point refers to the number of core vertices. In the legends, ${\tt C}$ denotes the size of the core set that maximizes the objective function Eq.~\eqref{eq:FindCut}, and ${\bf E} = (y_1,y_2)$ denotes the corresponding $2$-vector of errors. The first component of ${\bf E}$ indicates the number of core vertices that we label as peripheral vertices, and the second indicates the number of peripheral vertices that we label as core vertices. In this graph, each peripheral vertex has a probability of $0.1$ of becoming adjacent to $15$ additional non-neighboring peripheral vertices that we select uniformly at random. We mark the cut points that maximize the objective functions on the curves as a large asterisk for {\sc LapSgn-Core} and using other symbols whose colors match the colors of the corresponding curves for the other methods. The cut point refers to the number of core vertices.
}
\label{fig:exPlantedDeg}
\end{figure}


\section*{Appendix 4: Correlations Between Coreness Values from Different Methods}

In Table \ref{tab:correlation_real_networks}, we consider several empirical networks and examine the numerical values of the Pearson and Spearman correlations between the coreness values that we obtain for the core--periphery detection methods that we examine.

\begin{table}  
\begin{minipage}[b]{0.99\linewidth}
\centering
\begin{tabular}{c|cccccccccc}
\hline
\emph{NNS2006} & $q,\mathcal{C}$ & $q,\mathcal{P}$ & $q,\mathcal{R}$ & $q,\mathcal{L}$ & $\mathcal{C},\mathcal{P}$ & $\mathcal{C},\mathcal{R}$ & $\mathcal{C},\mathcal{L}$ & $\mathcal{P},\mathcal{R}$ & $\mathcal{P},\mathcal{L}$ & $\mathcal{R},\mathcal{L}$ \\
Pearson & $0.79^\ast$ & $0.89^\ast$ & $0.72^\ast$ & $0.02$ & $0.64^\ast$ & $0.56^\ast$ & $0.03$ & $0.62^\ast$ & $0.03$ & $-0.01$ \\
Spearman & $0.79^\ast$ & $0.62^\ast$ & $0.43^\ast$ & $0.04$ & $0.37^\ast$ & $0.65^\ast$ & $-0.05$ & $0.14^\ast$ & $-0.01$ & $0.01$ \\
$S_\textrm{frac}$ & $0.93$ & $0.90$ & $0.79$ & $0.69$ & $0.88$ & $0.82$ & $0.69$ & $0.78$ & $0.69$ & $0.67$ \\
($z$-score) & $(19.5^\dagger)$ & $(17.4^\dagger)$ & $(8.2^\dagger)$ & $(1.1)$ & $(15.7^\dagger)$ & $(10.2^\dagger)$ & $(1.1)$ & $(7.5^\dagger)$ & $(0.7)$ & $(-0.9)$ \\
\cline{2-11}
 & $q,\mathcal{LS}$ & $\mathcal{C},\mathcal{LS}$ & $\mathcal{P},\mathcal{LS}$ & $\mathcal{R},\mathcal{LS}$ & $\mathcal{L},\mathcal{LS}$ & & & & & \\
$S_\textrm{frac}$ & $0.50$ & $0.47$ & $0.47$ & $0.44$ & $0.70$ & & & & & \\
($z$-score) & $(0.1)$ & $(-1.5)$ & $(-1.9)$ & $(-3.4^\dagger)$ & $(11.9^\dagger)$ & & & & & \\
\hline
\emph{NNS2010} & $q,\mathcal{C}$ & $q,\mathcal{P}$ & $q,\mathcal{R}$ & $q,\mathcal{L}$ & $\mathcal{C},\mathcal{P}$ & $\mathcal{C},\mathcal{R}$ & $\mathcal{C},\mathcal{L}$ & $\mathcal{P},\mathcal{R}$ & $\mathcal{P},\mathcal{L}$ & $\mathcal{R},\mathcal{L}$ \\
Pearson & $0.78^\ast$ & $0.84^\ast$ & $0.71^\ast$ & $0.01$ & $0.62^\ast$ & $0.46^\ast$ & $0.01$ & $0.56^\ast$ & $0.02$ & $0.01$ \\
Spearman & $0.84^\ast$ & $0.56^\ast$ & $0.39^\ast$ & $0.10$ & $0.38^\ast$ & $0.56^\ast$ & $0.04$ & $0.17^\ast$ & $0.08$ & $0.03$ \\
$S_\textrm{frac}$ & $0.96$ & $0.88$ & $0.80$ & $0.71$ & $0.87$ & $0.82$ & $0.72$ & $0.76$ & $0.70$ & $0.75$ \\
($z$-score) & $(29.1^\dagger)$ & $(20.2^\dagger)$ & $(12.4^\dagger)$ & $(3.7^\dagger)$ & $(19.8^\dagger)$ & $(14.0^\dagger)$ & $(4.1^\dagger)$ & $(8.7^\dagger)$ & $(1.9^\dagger)$ & $(7.8^\dagger)$ \\
\cline{2-11}
 & $q,\mathcal{LS}$ & $\mathcal{C},\mathcal{LS}$ & $\mathcal{P},\mathcal{LS}$ & $\mathcal{R},\mathcal{LS}$ & $\mathcal{L},\mathcal{LS}$ & & & & & \\
$S_\textrm{frac}$ & $0.54$ & $0.52$ & $0.50$ & $0.51$ & $0.71$ & & & & & \\
($z$-score) & $(2.7^\dagger)$ & $(1.5)$ & $(-0.1)$ & $(0.9)$ & $(16.9^\dagger)$ & & & & & \\
\hline
\emph{FB-Caltech} & $q,\mathcal{C}$ & $q,\mathcal{P}$ & $q,\mathcal{R}$ & $q,\mathcal{L}$ & $\mathcal{C},\mathcal{P}$ & $\mathcal{C},\mathcal{R}$ & $\mathcal{C},\mathcal{L}$ & $\mathcal{P},\mathcal{R}$ & $\mathcal{P},\mathcal{L}$ & $\mathcal{R},\mathcal{L}$ \\
Pearson & $0.96^\ast$ & $0.97^\ast$ & $0.98^\ast$ & $0.02$ & $0.86^\ast$ & $0.97^\ast$ & $0.01$ & $0.93^\ast$ & $0.01$ & $0.02$ \\
Spearman & $1.00^\ast$ & $0.99^\ast$ & $0.99^\ast$ & $0.09$ & $0.98^\ast$ & $1.00^\ast$ & $0.08$ & $0.97^\ast$ & $0.09^\ast$ & $0.07$ \\
$S_\textrm{frac}$ & $0.98$ & $0.97$ & $0.97$ & $0.43$ & $0.97$ & $0.99$ & $0.42$ & $0.96$ & $0.42$ & $0.42$ \\
($z$-score) & $(38.4^\dagger)$ & $(36.9^\dagger)$ & $(36.7^\dagger)$ & $(5.7^\dagger)$ & $(35.5^\dagger)$ & $(38.4^\dagger)$ & $(5.0^\dagger)$ & $(34.4^\dagger)$ & $(5.5^\dagger)$ & $(4.7^\dagger)$ \\
\cline{2-11}
 & $q,\mathcal{LS}$ & $\mathcal{C},\mathcal{LS}$ & $\mathcal{P},\mathcal{LS}$ & $\mathcal{R},\mathcal{LS}$ & $\mathcal{L},\mathcal{LS}$ & & & & & \\
$S_\textrm{frac}$ & $0.53$ & $0.53$ & $0.53$ & $0.53$ & $0.81$ & & & & & \\
($z$-score) & $(4.7^\dagger)$ & $(4.2^\dagger)$ & $(4.4^\dagger)$ & $(4.1^\dagger)$ & $(25.7^\dagger)$ & & & & & \\
\hline
\emph{FB-Reed} & $q,\mathcal{C}$ & $q,\mathcal{P}$ & $q,\mathcal{R}$ & $q,\mathcal{L}$ & $\mathcal{C},\mathcal{P}$ & $\mathcal{C},\mathcal{R}$ & $\mathcal{C},\mathcal{L}$ & $\mathcal{P},\mathcal{R}$ & $\mathcal{P},\mathcal{L}$ & $\mathcal{R},\mathcal{L}$ \\
Pearson & $0.92^\ast$ & $0.95^\ast$ & $0.98^\ast$ & $-0.01$ & $0.77^\ast$ & $0.94^\ast$ & $-0.02$ & $0.90^\ast$ & $-0.01$ & $-0.01$ \\
Spearman & $0.99^\ast$ & $0.98^\ast$ & $0.96^\ast$ & $0.07$ & $0.96^\ast$ & $0.98^\ast$ & $0.07$ & $0.90^\ast$ & $0.09^\ast$ & $0.05$ \\
$S_\textrm{frac}$ & $0.99$ & $0.98$ & $0.98$ & $0.51$ & $0.97$ & $0.99$ & $0.51$ & $0.96$ & $0.51$ & $0.51$ \\
($z$-score) & $(41.8^\dagger)$ & $(39.7^\dagger)$ & $(40.2^\dagger)$ & $(6.3^\dagger)$ & $(39.1^\dagger)$ & $(42.0^\dagger)$ & $(6.5^\dagger)$ & $(38.2^\dagger)$ & $(6.5^\dagger)$ & $(6.3^\dagger)$ \\
\cline{2-11}
 & $q,\mathcal{LS}$ & $\mathcal{C},\mathcal{LS}$ & $\mathcal{P},\mathcal{LS}$ & $\mathcal{R},\mathcal{LS}$ & $\mathcal{L},\mathcal{LS}$ & & & & & \\
$S_\textrm{frac}$ & $0.51$ & $0.52$ & $0.52$ & $0.52$ & $0.98$ & & & & & \\
($z$-score) & $(5.4^\dagger)$ & $(5.9^\dagger)$ & $(5.6^\dagger)$ & $(5.7^\dagger)$ & $(42.5^\dagger)$ & & & & & \\
\hline
\end{tabular}
\caption{Pearson and Spearman correlation coefficients for various coreness measures and the similarity measure $S_\textrm{frac}$ for core--periphery partitioning with a boundary of 20\% of the vertices (see the right panels in Figs.~\ref{fig:obj_function_nns} and \ref{fig:obj_function_facebook}) between the objective function in Eq.~\eqref{eq:FindCut} for several empirical networks. We use the notation $q$ for {\sc Degree-Core}, $\mathcal{C}$ for {\sc Core-Score}, $\mathcal{P}$ for {\sc Path-Core}, $\mathcal{R}$ for {\sc LowRank-Core}, $\mathcal{L}$ for {\sc Lap-Core}, and $\mathcal{LS}$ for {\sc LapSgn-Core}.  We use the designation $^\ast$ for correlation values that have a p-value smaller than $0.01$ and the designation $^\dagger$ for z-scores whose absolute value is larger than $2$. We construe these results as statistically significant. We calculate the z-scores by randomly permuting the vertex indices (with 10000 different applications of such a permutation for each calculation) as described in \cite{mason13}: $z = (S_\textrm{frac} - \mu)/ (\textrm{std})$ where $\mu$ and ``$\textrm{std}$'', respectively, are the means and standard deviations of the $S_\textrm{frac}$ values for random permutations.
}
\label{tab:correlation_real_networks}
\end{minipage}
\end{table}

\end{document}